\documentclass[journal,twoside]{IEEEtran}
\usepackage{amsmath,amsthm,amsfonts,amssymb,latexsym,extarrows,booktabs,array,arydshln}
\usepackage{fancyhdr,graphicx,tabularx}
\usepackage{multirow}
\usepackage{makecell}
\usepackage{amsfonts}
\usepackage{graphicx}
\usepackage{subfigure}
\usepackage {cite}
\usepackage {url}
\usepackage {stfloats}
\usepackage {mathrsfs}
\usepackage{algorithm}
\usepackage{algorithmic}
\usepackage{xcolor}
\usepackage{longtable}
\usepackage[OT2,T1]{fontenc}

\IEEEoverridecommandlockouts
\interdisplaylinepenalty = 2500

\newtheorem{definition}{Definition}
\newtheorem{proposition}{Proposition}

\newtheorem{assumption}{Assumption}
\newtheorem{problem}{Problem}
\newtheorem{theorem}{Theorem}
\newtheorem{remark}{Remark}
\newtheorem{fact}{Fact}
\newtheorem{question}{Question}
\begin{document}
 \date{}
\title{Independence-Checking Coding for OFDM  Channel Training  Authentication:  Protocol Design, Security, Stability, and  Tradeoff Analysis}
\author{Dongyang~Xu,~\IEEEmembership{Student Member,~IEEE,}
        Pinyi~Ren,~\IEEEmembership{Member,~IEEE,}
        and James~A.~Ritcey,~\IEEEmembership{Fellow,~IEEE}
%\thanks{Manuscript received 04 Mar. 2018; revised 06 May 2018; accepted 09 June 2018.
%The research work reported in this paper was  in part supported by the National Nature Science Foundation of China under Grant No. 61431011,  Science and Technology Planning Project of Guangdong Province under Grant No. 2017A010101005, and the Key Research and Development Program of Shaanxi Province under Grant No. 2017ZDXM-GY-012. The associate editor coordinating the review of this paper and approving it for publication was Prof. Lifeng Lai. (\emph{Corresponding author: Pinyi Ren})
%}
%\thanks{
%\IEEEcompsocthanksitem D. Xu is with the Department of Information and Communications Engineering, Xi'an Jiaotong University, Xi'an, Shaanxi 710049 China.  D. Xu is also with  Shaanxi Smart Networks and Ubiquitous Access Research Center. (e-mail: xudongyang@stu.xjtu.edu.cn)
%\IEEEcompsocthanksitem P. Ren  is with the Department of Information and Communications Engineering, Xi'an Jiaotong University, Xi'an, Shaanxi 710049 China.  P. Ren is also with  Guangdong Xi'an Jiaotong University Academy, Foshan, Guangdong 528300, China, and  also with  Shaanxi Smart Networks and Ubiquitous Access Research Center.  (e-mail: pyren@mail.xjtu.edu.cn)
%\IEEEcompsocthanksitem J. A. Ritcey are with the Department of Electrical Engineering, University of Washington, Seattle, WA 98195 USA (e-mail: jar7@uw.edu).
%}

}
%\markboth{ IEEE Transactions on Information Forensics and Security,~Vol.~**, No.~**, June~2018}%
%{Xu \MakeLowercase{\textit{et al.}}:ICC for OFDM  Channel Training  Authentication:  Protocol Design, Security, Stability, and  Tradeoff Analysis}
\maketitle
\vspace{-40pt}
\begin{abstract}
In  wireless OFDM communications systems,  pilot tones, due to their \emph{publicly-known} and \emph{deterministic} characteristic,  suffer  significant  jamming/nulling/spoofing risks. Thus, the convectional  channel training  protocol using pilot tones  could be attacked and  paralysed, which raises the issue of anti-attack channel training  authentication (CTA), that is, verifying  the claims of  identities of pilot tones and channel estimation samples.  In this paper, we consider  one-ring scattering scenarios with large-scale uniform linear arrays (ULA) and  develop an independence-checking coding (ICC) theory to build a secure and stable  CTA protocol, namely ICC based CTA (ICC-CTA) protocol.  In this protocol,  pilot tones are  not merely  \emph{randomized} and inserted into  subcarriers, but also  encoded as diversified subcarrier activation patterns (SAPs) simultaneously. Those encoded SAPs, though camouflaged by malicious signals, can be  identified and decoded into original pilots, and hence for high-accuracy channel impulse response (CIR) estimation. The CTA security is first characterised by the error probability of  identifying legitimate CIR estimation samples. We prove that the identification error probability (IEP) is equal to zero under  the continuously-distributed mean angle of arrival (AoA) and  also derive a closed-form expression of IEP under the  discretely-distributed case. The CTA instability is  formulated  as  the function of probability  of stably estimating  CIR  against  all available  diversified  SAPs.  A realistic  tradeoff  between the CTA  security and instability  under the discretely-distributed AoA is identified and  an optimally-stable  tradeoff problem is  formulated, with the objective of   optimizing the code rate to maximize  security  while  maintaining maximum stability  for ever.  Solving this, we derive the closed-form expression of  optimal code rate. Numerical results finally validate the resilience of proposed  ICC-CTA protocol.
\end{abstract}
\begin{IEEEkeywords}
Physical-layer authentication,  anti-attack, OFDM, channel training,   independence-checking coding.
\end{IEEEkeywords}
\IEEEpeerreviewmaketitle
\section{Introduction}
\label{introduction}
\IEEEPARstart{W}{ith} the evolution of air interface  towards 5G, security paradigms for the protection of  air interface technologies  have attracted increasing attentions in wireless communications systems. Safeguarding  the current standard, for instance, orthogonal frequency-division multiplexing (OFDM)  or securely implementing  the initiation, such as massive antenna technique,  gradually come up on the  agenda~\cite{Bogale}. The common problem encountered  is that the imperishable characteristic of wireless channels, such as the open and shared nature, has always been rendering those  technologies vulnerable to the growing  denial of service (DoS) attacks~\cite{Yan}. A phenomenon, if we notice,  has emerged in the  physical (PHY) layer  that   DoS attacks, with  moderate  size of the involved network segment and modest  implementation complexity, have become increasingly common and potent~\cite{Rahbari}. As their major hacking behaviors, radio jamming (RJ) attacks  have  been exhibiting its  astonishing destructive power on those existing~\cite{Shahriar} and  emerging air interface  techniques~\cite{Pirzadeh}.

Among these RJ attacks, protocol-aware attack serves as the most effective one as the attacker could sense the specific protocols and intensify its  effectiveness significantly  by jamming a physical layer mechanism instead of data payload directly~\cite{Litchman}. The typical case which frequently occurs in massive-antenna OFDM systems is that protocol-aware attackers always show a great  appetite for the  channel training  protocol. In this protocol, frequency-domain subcarrier (FS) channels and  channel impulse response (CIR) samples, are estimated to  further the  high-quality user experience using those estimations.  The  motivations for this case are twofold. On one hand, multi-antenna OFDM technique has been deployed universally in current commercial and military applications,  which  incurs huge interests of  malicious nodes. Since the channel training protocol requires that   \emph{deterministic and publicly-known} pilot tones  should be shared on the time-frequency resource grid (TFRG) by  all parties~\cite{Lichtman}, a pilot-aware attacker could sense and acquire  the public  pilot information, and  practically behave  in such a way that  the regular channel training process may not be maintained as usual~\cite{Clancy2,Sodagari,Xu}. On the other hand, everyone has witnessed  the introduction of  massive  antennas  into OFDM technique which has been  promoted  significantly in  the recent practice, such as in 3GPP new radio (NR) specifications.  In this era,  the precise channel training  becomes very crucial to  maintaining the significant multiplexing gains of target users. The bad news is that  imprecise estimation samples  could not only lower down  those gains  but also benefit others, such as the attacker, due to the high resolution of antenna arrays. What's more, when the channel training is misguided in favour of attacker, actually without too much efforts,  massive antenna arrays in OFDM systems will be well loved  by the attacker.

In this context, authenticating channel training  becomes  very critical to the massive antenna  OFDM systems since it determines the authenticity of  channel estimation results.  Generally, channel training launched by any certain  subscriber is  authenticated by default through the designated  public pilot tones  allocated to that subscriber~\cite{Tu, Lai}.  Applying  the same  pilot tones as the subscriber at the receiver  to   channel estimation  means  the exact authentication for channel training. This process is called the channel training authentication (CTA) which belongs  to the field of physical-layer authentication~\cite{Yu}. Intrinsically, exact  CTA mainly depends on  the authenticity of  pilot tones in a sense that  the claims of identities of pilot tones should be  verified.   The uniqueness and non-reproducibility of  pilot tones  are two foremost requirements which however will no longer hold true when  a  pilot-aware attacker  jams/nulls/spoofs  those  pilot tones.  In practice,  attacking  CTA process in OFDM systems  is a common phenomenon, e.g., in  scenarios with tactical consideration~\cite{Peng} or in Long Term Evolution (LTE)-based  public safety networks~\cite{Doumi}. Those attacks, including pilot tone jamming (PTJ) attack~\cite{Clancy2}, pilot tone nulling (PTN) attack~\cite{Sodagari} and pilot tone spoofing (PTS) attack~\cite{Xu},  are very hard to eliminate  once they have occurred successfully.

\subsection{Related Works}
Much of the work related to securing CTA  has been investigated thus far.  How to detect the alteration to authenticity and how to protect and further maintain the high authenticity are two major branches in this area.

The first attempt for  narrow-band single-carrier systems is made in~\cite{Zhou}  in which  the pilot contamination (PC) attack, one type  of  PTS attack,  was introduced and evaluated. Following~\cite{Zhou},  much of the work was studied,  but  limited to the  detection of authenticity of pilot signals by exploiting the physical layer information, such as auxiliary training or data sequences~\cite{Kapetanovic1,Tugnait2,Xiong1} and some prior-known channel information~\cite{Kapetanovic2,Kang}.  Different from those, authors in~\cite{Adhikary} first studied the advantage  of spatial correlation in  the maintenance of  authenticity of pilots,  and found that the natural spatial separation of massive antenna arrays can force PC attack to occur effectively only in a particular angular domain. However, we should never forget that the attacker is out of control. In this regard, PC attack actually becomes more well-directed, rather than less effective.

The first attempt for multi-subcarrier scenarios was presented  by Clancy et al.~\cite{Clancy1}, verifying  the  possibility and effectiveness of PTJ attack. Following this,   PTJ attack was then studied for  single-input single-output (SISO)-OFDM communications in~\cite{Clancy2} which also introduced the PTN attack and then extended it to the  multiple-input multiple-output (MIMO)-OFDM system~\cite{Sodagari}.  The initial  attempt  to resolve pilot aware attack  for  conventional OFDM systems was proposed  in~\cite{Shahriar2}, that is, transforming the PTN and PTS attack into   PTJ attack by  randomizing the  locations and values of regular pilot tones on time-frequency resource grid (TFRG).   It  figured out the importance that  pilot tone scheduling, even being random, would also affect  channel acquisition. Hinted by this,  authors in~\cite{Xu} proposed a FS channel estimation framework under the PTS attack by exploiting pilot randomization  and the independence component analysis (ICA) theory. One key problem is that the practical subcarriers are not mutually independent in the scenarios with limited channel taps, and thus ICA does not apply  in this case. Most importantly,  the CIR estimation is impossible. Basically, CIR  is very critical to the CTA in future 5G mobile eco-systems in which  measuring the multipath before designing  systems is mandatory since the channel has to carry the big amount of data for our ``everything wireless'' applications.  The knowledge of the channel response  represents the aggregate values of gross physical multipath information. CIR is such a wideband channel characterization and contains all information necessary to simulate or analyze any type of radio transmission through the channel.  For instance, the amplitude of channel taps could reflect the sparsity of channel in some cases and their variations could tell us the Doppler spread, coherence bandwidth, and so forth~\cite{Tse}.

To solve those issues, our previous work  in~\cite{Xu2}  proposed an independence-checking coding (ICC)  method which  provides high authenticity guarantee on the FS channel and  CIR estimation based on randomized pilot tones. Nevertheless, the influence of randomization  on CIR estimation was not evaluated and optimized, which incurs the instability of CIR estimation. In this sense,  CTA not only merely  requires the high security against attacks, but also strongly and necessarily calls for the  high stability of CIR estimation accuracy.  As far as we know,  there were very few studies jointly considering the  security  and instability during the  channel training phase.
 \begin{figure}
\centering \includegraphics[width=1.0\linewidth]{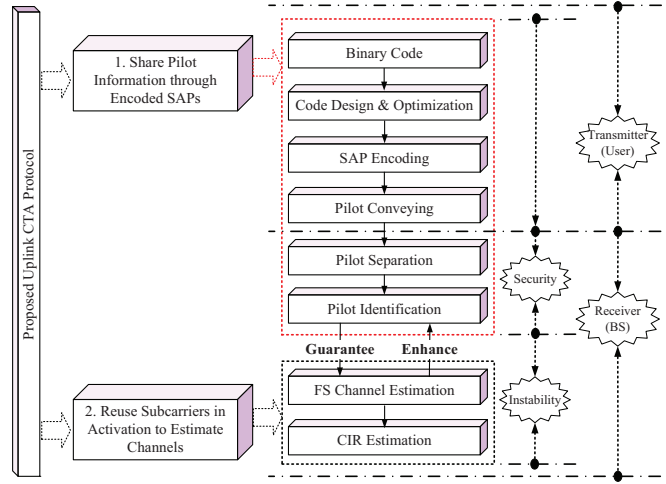}
\caption{  Design methodology for  CTA protocol  in the uplink; Pilot conveying, separation and identification are achieved in Section~\ref{IUTA}.  Channel estimation and identification enhancement are realized in Section~\ref{FSCE}. The tradeoff issue between security and instability is given in Section~\ref{RST}. }
\vspace{-10pt}
\label{Methodology}
\end{figure}
\subsection{Motivations and Contributions}
The hints from the above investigation   further motive us to build up a secure CTA  protocol for massive-antenna OFDM systems with  considerations of  the  heterogeneity of attack modes and the  instability of CIR estimation

Recall that pilot randomization  serves as a commonsense technique for defending against pilot-aware attack.  However, inserting randomized pilot tones on TFRG solely  functions to transform the attack modes such that the attack issue will not be insolvable, rather than to resolve the issue practically. To be more specific, this  brings  two  bottlenecks, i.e., \textbf{1) Unpredictable attack modes;}
\begin{problem}[\textbf{Attack Model}]
A pilot-aware attacker chooses on TFRG a hybrid attack mode including  PTJ  attack and silence cheating (SC) mode.   In  PTJ  attack mode, two  behaviors are available, i. e., wide-band  pilot jamming (WB-PJ) attack~\cite{Yan2} and partial-band pilot  jamming (PB-PJ) attack~\cite{Abdelhakim}. In SC mode,  the  attacker  keeps silent  for cheating  the legitimate node.  The legitimate node can never acquire the behaviors of the attacker in advance. All of the three modes can be very effective due to the node transparency (i.e., no association or independent with each other) and should  never be ignored.
\end{problem}

\textbf{2) Irreversible pilot information. } Randomized pilot information  become  irreversible in the following sense:
\begin{problem}
Randomized pilot information are naturally  camouflaged by random channel information. Those information, if transmitted by pilot tones for uplink channel training   through wireless channels, cannot be separated and identified.
\end{problem}
This problem inspires us to perform the \textbf{protocol design} for  the overall channel training process.  The guideline for this is presented  in Fig.~\ref{Methodology}  where  two key requirements are detailed as follows:

\begin{enumerate}
\item \textbf{Share pilot information through encoded  subcarrier activation
patterns (SAPs):} Selectively activate and deactivate  OFDM subcarriers by transmitting pilots on subcarriers or not, and create various SAP candidates. Encode  all SAPs  as a binary code.  Optimize the code set  in such a way  that  arbitrary  one SAP, namely, codeword, if  suffering a hybrid  attack in the wireless environment, are enabled to be separated and identified securely.  With this preparation,  pilot information is conveyed and encoded as one codeword and  further expressed   as a SAP.  Secure  pilot sharing is thus constructed  between  transceiver pairs.
\item \textbf{Reuse  subcarriers in activation to estimate channels:}  Generate channel estimators according to the identified pilots and apply them  on the activated subcarriers for FS channel  estimation.  Enhance the pilot identification  using the estimated FS channels. Derive CIR estimation samples from the estimated FS channels.
\end{enumerate}
In this  methodology, channel estimation coexists with the information coding and  the two techniques influence each other.
In spite of  the security guarantee provided by  encoded SAPs,   SAP diversification also incurs the uncertainties as to the  amount  and distribution of  subcarriers in activation,  further instabilizing  the CIR estimation  extremely.  This   entanglement between security and instability   motivates us  to perform the \textbf{protocol optimization}. The main contributions of this paper are  summarized as follows:
\begin{enumerate}
\item \textbf{Protocol Design:} First, we establish a fundamental   principle for encoding  arbitrary SAPs as a binary code set precisely. Following this, we develop an  ICC theory to further optimize the code such that  arbitrary two codewords in the code, if being superimposed on each other, can be  separated and identified securely.  In order  to evaluate the security for this, we formulate  two key performance indicators (KPIs), i.e.,  the separation error probability (SEP) and identification error probability (IEP).  We prove that  SEP is always guaranteed to be zero and also  derive the analytical  expression of IEP.  We   build up an uplink  ICC based CTA (ICC-CTA) protocol  in which  legitimate  transceiver pair   encodes and decodes randomized pilot phases securely   through the ICC codebook, and then performs FS channel and CIR estimation using the identified pilots.
\item  Next, we discover  a hidden  phenomenon that  when  FS channel estimation is performed on the basis of this protocol,  the array spatial correlation  existing in the  overlapping subcarriers that also carry information  from both  the legitimate node and the attacker can further help reduce IEP in one-ring scattering scenarios. At this point, the attacker can actually help the legitimate node to enhance  the security.  Interestingly, it can be proved that  zero IEP cannot be achieved only when  the attacker is located in  the  clusters with the same  mean   angle of arrival (AoA) as the legitimate node. This principle, in this sense, could facilitate   the acquisition of  the position of attacker.  Theoretically when we consider the mean AoA with  continuous probability distribution, the security, in theory, can be perfectly guaranteed. Practically in discretely-distributed case, we  give an analytical expression of  how much  the security  could be further improved.
 \item  \textbf{Protocol Optimization:} Finally, we identify the phenomenon of  instable CIR  estimation  in  this protocol  and define the stability by the function of  probability of stable CIR estimation  against diversified SAPs.  In the realistic scenario with discretely-distributed mean AoAs,  we  identify and model  the  tradeoff between the security and instability.  Interestingly, we prove that there always  exists   an optimally-stable tradeoff for which  the CIR estimation can always achieve its optimal stability without losing estimation precision asymptotically. Maintaining this stability,  we  further determine a closed-form expression of optimal code rate that maximizes the security. This code rate indicates how to  flexibly configure  the number of activated subcarriers  under this hybrid attack   such that desirable security and  maximum stability of CIR estimation can be both guaranteed.
\end{enumerate}

\emph{Organization:} In Section~\ref{PSA} , we present an overview of  pilot-aware attack on  massive-antenna OFDM  systems. In Section~\ref{IUTA}, we introduce an ICC-CTA protocol.   FS channel estimation and security enhancement  are described  in  Section~\ref{FSCE}.  Security-instability tradeoff  in CIR estimation  is  provided in Section~\ref{RST}.  Numerical results are presented in Section~\ref{NR} and finally  we conclude our work in Section~\ref{Conclusions}.

\emph{Notations:} We use boldface capital letters ${\bf{A}}$ for matrices, boldface small letters ${\bf{a}}$ for vectors , and small letters $a$ for scalars. ${{\bf{A}}^*}$, ${{\bf{A}}^{\rm{T}}}$, ${{\bf{A}}^{{H}}}$  and ${\bf{A}}\left( {:,1:L} \right)$ respectively denotes   the  conjugate operation, the transpose,  the conjugate transpose  and the first $L$ columns of matrix ${\bf{A}}$.  $\left\| {\cdot} \right\|$ denotes the Euclidean norm of a vector or a matrix. $\left| {\cdot} \right|$ is the cardinality of a set.  ${\mathbb{E}}\left\{  \cdot  \right\}$ is the expectation operator. $\otimes$ denotes  the Kronecker   product operator.   ${\rm{diag}}\left\{ {\bf{a}} \right\}$   stands for the diagonal matrix with  the elements of column vector $\bf{a}$ on its diagonal.
\section{Overview of  Pilot-Aware Attack on   Massive-Antenna OFDM  Systems}
\label{PSA}
We in this section  outline a fundamental  overview of  CTA  issue under pilot aware attack, from a mathematical point of  view.  This refers to  the basic system model, signal model, and channel estimation model.  Finally, the   pilot randomization technique is described and most importantly, we identify  its  potential challenges in resolving the attack.
\begin{figure}[!t]
\centering \vspace{-10pt}\includegraphics[width=1.0\linewidth]{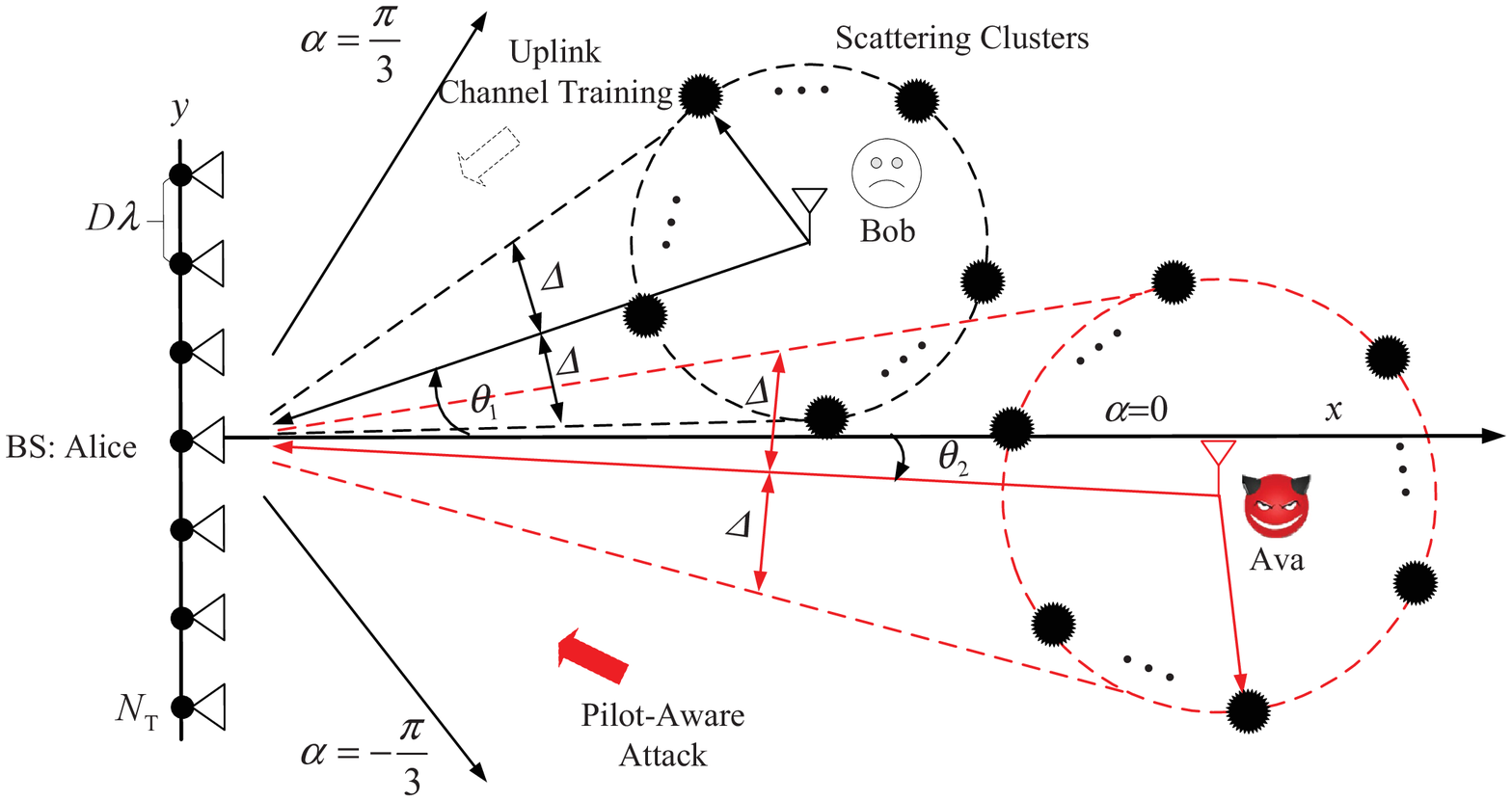}
\caption{Diagram of large-scale MISO-OFDM system under the  wide-band one-ring scattering  model. In this system,  AoA ranges of  Bob and Ava overlap with each other, which incurs an effective pilot-aware attack on the  uplink channel estimation.}\label{fig:system_b}
\vspace{-20pt}
\label{System_model}
\end{figure}
\subsection{System Description}
\begin{table*}\footnotesize
\begin{center}
\caption{Summary of Notations.}
\begin{tabular}{|l|l|}
\hline
Notations & Description \\
\hline
$N_{T}$; $D\lambda~(0 \le D \le {1 \mathord{\left/
 {\vphantom {1 2}} \right.
 \kern-\nulldelimiterspace} 2})$  & Number of antennas at BS; Antenna spacing\\
 ${\Delta }$; $\theta_{i}, i=1,2$ & Angle spread at  BS; Mean AoA of Bob, $i=1$ and Ava, $i=2$ \\
$\overline N$; $N$; $\left( {\overline N\ge N} \right)$  &   Total available number of subcarriers within each OFDM symbol time; Length of FFT points\\
$ N_{\rm B};N_{\rm A}$ $\left( {N_{\rm B}\le \overline N, N_{\rm A}\le \overline N }\right)$      &   Number of subcarriers allocated  for Bob and Ava\\
$\Psi {\rm{ = }}\left\{ {{\rm{0}},{\rm{1}} \ldots {\overline N}{\rm{ - 1}}} \right\}$ & Index set of total available subcarriers\\
${\Psi _{\rm{B}}}{\rm{ = }}\left\{ {{i_0},{i_1}, \ldots ,{i_{{N_{\rm B}} - 1}}} \right\}$, ${\Psi _{\rm{A}}}{\rm{ = }}\left\{ {{i_0},{i_1}, \ldots ,{i_{{N_{\rm A}} - 1}}} \right\}$ & Index set of  subcarriers allocated  for Bob and Ava \\
${x_{\rm{B}}^j\left[ k \right]}, {j \in {\Psi _{\rm{B}}}}$;  ${x_{\rm{A}}^j\left[ k \right]}, {j \in {\Psi _{\rm{A}}}}$ &  Pilot tones for Bob and Ava  at the $j$-th subcarrier and $k$-th symbol time\\
$\rho _{{\rm{B}}}$, $\rho _{\rm{A}}$; ${\phi_{k}}$, $\varphi_{k,i}$ & Uplink  training power for Bob and Ava; Pilot  phases of Bob and Ava  \\
$L$; $\sigma ^2$ &  Number of sampled multi-path taps in baseband, Average noise power of BS\\
${\bf{h}}_{\rm{B}}^i\in {{\mathbb C}^{L \times 1}} $; ${\bf{h}}_{\rm{A}}^i \in {{\mathbb C}^{L\times 1}}$  &  CIR  vectors, respectively from Bob and Ava to the $i$-th receive antenna of Alice \\
${\bf {F}}\in {{\mathbb C}^{N \times N}}$;  $\bf {F_{\rm L}}$; ${{\bf F}_{{\rm L},k}}$; ${{\bf  F}_{j}}$ & DFT matrix;   ${{\bf{F}}_{\rm{L}}} = \sqrt N {\bf{F}}\left( {:,1:L} \right)$;  $k$-row matrix of $ \bf {F_{\rm L}}$; $j$-row matrix of $\bf {F}$.\\
 ${{\bf{v}}^i}\left[ k \right]\in {{\mathbb C}^{N \times 1}}$,  ${{\bf{v}}^i}\left[ k \right] \sim {\cal C}{\cal N}\left( {0,{{{\bf{I}}_N}\sigma ^2}} \right)$  &  AWGN  vector  at the $i$-th antenna of BS within the $k$-th symbol time \\
${{\bf{w}}^i_{j}}\left[ k \right]={{\bf{F}}_{j}}{{\bf{v}}^i}\left[ k \right]$, $1\le j\le N$  &  AWGN vector across $j$ subcarriers for  $i$-th antenna of BS within $k$-th symbol  \\
 $\sigma _{{\rm{B}},l}^2$; $\sigma _{{\rm{A}},l}^2$ &  PDP  of  the $l$-th path of  Bob and Ava\\
 ${{\bf{y}}_i}\left[ k \right]$  &   Received signal vector  at the $i$-th subcarrier  and $k$-th OFDM symbol. \\
 ${\cal A}$; & $\left\{ {{\phi}:{{{\phi} = 2m\pi } \mathord{\left/
 {\vphantom {{{\phi _k} = 2m\pi } C}} \right.
 \kern-\nulldelimiterspace} C},0 \le m \le C - 1} \right\}$;  $C$ denotes  the quantization resolution  \\
 ${\cal P}_{d}=\left\{ {k_1, \ldots ,k_d} \right\}$, $ {{\cal P}_{d}} \subseteq \Psi $ & Index set of ambiguous subcarriers  under hybrid attack\\
${\cal P}_{s}=\left\{ {j_1, \ldots ,j_s} \right\}$, $ {{\cal P}_{s}} \subseteq \Psi, \left| {\cal P}_{s} \right| = s$ & Index set of overlapping subcarriers under hybrid attack \\
${\cal P}_{a}=\left\{ {i_1, \ldots ,i_a} \right\}$, $ {{\cal P}_{a}} \subseteq \left\{ {1, \ldots {N_{\rm{T}}}} \right\}$, $\left| {\cal P}_{a} \right| = a$ & Index set of the intersection  of ${\cal S}_1$ with ${\cal S}_2$\\
${{\bf{R}}_{{i}}}\in {{\mathbb C}^{N_{\rm T} \times N_{\rm T}}}$;  ${\bf{R}}_{{\rm F}}$ &  Channel covariance matrix of Bob ($i=1$) and Ava ($i=2$);  ${\bf{R}}_{{\rm F}} {\rm{=}}  {\bf{F}}_{{\rm{L}},{s}}^{\rm{T}}{\bf{F}}_{{\rm{L}},{s}}^{\rm{*}}$\\
 $\rho_{i}$; ${\rho _{{\rm{f}}}} = \min \left\{ {s,L} \right\}$ & Rank of ${{\bf{R}}_i}$; Rank of  ${{\bf{R}} _{\rm{F}}}$\\
 $N_1^{\rm{d}}$;  $N_0^{\rm{d}}$ &  Total number of  non-zero digits in \textbf{S.1} and zero digits in  \textbf{S.2} \\
$N_{1,i}^{\rm s}$; $N_{0,i}^{\rm s}$, $i=0,1$ &  Total number of  nonzero digits for ${{\cal A}_i}$; Total number of zero digits for  ${{\cal A}_i}$\\
${d_{i_{\rm r}}}$& Digit indicated by  RS\\
\hline
%........
\end{tabular}
\end{center}
\vspace{-15pt}
\end{table*}
We consider a  synchronous large-scale  multiple-input single-output (MISO)-OFDM system with a $N_{\rm T}\gg 1$-antenna base station (named as Alice) and a single-antenna legitimate user (named as Bob). As shown in  Fig.~\ref{System_model},  the based station (BS)  is equipped with a $D\lambda$-spacing directive  uniform linear array (ULA) and  placed at the origin along the $y$-axis to serve a 120-degree sector that is centered around the $x$-axis ($\alpha  = 0$). We assume no energy is received for angles $\alpha  \notin \left[ {-\frac{\pi }{3},\frac{{\pi }}{3}} \right]$.  The summary of notations  is given  in Table I.

For a typical cellular configuration, the channel from Bob to Alice is a correlated random vector with covariance matrix that depends on the scattering geometry. Assuming a macro-cellular tower-mounted BS with no significant local scattering, the propagation between Bob and Alice   is characterized by the local scattering around Bob, resulting in the well-known one-ring model~\cite{Adhikary}.  For OFDM systems with frequency-selective channels, the wide-band configuration is more realistic. Here, we consider the wide-band one-ring scattering  model in which   Bob  is  surrounded by local scatterers  within  $\left[ {{\theta _1} - {\Delta },{\theta _1} + {\Delta }} \right]$~\cite{Adhikary,Han}.This will contribute to  the following mathematical characterisation  of the  advantage of spatial correlation in security provision as an explicit result, rather than a complex and  unintuitive implication.

We consider pilot tone based uplink channel training process  on time-frequency domain with $\overline N$ available subcarriers at each OFDM symbol time. In principle,  subcarriers  indexed by  ${\Psi _{\rm{B}}}$ are employed  for  pilot tone insertion and  the following channel estimation. Those pilot tones, known as reference signals in  LTE-A and/or beyond, are  deterministic and publicly-known in  TFRG.  Each transceiver, by sharing those tones, can deduce the FS channels and further estimate  the CIR samples.  Therefore a single-antenna malicious node (named as Ava)  could disturb  this training process  by jamming/spoofing/nulling  those pilot tones. We denote  the set of victim  subcarriers  by ${\Psi _{\rm{A}}}$ and   make the following assumption:
\begin{assumption}
Ava is  surrounded by local scatterers  within $\left[ {{\theta _2} - {\Delta },{\theta _2} + {\Delta }} \right]$ and always has  common or overlapping AoA intervals with Bob, this is, $\left[ {{\theta _2} - \Delta ,{\theta _2} + \Delta } \right] \cap \left[ {{\theta _1} - \Delta ,{\theta _1} + \Delta } \right] \ne \emptyset $
\end{assumption}
This assumption is supported by the scenario where a common large scattering body (e.g., a large building) could create a set of angles common to all nodes in the system. In this case, the angular spread of BS is broad and  the overlapping of AoA intervals is inevitable. The result is that   the  channel covariance  eigenspaces of Bob and Eva are coupled and  the attack is hard to  eliminate through angular separation~\cite{Adhikary}.
\begin{assumption}
We consider the multiple-cluster scenario. Two types of the distribution model of  ${\theta _i}$, $i=1,2$ are considered, including the continuous probability distribution (CPD)~\cite{Adhikary} and the discrete probability distribution (DPD)~\cite{Bhagavatula}, for instance, discrete uniform distribution with the support of interval length $K$.
\vspace{-10pt}
\end{assumption}
\subsection{Receiving Signal Model}
In this subsection, we introduce the receiving  signal model at Alice.  To begin with, we will give the concept of  pilot insertion pattern (PIP) which indicates  the way of  inserting   pilot tones  across  subcarriers and OFDM symbols.
\begin{assumption}[\textbf{Frequency-domain PIP}]
We in this paper assume $x_{{\rm{B}}}^i\left[ k \right] = {x_{{\rm{B}}}}\left[ k \right]= \sqrt {{\rho _{{\rm{B}}}}} {e^{j{\phi_{k} }}}$, ${{i \in {\Psi _{\rm{B}}}}}$ for low overhead consideration and theoretical analysis. Alternatively,  we can superimpose ${x_{{\rm{B}}}}\left[ k \right] $ onto a dedicated  pilot sequence optimized under a non-security oriented scenario and utilize this new pilot for training. At this point,  $\phi_{k}$ can be an additional phase difference for security consideration. We do not impose the phase constraint on the PIP  strategies  of  Ava, that is,  ${x^i_{\rm{A}}}\left[ k \right] = \sqrt {{\rho_{\rm{A}}}} {e^{j{{\varphi}_{k,i}} }},{{i \in {\Psi _{\rm{A}}}}}$.
\end{assumption}

Let us proceed to the basic OFDM procedure. First,  the frequency-domain pilot signals of Bob and Ava  over $N$ subcarriers  are  respectively  stacked as $N$ by $1$  vectors ${{\bf{x}}_{\rm{B}}}\left[ k \right] = \left[ {{x_{{\rm{B}},j}}\left[ k \right]} \right]_{{ {j \in {\Psi}}}}^{\rm{T}}$ and ${{\bf{x}}_{\rm{A}}}\left[ k \right] = \left[ {{x_{{\rm{A}},j}}\left[ k \right]} \right]_{{ {j \in {\Psi}}}}^{\rm{T}}$. Here  there exist:
  \begin{equation}\label{E.1}
{x_{{\rm{B}},j}}\left[ k \right] = \left\{ {\begin{array}{*{20}{c}}
{x_{\rm{B}}\left[ k \right]}&{{j \in {\Psi _{\rm{B}}}}}\\
0&{{j \notin{\Psi _{\rm{B}}}}}
\end{array}} \right., {x_{{\rm{A}},j}}\left[ k \right] = \left\{ {\begin{array}{*{20}{c}}
{x_{\rm{A}}^j\left[ k \right]}&{{j \in {\Psi _{\rm{A}}}}}\\
0&{{j \notin{\Psi _{\rm{A}}}}}
\end{array}} \right.
\end{equation}
Assume that the  length of cyclic prefix is larger than  $L$.  The parallel streams, i.e.,  ${{\bf{x}}_{{\rm{B}}}}\left[ k \right]$ and  ${{\bf{x}}_{{\rm{A}}}}\left[ k \right]$, are modulated with inverse fast Fourier transform (IFFT). After removing the cyclic prefix at the $i$-th receive antenna and $k$-th OFDM symbol time, Alice derive the time-domain $N$ by $1$  vector ${{\bf{y}}^i}\left[ k \right]$  as:
\begin{equation}\label{E.3}
{{\bf{y}}^i}\left[ k \right] = {\bf{H}}_{{\rm{C,B}}}^i{{\bf{F}}^{\rm{H}}}{{\bf{x}}_{\rm{B}}}\left[ k \right] + {\bf{H}}_{{\rm{C,A}}}^i{{\bf{F}}^{\rm{H}}}{{\bf{x}}_{\rm{A}}}\left[ k \right] + {{\bf{v}}^i}\left[ k \right]
\end{equation}
where ${\bf{H}}_{{\rm{C,B}}}^i$ and ${\bf{H}}_{{\rm{C,A}}}^i$ are $N \times N$ circulant matrices for which the  first column of  ${\bf{H}}_{{\rm{C,B}}}^i$  and ${\bf{H}}_{{\rm{C,A}}}^i$ are respectively given by ${\left[ {\begin{array}{*{20}{c}}
{{\bf{h}}_{\rm{B}}^{{i^{\rm{T}}}}}&{{{\bf{0}}_{1 \times \left( {N - L} \right)}}}
\end{array}} \right]^{\rm{T}}}$ and ${\left[ {\begin{array}{*{20}{c}}
{{\bf{h}}_{\rm{A}}^{{i^{\rm{T}}}}}&{{{\bf{0}}_{1 \times \left( {N - L} \right)}}}
\end{array}} \right]^{\rm{T}}}$. Here, ${\bf{h}}_{\rm{A}}^i $ is   assumed to be independent with  ${\bf{h}}_{\rm{B}}^i$.
Taking   fast Fourier transform (FFT),  Alice  finally derives the    frequency-domain $N$ by $1$ signal vector  at the $i$-th receive antenna and $k$-th OFDM symbol time as
\begin{equation}\label{E.4}
{\widetilde {\bf{y}}^i}\left[ k \right] = {\rm{diag}}\left\{ {{{\bf{x}}_{\rm{B}}}\left[ k \right]} \right\}{{\bf{F}}_{\rm{L}}}{\bf{h}}_{\rm{B}}^{{i}}  + {\rm{diag}}\left\{ {{{\bf{x}}_{\rm{A}}}\left[ k \right]} \right\}{{\bf{F}}_{\rm{L}}}{\bf{h}}_{\rm{A}}^{{i}} + {{\bf{w}}^i_{N}}\left[ k \right]
\end{equation}
Throughout this paper, we assume that the CIRs belonging to  different paths  at each antenna  exhibit spatially uncorrelated Rayleigh fading.
Without loss of generality, each path has the uniform and normalized power delay profile (PDP) satisfying  $\sum\limits_{l = 1}^L {\sigma _{{\rm B},l}^2}  = 1, \sum\limits_{l = 1}^L {\sigma _{{\rm A},l}^2}  = 1$~\cite{McKay}.  For each path, CIRs  of different antennas  are spatially correlated.
 With the one-ring scattering model, the correlation  between  channel coefficients of antennas $1 \le m,n \le N_{\rm T}$, $\forall l$ is defined by~\cite{Adhikary,Han}:
    \begin{equation}\label{E.5}
{\left[ {{{\bf{R}}_{{k}}}} \right]_{m,n}} = \frac{1}{{2\Delta L}}\int_{ - {\Delta} + {\theta_{k}}}^{{\Delta} + {\theta_{k}}} {{e^{ - j2\pi D\left( {m - n} \right)\sin \left( \theta  \right)}}} d\theta, k=1,2
  \end{equation}
Here, ${{\bf{R}}_{{k}}}, k=1,2$ are  symmetric positive semi-definite matrices.  Note that  ${{\bf{R}}_{{2}}}$ is unknown for Alice and Bob while ${{\bf{R}}_{{1}}}$  is known by Alice.
\subsection{ Channel Estimation Model}
For the PTS attack, Ava could learn  the pilot tones employed by Bob  in advance and  impersonate Bob  by utilizing the same pilot tone learned. There exists $
{\Psi _{\rm{B}}}\cup {\Psi _{\rm{A}}} = {\Psi _{\rm{B}}}$ and $x_{{\rm{A}}}^i\left[ k \right] = {x_{{\rm{B}}}}\left[ k \right],{i \in {\Psi _{\rm{A}}}}$. Signals in  Eq.~(\ref{E.4}) can be rewritten as:
\begin{equation}\label{E.6}
{\widetilde {\bf{y}}^i_{\rm PTS}}\left[ k \right] = {{\bf{F}}_{\rm{L}}}{\bf{h}}_{\rm{B}}^{{i}}{x_{\rm{B}}}\left[ k \right]  + {{\bf{F}}_{\rm{L}}}{\bf{h}}_{\rm{A}}^{{i}}{x_{\rm{B}}}\left[ k \right] + {{\bf{w}}^i_{N}}\left[ k \right]
\end{equation}

Finally, a least square (LS)  based channel estimation is formulated by the equation
${\widehat {{\bf{h}}}_{con}^i} = {\bf{h}}_{\rm{B}}^i + {\bf{h}}_{\rm{A}}^i + {\left( {{{\bf{F}}_{\rm{L}}}} \right)^ + }\frac{{x_{\rm{B}}^{\rm{H}}\left[ k \right]}}{{{{\left| {x_{\rm{B}}^{\rm{H}}\left[ k \right]} \right|}^2}}}{{\bf{w}}^i_{N}}\left[ k \right]$
where $\left( {{{\bf{F}}_{\rm{L}}}} \right)^+ $ is  the Moore-Penrose
pseudoinverse of $ {{{\bf{F}}_{\rm{L}}}} $.  We  see that   the estimation of ${\bf{h}}_{\rm{B}}^i $  is contaminated by  ${\bf{h}}_{\rm{A}}^i $   with  a noise bias when a PTS  attack happens. As to the characterisation of  PTN  attack and PTJ attack, we can refer to   the mathematical interpretation  in~\cite{Xu2}.
\subsection{Influence of Pilot Randomization on   Pilot-Aware Attack}
Pilot randomization can avoid the pilot aware attack without imposing  any prior information on the pilot design. The common method is to randomly  select  phase candidates.  Each of the phase candidates is   mapped by default into a unique quantized  sample, chosen  from the set ${\cal A}$.
 Since phase information only provides  the security guarantee as shown in Assumption 3, thus without the need of huge overheads,   we make the following assumptions:
 \begin{assumption}[\textbf{Time-domain PIP}]
 During  two adjacent OFDM symbol time, such as, $k_i,k_{i+1}$, $i\ge0$ two pilot phases ${\phi _{{k_i}}}$ and ${\phi _{{k_{i+1}}}}$  are kept with fixed phase difference, that is,   ${\phi _{{k_{i+1}}}} - {\phi _{{k_{i}}}} = \overline \phi $, for reducing the authentication overheads. Here, ${\phi _{{k_{i+1}}}}$ and ${\phi _{{k_{i}}}}$ are both random but $\overline \phi$  are deterministic and  publicly known.
 \end{assumption}
Institutively,  how the value $C$ increases affects  the  performance of   anti-attack technique. This technique also brings up the subject of Problem 2.

 \section{ICC-CTA Protocol}
 \label{IUTA}
As shown in the Fig.~\ref{Methodology}, this section presents the principles of  pilot conveying,  separation and  identification.
\subsection{Pilot Conveying  on Code-Frequency Domain}
\label{PCBC}
Basically, the more phases supported in $\cal A$, the higher  coding diversity is required, and thus the more available SAPs  should be created. Theoretically, this requires a delicately-designed binary code  and  practically depends on how to  activate and deactivate  subcarriers as the code indicates. This operation will inevitably induce a concurrence of activated and deactivated subcarriers,  and therefore detecting the number of signals coexisting on one subcarrier is a necessary work before coding.

To achieve this goal, we will  employ  the technique of eigenvalue ratio based detection (ERD)  proposed in~\cite{Shakir}. Here  we consider three symbol time and  a $3 \times N_{\rm T}$ receiving signal matrix, denoted by ${{\bf{Y}}_{\rm{D}}}$,  is created for detection. Given the normalized covariance matrix defined by $\widehat {\bf{R}} = \frac{1}{{{\sigma ^2}}}{\bf{Y}}_{\rm D}{{\bf{Y}_{\rm D}}^{\rm{H}}}$, we define  its ordered eigenvalues  by  ${\lambda _{{1}}} > {\lambda _2}  > {\lambda _3} > 0$ and construct  the test statistics by $T = \frac{{{\lambda _{1}}}}{{{\lambda _{3}}}}\mathop {\gtrless}\limits_{{{{\overline{\cal H}}_0}}}^{{{{{ {\cal H}}_0}}}} \gamma$ where $\gamma$ denotes the  decision threshold. The hypothesis ${{ {\cal H}}_0}$  means that there exist signals and  ${{\overline{\cal H}}_{0}}$ means the opposite.
\subsubsection{Construction of Code Frequency Domain}
Given the threshold  $\gamma$,  the cumulative distribution function (CDF) of $T$, denoted by $F\left( \gamma  \right)$,  can be expressed by $F\left( \gamma  \right) =1-{P_f}= \Phi \left\{ {\frac{{{\zeta _{{\lambda _3}}}\gamma  - {\zeta _{{\lambda _1}}}}}{{{\xi _{{\lambda _1}}}{\xi _{{\lambda _3}}}\chi \left( \gamma  \right)}}} \right\},     \chi \left( \gamma \right) = \sqrt {\frac{{{\gamma^2}}}{{\xi _{{\lambda _1}}^2}} - \frac{{2\rho \gamma }}{{{\xi _{{\lambda _1}}}{\xi _{{\lambda _3}}}}} + \frac{1}{{\xi _{{\lambda _3}}^2}}}$ where $\rho  = {{\left( {{\zeta _{{\lambda _1},{\lambda _3}}} - {\zeta _{{\lambda _1}}}{\zeta _{{\lambda _3}}}} \right)} \mathord{\left/
 {\vphantom {{\left( {{\zeta _{{\lambda _1},{\lambda _3}}} - {\zeta _{{\lambda _1}}}{\zeta _{{\lambda _3}}}} \right)} {{\xi _{{\lambda _1}}}{\xi _{{\lambda _3}}}}}} \right.
 \kern-\nulldelimiterspace} {{\xi _{{\lambda _1}}}{\xi _{{\lambda _3}}}}}$~\cite{Shakir} . Here $\Phi\left\{ \cdot \right\}$ denotes  CDF of a standard Gaussian random variable.  In order to measure how many  antennas   are required on each subcarrier to achieve a certain  ${P_f}$, a decision threshold  function $\gamma  \buildrel \Delta \over =f\left( {{N_{\rm{T}}},{P_f}} \right)$ is derived, with $f\left( {{N_{\rm{T}}},{P_f}} \right) =  \frac{{{\zeta _{{\lambda _1}}}{\zeta _{{\lambda _3}}} - \tau ^2{\rho }{\xi _{{\lambda _1}}}{\xi _{{\lambda _3}}} + {\tau }\sqrt {{\delta } - 2{\rho }{\xi _{{\lambda _1}}}{\xi _{{\lambda _3}}}{\zeta _{{\lambda _1}}}{\zeta _{{\lambda _3}}}} }}{{\zeta _{{\lambda _3}}^2 - \tau ^2\xi _{{\lambda _3}}^2}}$
 where ${\delta } = \zeta _{{\lambda _1}}^2\xi _{{\lambda _3}}^2 + \zeta _{{\lambda _3}}^2\xi _{{\lambda _1}}^2 + \left( {\rho ^2 - 1} \right)\tau ^2\xi _{{\lambda _1}}^2\xi _{{\lambda _3}}^2$, $\tau  = {\Phi ^{ - 1}} \left\{1-{P_f}\right\}$.
Here $\zeta_{\lambda_i^{p}}={\mathbb{E}}\left( {\lambda _i^p} \right),i=1,3, p=1,2$, $\zeta_{\lambda_1\lambda_3}={\mathbb{E}}\left( {\lambda_1\lambda_3} \right)$ and $\xi _{{\lambda _i}}^2 = {\mathbb E}\left( {\lambda _i^2} \right) - {\left[ {{\mathbb E}\left( {{\lambda _i}} \right)} \right]^2}, i=1,3$.  The related parameters can be shown as follows:
\begin{equation}
\zeta_{\lambda_1^{p}}{\rm{ = }}C_{{N_{\rm{T}}},{\rm{3}}}^{{\rm{ - 1}}}\sum\limits_{i,j = 1}^3 {{{\left( { - 1} \right)}^{i + j}}} 2\Gamma \left( {{L_{{\alpha _1},1}}} \right)\Gamma \left( {{L_{{\alpha _2},2}}} \right)G_{i,j}
\end{equation}
where there exists $
G_{i,j} = \sum\limits_{{l_1} = 1}^{{L_{{\alpha _1},1}} - 1} {\sum\limits_{{l_2} = 1}^{{L_{{\alpha _2},2}} - 1} {\frac{{\Gamma \left( {{l_1} + {l_2} + {p_{i,j}} - 1} \right)}}{{{l_1}!{l_2}!{3^{{l_1} + {l_2} + {p_{i,j}} - 1}}}}} }  - \sum\limits_{{l_1} = 1}^{{L_{{\alpha _1},1}} - 1} {\frac{{\Gamma \left( {{l_1} + {p_{i,j}} - 1} \right)}}{{{l_1}!{2^{{l_1} + {p_{i,j}} - 1}}}}}  - \sum\limits_{{l_2} = 1}^{{L_{{\alpha _2},2}} - 1} {\frac{{\Gamma \left( {{l_2} + {p_{i,j}} - 1} \right)}}{{{l_2}!{2^{{l_2} + {p_{i,j}} - 1}}}} + \Gamma \left( {{p_{i,j}} - 1} \right)}.
$
\begin{equation}
\zeta_{\lambda_3^{p}}{\rm{ = }}C_{{N_{\rm{T}}}{\rm{3}}}^{{\rm{ - 1}}}\sum\limits_{i,j = 1}^3 {{{\left( { - 1} \right)}^{i + j}}} 2\Gamma \left( {{L_{{\alpha _1},1}}} \right)\Gamma \left( {{L_{{\alpha _1},2}}} \right)G_{i,j}^1
\end{equation}
where $G_{i,j}^1 = \sum\limits_{{l_1} = 1}^{{L_{{\alpha _1},1}} - 1} {\sum\limits_{{l_2} = 1}^{{L_{{\alpha _2},2}} - 1} {\frac{{\Gamma \left( {{l_1} + {l_2} + {p_{i,j}} - 1} \right)}}{{{l_1}!{l_2}!{3^{{l_1} + {l_2} + {p_{i,j}} - 1}}}}} } $.
\begin{equation}
\zeta_{\lambda_1,\lambda_3}{\rm{ = }}C_{{N_{\rm{T}}}{\rm{3}}}^{{\rm{ - 1}}}\sum\limits_{{i_1},{i_3},{j_1},{j_3}} \chi  \Gamma \left( {{L_{{\beta _1},1}}} \right)\left\{ {\sum\limits_{{l_1} = 1}^{{L_{{\alpha _1},1}} - 1} {\frac{{G_{i,j}^2}}{{{l_1}!}}} } \right\}
\end{equation}
where $\chi  = {\left( { - 1} \right)^{{i_1} + {i_3} + {j_1} + {j_3}}}$ and we have $G_{i,j}^2= \sum\limits_{{l_1} = 1}^{{L_{{\alpha _1},1}} - 1} {\frac{{\Gamma \left( {{l_1} + p_{i,j}^3 + 1} \right)}}{{{2^{{l_1} + p_{i,j}^3 + 1}}}}} \left\{ {\Gamma \left( {p_{i,j}^1 + 1} \right) - \sum\limits_{t = 0}^{p_{i,j}^3 + {l_1}} {\frac{{{2^t}\Gamma \left( {t + p_{i,j}^1 + 1} \right)}}{{t!{3^{t + p_{i,j}^1 + 1}}}}} } \right\} + \sum\limits_{{l_2} = 1}^{{L_{{\alpha _2},2}} - 1} {\Gamma \left( {p_{i,j}^3 + 1} \right)} \left\{ {\frac{{\Gamma \left( {{l_1} + p_{i,j}^1 + 1} \right)}}{{{2^{{l_1} + p_{i,j}^1 + 1}}}} - \sum\limits_{t = 0}^{p_{i,j}^3} {\frac{{\Gamma \left( {{l_1} + t + p_{i,j}^1 + 1} \right)}}{{t!{3^{{l_1} + t + p_{i,j}^1 + 1}}}}} } \right\}$. For the parameters therein, there exist ${C_{{N_{\rm{T}}},3}} = 2\prod\limits_{i = 1}^3 {\left( {{N_{\rm{T}}} - i} \right)!} $,  ${p_{i,j}} = {N_{\rm{T}}} + p + i + j - 3,p_{i,j}^1 = {N_{\rm{T}}} + p + {i_1} + {j_1} - 5,p_{i,j}^3 = {N_{\rm{T}}} + p + {i_3} + {j_3} - 5$, ${\alpha _1} = 2,{\alpha _2} = 1$,  ${L_{{\alpha _k},k}} = \left\{ {\begin{array}{*{20}{c}}
{{N_{\rm{T}}} - 4 + k + {\alpha _k}}&{{\alpha _k} < i,k < j}\\
{{N_{\rm{T}}} - 2 + k + {\alpha _k}}&{{\alpha _k} \ge i,k \ge j}\\
{{N_{\rm{T}}} - 3 + k + {\alpha _k}}&{otherwise}
\end{array}} \right.$ and ${L_{{\beta _k},k}} = \left\{ {\begin{array}{*{20}{c}}
{{N_{\rm{T}}} - 4 + k + {\beta _k}}&{{\beta _k} < ,k < }\\
{{N_{\rm{T}}} - 3 + k + {\beta _k}}&{ < {\beta _k} < \bar i,k < ,or, < {\beta _k} < \bar j,k < }\\
{{N_{\rm{T}}} - 1 + k + {\beta _k}}&{ < {\beta _k} < \bar i,k > \bar j,or, < {\beta _k} < \bar j,k > \bar i}\\
{{N_{\rm{T}}} + k + {\beta _k}}&{{\beta _k} > \bar i,k > \bar j}\\
{{N_{\rm{T}}} - 2 + k + {\beta _k}}&{otherwise}
\end{array}} \right.$.  ${\mathop{\rm sgn}} (\cdot)$ is the Signum function and $\Gamma \left( {\cdot} \right)$ is the upper incomplete Gamma function.
 \begin{figure}[!t]
\centering \includegraphics[width=0.85\linewidth]{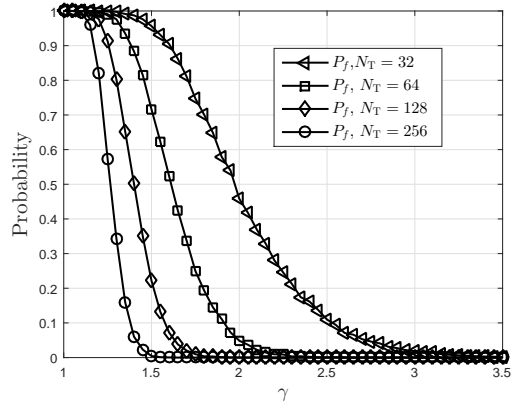}
\caption{ Performance of ERD on   single subcarrier. }
\vspace{-10pt}
\label{PD_PF}
\end{figure}

Using the expression of $\gamma$, we establish a single-subcarrier encoding (SSE) principle  to  encode the number of detected signals  into binary digits, i.e, 0 or 1.
 \begin{definition}[\textbf{SSE Principle}]
 One subcarrier  can be precisely encoded if, for any $\varepsilon > 0$, there exists a positive number $\gamma \left( \varepsilon  \right)$  such that, for all  $\gamma  \ge \gamma \left( \varepsilon  \right)$, ${P_f}$  is smaller than $\varepsilon $.
\end{definition}

Based on the Definition 1, we can encode  the $ m$-th subcarrier as  a binary  digit   ${s_{ m}} $   according to
${s_{m}} = \left\{ {\begin{array}{*{20}{c}}
1&\rm {{{\cal H}_0}~is~ true }\\
0&{otherwise}
\end{array}} \right.$.
We should note that $f\left( {{N_{\rm{T}}},{P_f}} \right)$ is a monotone decreasing  function of two independent variables, i.e., ${N_{\rm{T}}}$ and ${P_f}$. For a given probability constraint ${\varepsilon}^*$, we could always expect a lower bound $\gamma \left( {{\varepsilon ^*}} \right)$ such that $\gamma \left( {{\varepsilon ^*}} \right) = f\left( {{N_{\rm{T}}},{\varepsilon ^*}} \right)$ is satisfied. Under this equation, we could flexibly configure  ${N_{\rm{T}}}$ and $\gamma \left( {{\varepsilon ^*}} \right)$ to make ${\varepsilon}^*$  approach zero~\cite{Shakir}.   We also find that  the value of $\gamma$  achieving zero-${P_f}$  is decreased with the increase of  ${N_{\rm{T}}}$.

To verify this, we consider  three OFDM symbols  and flexible configuration of $N_{\rm T}$, such as, from $32$ to $256$. We simulate ${P_f}$ against   various $\gamma$ in Fig.~\ref{PD_PF}.  As we can see,   the required decision threshold $\gamma$ is decreased with the increase of the number of antennas. This fact also further  verifies the feasibility of Definition 1. For example, we can find a desirable point at $\gamma=1.5, N_{\rm T}=256$ where  $P_f$  is equal to zero,  thus facilitating  perfect binary coding for each kind of SAPs.

 \begin{figure}[!t]
\centering \includegraphics[width=1.00\linewidth]{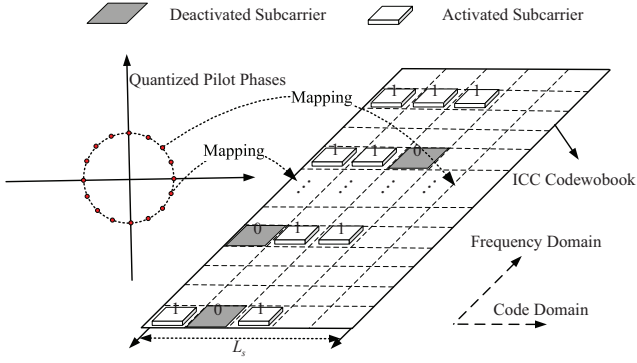}
\caption{Pilot conveying on the identified code-frequency domain.   Construct an one-to-one   mapping principle under which  the  phase candidates  in set $\cal A$ are  mapped to codewords of binary codebook matrix derived in Section~\ref{BCM}, and then further  to SAPs.  The specific principle is that pilot signals are transmitted on the $j$-th subcarrier that occupies three OFDM symbols  if  the $j$-th digit of the codeword  is equal to 1, otherwise  this subcarrier is deactivated.}
\vspace{-10pt}
\label{C_onveying}
\end{figure}

 Based on the formulated binary digits for subcarriers in detection, we denote a set of binary code vectors  by ${{\cal S}}$  with ${{\cal S}}= \left\{ {\left. {\bf{s}} \right|{s_{m}} \in \left\{ {0,1} \right\},1 \le m \le {{L_s}}} \right\}$ where ${L_s}$ denotes the maximum length of  the  code.  Then, a code frequency domain could be constructed as a set of pairs $\left( {{{\bf{s}}},b} \right)$ with ${\bf{s}} \subset {{\cal S}}$ and $1\le b \le N_{\rm B}$ where $b$ is an integer  representing the subcarrier  index of appearance of the code. This is shown in Fig.~\ref{C_onveying}.
\subsubsection{Binary Codebook Matrix}
\label{BCM}

On the formulated  code-frequency domain, we group  the binary digits and  construct  the binary code  by presenting  a binary codebook as follows:
\begin{definition}\label{E.19}
Given a $N_{\rm B} \times C$ binary matrix ${\bf C}$ with each element satisfying  ${c_{i,j}} \in {\bf{s}}\subset {{\cal S}}$, we denote the $i$-th column of ${\bf C}$  by ${\bf c}_{i}$ with ${{\bf{c}}_i} = {\left[ {\begin{array}{*{20}{c}}
{{c_{1,i}}}& \cdots &{{c_{N_{\rm B},i}}}
\end{array}} \right]^{\rm{T}}}$.   We call  $\bf C$  a binary codebook matrix  and ${{\bf{c}}_i}$ a codeword  of $\bf C$ of  length $N_{\rm B}$.
\end{definition}
 The codebook size  is equal to the quantization resolution of phases in the set $\cal A$.
 Based on this codebook matrix, a mapping from pilot phases, to codewords and further to SAPs  is developed in  Fig.~\ref{C_onveying} for pilot conveying.

Pilot conveying  provides the basis for  pilot separation and identification which also means the codeword separation and identification. Therefore, the performance of CTA becomes  totally  dependent on the property of  binary codebook.

\subsection{Pilot Separation and Identification  Via ICC}
\label{PSI}

 In this subsection, we present  the  ICC theory to  optimize   the previous binary  codebook.  Its crucial feature   is to  create  the ``difference'' by checking the independence of channels experienced by  different parties.  In what follows, we will introduce the ICC theory by formulating its encoding/decoding principle.
\subsubsection{Encoding Principle}
 \begin{figure}[!t]
\centering \includegraphics[width=1.00\linewidth]{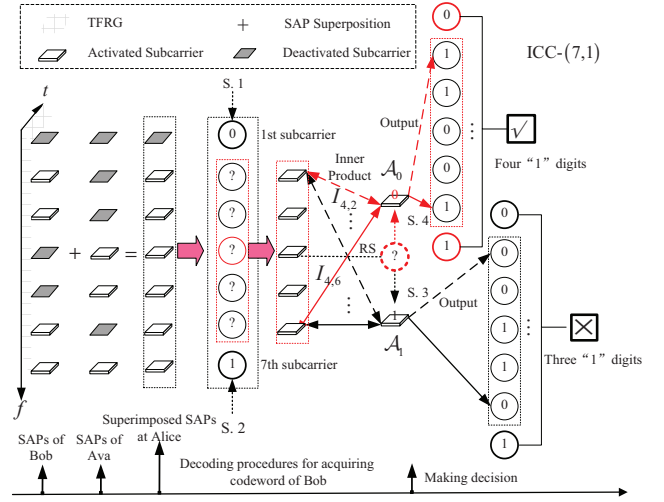}
\caption{ An example of decoding strategy using ICC-$\left( {7,1} \right)$.  After observing the superimposed SAPs from Bob and Ava  on TFRG, Alice decodes the activation patterns of subcarriers  in \textbf{Case 1}  as 0 digits and those  in \textbf{Case 2}  as 1 digits. The two operations are respectively labeled by \textbf{S.1} and  \textbf{S.2}. Alice then calculate   $N_{1}^{\rm d}$ (here  $N_{1}^{\rm d}=1$);  For \textbf{Case 3},  Alice selects  an arbitrary ambiguous subcarrier   at  position $i_{\rm r}, i_{\rm r}\in {\cal P}_{d}$ as a reference subcarrier  (RS) (here $i_{\rm r}=4$), and further  calculates the  differential code  digits ${d_{i_{\rm r},j}}$ with $ {{d_{{i_{\rm{r}}},j}} = f_{d}\left( {{I_{{i_{\rm{r}}},j}}} \right)}\bigoplus1$ where ${{I_{{i_{\rm{r}}},j}} = \left\langle {\frac{{{{\bf{y}}_{{i_{\rm{r}}}}}\left[ k \right]}}{{\left\| {{{\bf{y}}_{{i_{\rm{r}}}}}\left[ k \right]} \right\|}},\frac{{{{\bf{y}}_j}\left[ k \right]}}{{\left\| {{{\bf{y}}_j}\left[ k \right]} \right\|}}} \right\rangle }$ and $f_{d}\left( {x < r} \right) = 0,f_{d}\left( {x > r} \right) = 1,j \in {P_d}$. $\left\langle {\cdot} \right\rangle $ denotes the inner product operation and $r$ is equal to zero. When making decision, Alice  makes two assumptions, i.e.,  ${{\cal A}_1}:{d_{i_{\rm r}}} = 1$ and ${{\cal A}_0}:{d_{i_{\rm r}}} = 0$.  For each assumption ${{\cal A}_i}$, Alice  outputs  the  candidate codeword denoted by  ${\overline {\bf c}}_{i}$ by performing $\bigoplus$ operations between ${d_{i_{\rm r},j}}, j \in {\cal P}_{d}$ and ${d_{i_{\rm r}}},  i_{\rm r} \in {\cal P}_{d}$.
For example, if ${d_{{4}}}{\rm{ = 0}}$, and  ${d_{{\rm{4}},{{2}}}}{{ = 1,}}{d_{{{4}},{{3}}}}{\rm{ = 1,}}{d_{{{4}},{{5}}}}{\rm{ = 0,}}{d_{{{4}},{{6}}}}{\rm{ = 1}}$,  we can derive ${\overline {\bf{c}} _0}{\rm{ = 0110011}}$. The operations correspond to  \textbf{S.3} and  \textbf{S.4}.  Alice calculates $N_{1,i}^{\rm s}$  of ${\overline {\bf c}}_{i}$ for $i=0,1$. (For example, in this figure $N_{1,0}^{\rm s}=4$ and $N_{1,1}^{\rm s}=3$) If $N_1^{\rm d}+N_{1,i}^{\rm s}$ satisfies the weight constraint of ICC-$\left( {N_{\rm B},s} \right)$ code, the hypothesis  ${{\cal A}_i}$ is correct and Alice identifies ${\overline {\bf c}}_{i}$ as the codeword of Bob, which realizes  the codeword separation and identification.}
\vspace{-15pt}
\label{ICC}
\end{figure}

Based on the Definition 2, we further have the following definition:
\begin{definition}
A $N_{\rm B} \times C$ binary matrix $\bf C$  is called a ICC-$\left( {N_{\rm B},s} \right)$ code of length $N_{\rm B}$ and order $s$,  if for any column set $\cal Q$ such that  $ \left| \cal Q \right|=2$,  there exist at least a set $\cal S$ of $s$ rows  such that ${c_{i,j}} = 1,\forall i,j, {i \in \cal S}, j \in  {\cal Q}$.
\end{definition}
For this principle,  any two codewords in $\bf C$ must superimpose  with each other on  at least  $s$ non-zero digits.
\begin{remark}
Basically, $s, s\ge 1$ denotes the discriminatory feature we have created.  This feature intrinsically can be seen as a characteristic  that there always exist  more nonzero digits  than zero digits. Returning  to the subcarriers, $s$ means the available number of  overlapping subcarriers  for channel estimation.  The overlapping of subcarriers  means the  coexistence  of signals from two nodes on the same subcarrier and same OFDM symbol time.
\end{remark}
\begin{theorem}
The weight of ICC-$\left( {N_{\rm B},s} \right)$ code of length $N_{\rm B}$ and order $s$ satisfies $w = \frac{{N_{\rm B} + s}}{2}$ with $N_{\rm B}\ge s$. $w$ is an integer smaller than $N_{\rm B}$.
\end{theorem}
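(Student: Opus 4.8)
The plan is to treat the ICC-$(N_{\rm B},s)$ code as a \emph{constant-weight} code and to pin down that common weight $w$ by a single inclusion--exclusion estimate on the overlap of two codewords. The reason a constant weight is the right object is the decoding rule itself: in Fig.~\ref{ICC} Alice identifies Bob's codeword by testing whether the reconstructed word meets the prescribed weight, so every legitimate codeword (every column of $\bf C$) must carry the same number of ones. I would therefore begin by fixing an arbitrary column pair $\mathcal Q=\{i,j\}$ and writing $w=|{\rm supp}\,{\bf c}_i|=|{\rm supp}\,{\bf c}_j|$, where ${\rm supp}$ denotes the set of nonzero positions of a codeword.

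The core step is to lower bound the overlap. Since ${\rm supp}\,{\bf c}_i$ and ${\rm supp}\,{\bf c}_j$ are two $w$-subsets of the $N_{\rm B}$ coordinates, inclusion--exclusion gives $|{\rm supp}\,{\bf c}_i\cap{\rm supp}\,{\bf c}_j|=|{\rm supp}\,{\bf c}_i|+|{\rm supp}\,{\bf c}_j|-|{\rm supp}\,{\bf c}_i\cup{\rm supp}\,{\bf c}_j|\ge 2w-N_{\rm B}$, because the union occupies at most $N_{\rm B}$ positions. By Definition~3 the code must guarantee at least $s$ common ones for \emph{every} pair, so the worst-case overlap must satisfy $2w-N_{\rm B}\ge s$, i.e. $w\ge (N_{\rm B}+s)/2$. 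To obtain equality I would read the order $s$ as the \emph{exact} worst-case overlap the code forces: if $2w-N_{\rm B}$ were strictly larger than $s$, then every pair would be compelled to share more than $s$ ones and the code would actually have order $2w-N_{\rm B}>s$, contradicting that its order is $s$. Equivalently, among all constant-weight codes meeting Definition~3 one selects the smallest weight, since a larger $w$ only activates more subcarriers without being demanded by the definition. Either way $2w-N_{\rm B}=s$, hence $w=(N_{\rm B}+s)/2$.

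It remains to verify attainability and to record the stated side conditions. Attainability is the delicate point and the place I expect the real work: the value $2w-N_{\rm B}$ is met only when the two supports are arranged so that their union fills all $N_{\rm B}$ coordinates, which is feasible precisely because $w=(N_{\rm B}+s)/2\ge N_{\rm B}/2$; I would exhibit two explicit $w$-subsets of $\{1,\dots,N_{\rm B}\}$ with union the whole index set and intersection of size $s$, certifying that the order is exactly $s$ rather than accidentally larger. The integrality claim then follows by taking $N_{\rm B}$ and $s$ of the same parity, so that $N_{\rm B}+s$ is even (otherwise one sets $w=\lceil (N_{\rm B}+s)/2\rceil$). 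Finally $w\ge s$ is equivalent to $N_{\rm B}\ge s$, while $w<N_{\rm B}$ is equivalent to $s<N_{\rm B}$, so under the hypothesis $N_{\rm B}\ge s$ one gets $s\le w\le N_{\rm B}$, with $w<N_{\rm B}$ for any nontrivial code $s<N_{\rm B}$, matching the assertion that $w$ is an integer smaller than $N_{\rm B}$. The main obstacle is thus not the algebra but justifying the equality in $w\ge(N_{\rm B}+s)/2$: one must argue both minimality (no smaller weight satisfies Definition~3 for every pair) and tightness (a codeword pair actually realizing overlap $s$ exists), and then settle the parity condition that renders $w$ an integer.
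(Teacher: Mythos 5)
Your proposal is correct and takes essentially the same route as the paper's proof: the paper likewise treats the code as constant-weight and observes that the pairwise overlap is minimized exactly when the zero digits of each codeword are fully covered by ones of the other (the supports' union filling all $N_{\rm B}$ positions), giving a minimum overlap of $2w-N_{\rm B}$, which is then equated with the order $s$ to yield $w=\frac{N_{\rm B}+s}{2}$. Your extra care about tightness, minimality of $w$, and parity merely makes explicit what the paper compresses into a few lines (integrality is handled there by the blanket assumption, stated just after the theorem, that such ratios of integers are taken to be integers).
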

\begin{proof}
See proof in Appendix~\ref{Theorem1}
\end{proof}
Here and in the following sections, we assume the ratio of two integer is always  kept to be an integer without loss of generality.  Based on the theorem, we can derive the number of  codewords or namely the columns  in $\bf C$,  by a binomial coefficient $C = \left( {\begin{array}{*{20}{c}}
{{N_{\rm B}}}\\ {\frac{{{N_{\rm B}} + s}}{2}}
\end{array}} \right)$.
Then we have the following proposition about the code rate:
 \begin{figure*}[!t]
\centering \includegraphics[width=1.00\linewidth]{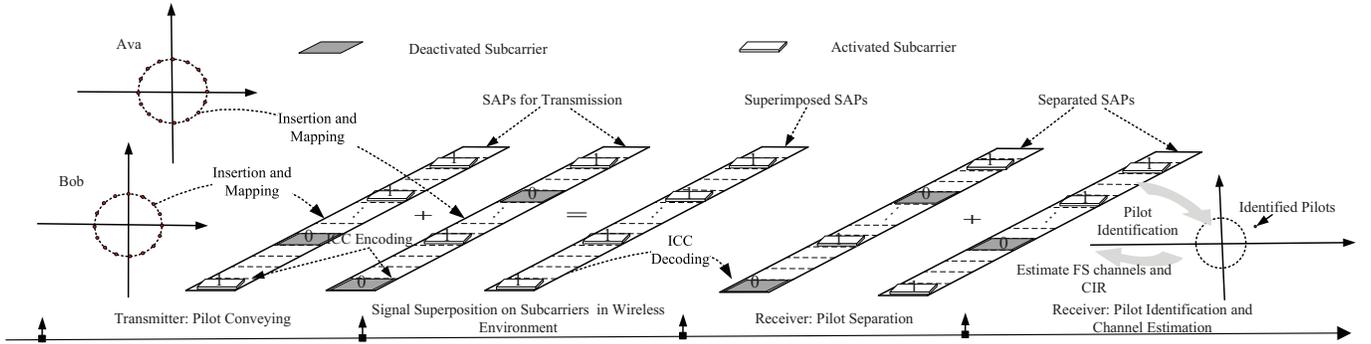}
\caption{ Diagram of   ICC-CTA protocol  procedures. Each time the training begins, Bob selects one quantized phase ${\phi _{{k_i}}}$ from set $\cal A$, for instance  $i={0}$. Bob shows the   SAPs  corresponding to ${\phi _{{k_0}}}$, as the above  mapping  principle indicates.   \textbf{Across the frequency domain},  the insertion of Bob's pilots  obeys Assumption 3. This operation applies to  all of three OFDM symbols. \textbf{For the time domain},  at the initial  symbol $k_0$, Bob inserts  onto the pilot subcarriers in activation  the pilot with phase ${\phi _{{k_0}}}$. The pilots inserted  within  adjacent OFDM symbols, such as $k_{i},i\ge1$,  obey the Assumption 4.  Those SAPs, after undergoing wireless channels, suffer from the superposition interference from each other, and finally are superimposed and observed at Alice which separates and identifies those pilots. The technical details could be seen in Fig.~\ref{ICC} and its caption below. Based  on the identified pilots, Alice performs channel estimation.}
\vspace{-10pt}
\label{Procedure}
\end{figure*}

\begin{proposition}
 The code rate of ICC-$\left( {N_{\rm B},s} \right)$ code,  defined by ${R_{ICC}} = \frac{{{{\log }_2}\left( C \right)}}{{{N_{\rm B}}}}$, is calculated as:
\begin{equation}\label{E.21}
{R_{ICC}}\left( {{N_{\rm{B}}},w} \right) = {\log _2}{\left[ {\frac{{{N_{\rm B}}!}}{{\left( {\frac{{{N_{\rm B}} + s}}{2}} \right)!\left( {\frac{{{N_{\rm B}} -s}}{2}} \right)!}}} \right]^{{1 \mathord{\left/
 {\vphantom {1 {{N_{\rm B}}}}} \right.
 \kern-\nulldelimiterspace} {{N_{\rm B}}}}}}
\end{equation}
\end{proposition}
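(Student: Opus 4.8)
The plan is to obtain the rate formula by a direct substitution of the codebook size into the definition $R_{ICC} = \log_2(C)/N_{\rm B}$, so that the entire argument reduces to two moves: (i) pinning down the value of $C$ and (ii) a short algebraic simplification. The only conceptual input is Theorem~1, which fixes the Hamming weight of every codeword to the constant value $w = (N_{\rm B}+s)/2$; everything else is bookkeeping.

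First I would establish that the largest admissible codebook has exactly $C = \binom{N_{\rm B}}{w}$ columns. Since Theorem~1 forces each codeword to be a length-$N_{\rm B}$ binary vector of weight $w$, I would verify that any two such vectors already satisfy the ICC-$(N_{\rm B},s)$ overlap requirement of Definition~3. By inclusion--exclusion, for two weight-$w$ supports $A,B$ the size of their intersection obeys $|A \cap B| \ge |A| + |B| - N_{\rm B} = 2w - N_{\rm B} = s$, where the last equality uses $w = (N_{\rm B}+s)/2$. Hence every weight-$w$ vector is pairwise compatible with every other, so the full set of constant-weight-$w$ vectors constitutes a valid code, and its cardinality $\binom{N_{\rm B}}{w}$ is therefore the maximal codebook size, matching the count quoted immediately before the proposition.

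It then remains to simplify. Writing the binomial coefficient as $\binom{N_{\rm B}}{w} = N_{\rm B}!/\bigl(w!\,(N_{\rm B}-w)!\bigr)$ and substituting $w = (N_{\rm B}+s)/2$ together with $N_{\rm B} - w = (N_{\rm B}-s)/2$, the definition becomes
\begin{equation}
R_{ICC} = \frac{1}{N_{\rm B}}\log_2\!\left[\frac{N_{\rm B}!}{\left(\frac{N_{\rm B}+s}{2}\right)!\left(\frac{N_{\rm B}-s}{2}\right)!}\right].
\end{equation}
Finally I would pull the prefactor $1/N_{\rm B}$ inside the logarithm via the identity $\tfrac{1}{N_{\rm B}}\log_2 X = \log_2 X^{1/N_{\rm B}}$, which yields the claimed expression for $R_{ICC}(N_{\rm B},w)$. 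The main, and essentially the only non-routine, obstacle is the overlap argument in the second step: one must confirm that the constant-weight constraint inherited from Theorem~1 is not merely necessary but also \emph{sufficient} for the pairwise ICC condition, so that no weight-$w$ vector needs to be discarded and the count is exactly $\binom{N_{\rm B}}{w}$ rather than something strictly smaller. The integrality of $w$ and of $N_{\rm B}-w$ guaranteed by Theorem~1 ensures the factorials are well defined throughout.
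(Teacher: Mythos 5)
Your proof is correct and follows essentially the same route as the paper, which simply counts the codewords as $C = \binom{N_{\rm B}}{(N_{\rm B}+s)/2}$ via the constant weight $w=(N_{\rm B}+s)/2$ from Theorem~1 and substitutes into the rate definition. Your explicit inclusion--exclusion check that any two weight-$w$ vectors overlap in at least $2w - N_{\rm B} = s$ positions (so the constant-weight constraint is sufficient, not just necessary) is a worthwhile detail the paper leaves implicit in its proof of Theorem~1.
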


\subsubsection{Decoding Procedure}
 Despite the fact that the encoding principle  provides  the discriminatory feature of ICC, Alice  has to construct a decoding principle according with this feature  to perform  codeword separation and identification

Considering the hybrid attack environment, Alice could recognize three types of results on  the $i$-th  subcarrier $i\in \left[ {1,{N_{\rm B}}} \right]$: \textbf{Case 1:}None of Bob and Ava  transmits signals.  \textbf{Case 2:} Bob and Ava both transmit signals.  \textbf{Case 3:} One unknown node  (Bob or Ava) transmits signals. Obviously, Alice can identify the behaviors   in the first two cases but this cannot work well in \textbf{Case 3} due to the ambiguity of  superposition operation of signals on subcarriers. For simplicity, we define the subcarriers in \textbf{Case 1} and \textbf{Case 2}  as the deterministic subcarriers while those in \textbf{Case 3} are defined as  the ambiguous subcarriers. The related decoding principle is  depicted in Fig.~\ref{ICC}.

 Now that we have explored the principle of ICC method in theory, we  ought to look at its performance evaluation.
 \begin{proposition}
 SEP, defined by  error probability of separating  two right codewords from the observed codeword,  is zero.
\end{proposition}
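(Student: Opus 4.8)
The plan is to show that, once the per-subcarrier signal counting is error-free, the decoding map of Fig.~\ref{ICC} returns the \emph{unordered} pair of transmitted codewords exactly, so that the only residual uncertainty — which member of the pair carries Bob's label — is precisely what the IEP (and not the SEP) accounts for. First I would invoke the SSE Principle (Definition~1): by driving $P_f$ to zero through a large $N_{\rm T}$ and a threshold $\gamma(\varepsilon^*)=f(N_{\rm T},\varepsilon^*)$, the ERD detector classifies every subcarrier into Case~1, Case~2 or Case~3 without error. This fixes, with zero error probability, the \emph{deterministic} coordinates: every Case-1 position is a common $0$ and every Case-2 position a common $1$ of both true codewords $\mathbf{c}_{\rm B}$ and $\mathbf{c}_{\rm A}$ (the columns of $\mathbf{C}$ actually transmitted). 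Hence a separation error can only originate on the ambiguous set $\mathcal{P}_d$, which is exactly the symmetric difference of the two supports.

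Next I would analyse the differential reconstruction on $\mathcal{P}_d$. Fixing a reference $i_{\rm r}\in\mathcal{P}_d$, the candidate receives at coordinate $j$ the digit $d_{i_{\rm r}}\oplus d_{i_{\rm r},j}$, with $d_{i_{\rm r},j}=f_d(I_{i_{\rm r},j})\oplus 1$. The key algebraic observation is that the two hypotheses $\mathcal{A}_0:d_{i_{\rm r}}=0$ and $\mathcal{A}_1:d_{i_{\rm r}}=1$ generate candidates $\overline{\mathbf{c}}_0,\overline{\mathbf{c}}_1$ that are bitwise complementary on $\mathcal{P}_d$ while agreeing on every deterministic coordinate. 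Therefore, provided the partition induced by $\{d_{i_{\rm r},j}\}$ places each ambiguous subcarrier on the correct side (same node as $i_{\rm r}$ versus the other node), the set $\{\overline{\mathbf{c}}_0,\overline{\mathbf{c}}_1\}$ coincides with $\{\mathbf{c}_{\rm B},\mathbf{c}_{\rm A}\}$ regardless of which hypothesis is the correct one. Separation thus reduces entirely to verifying that every differential digit $d_{i_{\rm r},j}$ is correct.

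The decisive step is to prove that the independence check errs with probability zero on $\mathcal{P}_d$. Here I would exploit the standing assumption that $\mathbf{h}_{\rm A}^i$ is independent of $\mathbf{h}_{\rm B}^i$: two ambiguous subcarriers carrying the \emph{same} node share that node's channel realization and yield correlated normalized received vectors, whereas subcarriers carrying \emph{different} nodes yield statistically independent ones. I would show that in the massive-ULA regime the normalized inner product $I_{i_{\rm r},j}$ concentrates strictly on opposite sides of the threshold $r=0$ in the two cases, so $f_d(\cdot)$, and hence each $d_{i_{\rm r},j}$, is resolved without error; the partition of $\mathcal{P}_d$ is then always correct and $\mathrm{SEP}=0$.

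I expect the main obstacle to lie in this last step: making the zero-error claim exact rather than merely asymptotic. One must control the (complex) inner product $I_{i_{\rm r},j}$ relative to $r=0$ uniformly over all ambiguous pairs, and argue that the independence of Bob's and Ava's channels pushes every different-node inner product to the correct side of the threshold \emph{irrespective of AoA overlap}. This is exactly why the SEP is unconditionally zero, whereas the subsequent identification step — which must instead decide \emph{which} of the two recovered groups is Bob's, using the known $\mathbf{R}_1$ against Ava's unknown covariance — remains AoA-dependent and produces a nonzero IEP in the discretely-distributed case.
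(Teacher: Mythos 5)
Your proposal is correct and follows essentially the same route as the paper: the paper's own (two-sentence) justification likewise rests on the independence of Bob's and Ava's channels making the high-dimensional inner products $I_{i_{\rm r},j}$ on ambiguous subcarriers concentrate on the correct side of the threshold under massive antennas, yielding perfect differential decoding and hence zero SEP. Your additional scaffolding (zero-$P_f$ case classification via the SSE principle and the complementarity of the two hypothesis candidates) merely makes explicit what the paper leaves implicit in Fig.~\ref{ICC}, and the weakness you flag --- that the zero-error claim is asymptotic rather than exact --- is precisely the leap the paper also takes without rigorous proof.
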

It is sufficiently feasible that the distance between Bob and Ava  can guarantee that their channels  fade independently with each other. The inner product of high-dimensional receiving signals on different subcarriers is therefore always precisely measured  under massive antennas,  providing the perfect differential decoding and thus perfect pilot separation in Fig.~\ref{ICC}.
\begin{theorem}
IEP, defined by the error probability of  identifying Bob's codeword from the two separated codewords,  is given by
\begin{equation}\label{E.25}
{{P}_{\rm{I}}} = \frac{{{N_{\rm B}}! - \left( {\frac{{{N_{\rm B}} + s}}{2}} \right)!\left( {\frac{{{N_{\rm B}} - s}}{2}} \right)!}}{{{2^{{N_{\rm B}} + 1}}\left( {\frac{{{N_{\rm B}} + s}}{2}} \right)!\left( {\frac{{{N_{\rm B}} - s}}{2}} \right)!}}
\end{equation}
\end{theorem}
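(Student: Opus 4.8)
The plan is to reduce Alice's identification task to a single weight test on the two separated codewords, and then to count the fraction of attacker patterns that defeat that test. First I would record the structural fact underlying everything: by Theorem 1 every codeword of $\mathbf C$ has Hamming weight $w=\frac{N_{\rm B}+s}{2}$, and since the number of codewords equals $\binom{N_{\rm B}}{w}$, the codebook is in fact the set of \emph{all} weight-$w$ binary vectors of length $N_{\rm B}$. Conversely, any weight-$w$ vector is a legitimate codeword, because two such vectors automatically share at least $2w-N_{\rm B}=s$ common ones, so the ICC property of Definition 3 is met. Consequently the \emph{only} certificate Alice can apply to declare a candidate to be Bob's is the weight constraint $\mathrm{wt}(\cdot)=w$, i.e. the test $N_1^{\rm d}+N_{1,i}^{\rm s}=w$ used in Fig.~\ref{ICC}.

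Next I would pin down precisely the two candidates emitted by the decoder. By Proposition 2 the differential step separates the observation perfectly, so on the ambiguous set $\mathcal P_d$ Alice recovers the partition of active subcarriers into Bob's and Ava's up to a single global swap; appending the deterministic ones ($N_1^{\rm d}$ positions) and deterministic zeros then yields exactly the pair $\{\overline{\mathbf c}_0,\overline{\mathbf c}_1\}=\{\mathbf c_{\rm B},\mathbf a\}$, where $\mathbf c_{\rm B}$ is Bob's true weight-$w$ codeword and $\mathbf a$ is Ava's activation pattern (on $\mathcal P_d$ exactly one node is active, so the complementary assignment reconstructs $\mathbf a$ verbatim). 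Since $\mathbf c_{\rm B}$ always passes the weight test, identification fails only when $\mathbf a$ passes it too, i.e. $\mathrm{wt}(\mathbf a)=w$ with $\mathbf a\neq\mathbf c_{\rm B}$; note that if $\mathbf a=\mathbf c_{\rm B}$ then $\mathcal P_d=\varnothing$ and the codeword is read off directly with no error. In a genuine tie both candidates are legitimate, and Alice, having no further discriminator, resolves it by a fair coin and is wrong with probability $\tfrac12$.

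Finally I would carry out the count. Modelling Ava's SAP $\mathbf a$ as uniform over the $2^{N_{\rm B}}$ binary patterns and fixing $\mathbf c_{\rm B}$ (the answer is independent of which codeword Bob draws, by symmetry), the tie probability is $\Pr[\mathrm{wt}(\mathbf a)=w,\ \mathbf a\neq\mathbf c_{\rm B}]=\bigl(\binom{N_{\rm B}}{w}-1\bigr)/2^{N_{\rm B}}$. Multiplying by the $\tfrac12$ coin factor gives $P_{\rm I}=\bigl(\binom{N_{\rm B}}{w}-1\bigr)/2^{N_{\rm B}+1}$, and substituting $\binom{N_{\rm B}}{w}=\frac{N_{\rm B}!}{(\frac{N_{\rm B}+s}{2})!(\frac{N_{\rm B}-s}{2})!}$ reproduces Eq.~(\ref{E.25}). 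I expect the delicate point to be the justification of the error event rather than the arithmetic: one must argue that (i) passing the weight test is the sole available certificate, so a weight-$w$ attacker pattern is truly indistinguishable from Bob's, and (ii) the exceptional case $\mathbf a=\mathbf c_{\rm B}$ creates no ambiguous subcarrier and hence no error — it is exactly this exclusion that removes the ``$-1$'' from the binomial, and the exclusion together with the fair-coin tie-break that produces the factor $2^{N_{\rm B}+1}$ in the denominator.
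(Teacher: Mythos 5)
Your proposal is correct and follows essentially the same route as the paper's own proof: model Ava's pattern as uniform over the $2^{N_{\rm B}}$ binary SAPs, count the $\binom{N_{\rm B}}{(N_{\rm B}+s)/2}-1$ weight-$w$ patterns that tie with Bob's codeword (excluding the identical pattern, which is decoded unambiguously), and apply the fair-coin factor $\tfrac12$, yielding $P_{\rm I}=\bigl(\binom{N_{\rm B}}{w}-1\bigr)/2^{N_{\rm B}+1}$. Your write-up merely makes explicit several steps the paper leaves implicit --- that the codebook is the full weight-$w$ set so the weight test is the only certificate, and that perfect separation (Proposition 2) reduces the decision to this single tie-break --- which strengthens rather than changes the argument.
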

\begin{proof}
See proof in Appendix~\ref{Theorem2}.
\end{proof}

The overall pilot conveying, separation and identification can be seen in part of Fig.~\ref{Procedure}.
\section{FS Channel Estimation and Security Enhancement}
\label{FSCE}
In this section,  we continue  our design work  for the ICC-CTA protocol architecture and focus on the FS channel estimation.  Two  questions  will be answered further:
\begin{question}
How to estimate FS channels based on the identified pilots?
\end{question}
\begin{question}
 Is it possible to  improve the security  performance of ICC theory by further digging the properties of estimated FS channels ?
\end{question}
\subsection{FS Channel Estimation}
It is well-known that  LS estimator is a natural choice when there is no attack.  In this subsection, we only consider the FS channel estimation under PTJ attack shown in the attack model  in Introduction part.

 In principle, performing linear channel estimation  requires specifying  the receiving signal model and  linear decorrelating estimator (LDE) that weights on the receiving signals for channel estimation.

 Let us consider  the construction of LDE. Basically, Alice  examines the decoded pilots which  can be,  1) successfully identified; ( no identification error) or 2)  confusing  (identification error happens).  We in this section consider the latter and forget the case without identification error. In this way,  the  estimator to be designed naturally apply to the case without identification error.  Within two OFDM symbol time,  i.e., ${k_0}$ and ${k_1}$,  Alice could collect  two confusing pilot vectors defined by  ${\bf{x}}_{{\rm{L,1}}}$ and ${\bf{x}}_{{\rm{L,2}}}$ where ${\bf{x}}_{{\rm{L,1}}}{\rm{ = }}{\left[ {\begin{array}{*{20}{c}}
{{x_{\rm{B}}}\left[ {{k_0}} \right]}&{{x_{\rm{B}}}\left[ {{k_1}} \right]}
\end{array}} \right]^{\rm{T}}}$ and ${\bf{x}}_{{\rm{L,2}}}{\rm{ = }}{\left[ {\begin{array}{*{20}{c}}
{{x_{\rm{A}}}\left[ {{k_0}} \right]}&{{x_{\rm{A}}}\left[ {{k_1}} \right]}
\end{array}} \right]^{\rm{T}}}$. The notation of ${{x_{\rm{B}}}\left[ {{k}} \right]}$ can be found in Assumption 3. Here the confusing case happens  when Ava keeps the same frequency-domain and time-domain PIP as Bob, which is proved in Remark 2. Then we use the notation of ${{x_{\rm{A}}}\left[ {{k}} \right]}$  with the only difference, that is, different value with ${{x_{\rm{B}}}\left[ {{k}} \right]}$.

Then we consider  the receiving  signal model  for which two  facts involved  should be clarified:
\begin{fact}1) The phenomenon that arbitrary  two codewords within ICC-$\left( {N_{\rm B},s} \right)$  must overlap  at least on $s$ code digits  does not mean that the total number of overlapping subcarriers always keeps  stable and constant;
2) The superimposed signals on those overlapping subcarriers  could  be employed for channel estimation and security enhancement whereas the subcarrier on which  only one signal exists  can  be utilized for, but limited to channel estimation.
\end{fact}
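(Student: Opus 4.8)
The plan is to treat the two clauses of the Fact separately, since they rest on different arguments: clause~1 is a purely combinatorial statement about the ICC codebook, whereas clause~2 is a reading of the receiving signal model in Eq.~(\ref{E.4}).

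For clause~1 I would start from Theorem~1, which fixes the weight of every codeword at $w=\frac{N_{\rm B}+s}{2}$. Let Bob's and Ava's codewords be two distinct columns ${\bf c}_i,{\bf c}_j$ of $\bf C$, and let $t=\left|\left\{m:c_{m,i}=c_{m,j}=1\right\}\right|$ be the actual number of overlapping (doubly-active) subcarriers. A Boolean inclusion--exclusion count gives that the number of positions carrying at least one signal equals $2w-t$, and since this cannot exceed $N_{\rm B}$ we recover $t\ge 2w-N_{\rm B}=s$, i.e.\ precisely the ICC lower bound of Definition~3; trivially $t\le w$. The content of the Fact is that the interval $[s,w]$ is non-degenerate whenever $w>s$, equivalently whenever $N_{\rm B}>s$: the ICC property pins only the minimum overlap, not the exact one.

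The first obstacle is to show that this interval is genuinely attained, so that $t$ is non-constant rather than merely unconstrained. Since the codebook consists of all length-$N_{\rm B}$, weight-$w$ binary vectors (this is what yields $C=\binom{N_{\rm B}}{w}$ after Theorem~1), any two weight-$w$ vectors whose supports meet in at least $s$ positions are legitimate codewords. I would exhibit, on one hand, a pair with $t=s$ (supports sharing exactly $s$ ones, feasible because $2w-s=N_{\rm B}$ lets the two residual supports of size $w-s$ fill the remaining positions disjointly), and on the other hand a pair sharing $w-1$ common ones, which has union size $2w-(w-1)=w+1\le N_{\rm B}$ and overlap $w-1\ge s$ precisely when $N_{\rm B}\ge s+2$. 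Hence for $N_{\rm B}\ge s+2$ two different codeword pairs realise two different overlap sizes; and since Bob's phase---hence his codeword---is randomised while Ava's is arbitrary, $t$ is a genuine random variable taking several values. This variability is exactly the instability source analysed in Section~\ref{RST}.

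For clause~2 I would read off Eq.~(\ref{E.4}) subcarrier by subcarrier using the three-case decomposition of Fig.~\ref{ICC}. On an overlapping subcarrier $j\in{\cal P}_s$ both $\mathrm{diag}\{{\bf x}_{\rm B}\}{\bf F}_{\rm L}{\bf h}_{\rm B}^{i}$ and $\mathrm{diag}\{{\bf x}_{\rm A}\}{\bf F}_{\rm L}{\bf h}_{\rm A}^{i}$ are present, so the received vector is a superposition of two independently-faded but spatially-correlated contributions; this superposition both admits Bob's channel for estimation once Ava's pilot has been identified, and---through the mismatch of the one-ring covariance eigenspaces ${\bf R}_1,{\bf R}_2$ of Eq.~(\ref{E.5})---carries the ``difference'' that the independence check exploits, giving the dual role. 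On a singly-active subcarrier only one channel term plus noise survives, which suffices for a clean estimate but offers no second signal against which to test independence, hence the ``limited to channel estimation'' clause. The remaining obstacle is that the \emph{quantitative} security-enhancement claim is not self-contained at this point: it relies on the spatial-correlation structure and the linear decorrelating estimator developed later in this section, so at this stage clause~2 can only be established qualitatively, with the precise IEP reduction deferred to the subsequent analysis.
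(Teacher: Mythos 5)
Your proposal is correct and matches the paper's own (largely implicit) justification: the paper states this Fact without any formal proof, and your clause-1 argument simply makes explicit the weight/overlap arithmetic already contained in the proof of Theorem~1 (for the all-weight-$w$ codebook with $w=\frac{N_{\rm B}+s}{2}$, inclusion--exclusion gives overlap $t\ge 2w-N_{\rm B}=s$ while $t$ remains free to range up to $w$, and your explicit pairs attaining $t=s$ and $t=w-1$ for $N_{\rm B}\ge s+2$ verify the non-constancy that Section~V later exploits as the source of instability), while your clause-2 per-subcarrier reading of the received-signal model is exactly how the paper distinguishes overlapping subcarriers (which feed both the LMMSE estimation of ${\widehat {\bf{h}}_{{\rm{B}},{\rm{L}}}},{\widehat {\bf{h}}_{{\rm{A}},{\rm{L}}}}$ and the angular-domain decision function $\Delta f$) from singly-active ones. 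Your honest deferral of the quantitative security-enhancement claim to the subsequent spatial-correlation analysis (Theorems~3--5) is likewise consistent with the paper, which treats that half of the Fact as a forward pointer rather than as something proved at the point of statement.
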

In order  to formulate the receiving signal, we choose two OFDM symbol time, i.e., ${k_0}$ and ${k_1}$, and $s,s\ge1 $ randomly-overlapping subcarriers. The randomness here  means the  random frequency positions of subcarriers.  The signals received  are  stacked  as the ${2 \times  {N_{\rm T}s}}$ matrix ${{\bf{Y}}_{\rm L}}$, equal to
\begin{equation}\label{E.26}
{{\bf{Y}}_{\rm L}} = {{\bf{X}}_{{\rm L}}}{{\bf{H}}_{{{{\rm L}}}}} + {{\bf{N}}_{\rm L}}
 \end{equation}
 where the ${2 \times 2}$ matrix  ${{\bf{X}}_{{\rm L}}}$ satisfies  ${{\bf{X}}_{\rm L}} = \left[ {\begin{array}{*{20}{c}}
{\bf{x}}_{{\rm{L,1}}}&{\bf{x}}_{{\rm{L,2}}}
\end{array}} \right]$.  The integrated ${2 \times  {N_{\rm T}s}}$  channel matrix ${{\bf{H}}_{{{{\rm L}}}}}$  satisfies    ${{\bf{H}}_{{{{\rm L}}}}} = {\left[ {\begin{array}{*{20}{c}}
{{\bf{h}} _{{\rm{B}},{\rm L}}^{\rm{T}}}&{{\bf{h}} _{{\rm{A}},{\rm L}}^{\rm{T}}}
\end{array}} \right]^{\rm{T}}}$ where  ${{\bf{h}}_{{\rm{B}},{\rm L}}} = \left[ {\begin{array}{*{20}{c}}
{{{\left( {{{\bf{F}}_{{\rm{L}}, s}}{\bf{h}}_{\rm{B}}^1} \right)}^{\rm{T}}}}&{, \ldots ,}&{{{\left( {{{\bf{F}}_{{\rm{L}}, s}}{\bf{h}}_{\rm{B}}^{{N_{\rm{T}}}}} \right)}^{\rm{T}}}}
\end{array}} \right]$ and ${{\bf{h}}_{{\rm{A}},{\rm L}}} = \left[ {\begin{array}{*{20}{c}}
{{{\left( {{{\bf{F}}_{{\rm{L}}, s}}{\bf{h}}_{\rm{A}}^1} \right)}^{\rm{T}}}}&{, \ldots ,}&{{{\left( {{{\bf{F}}_{{\rm{L}}, s}}{\bf{h}}_{\rm{A}}^{{N_{\rm{T}}}}} \right)}^{\rm{T}}}}
\end{array}} \right]$. ${{\bf{F}}_{{\rm{L}}, s}}$ is  the $s$-row matrix for which each index of $s$ rows belongs to the set ${\cal P}_{s}$.
${\bf{N}}_{\rm L}$ represents the ${2 \times  {N_{\rm T}s}}$ noise matrix  with  ${{\bf{N}}_{\rm L}} = {\left[ {\begin{array}{*{20}{c}}
{{\bf{w}}_{\rm L}^{\rm{T}}\left[ {{k_0}} \right]}&{{\bf{w}}_{\rm L}^{\rm{T}}\left[ {{k_1}} \right]}
\end{array}} \right]^{\rm{T}}}$ where ${{\bf{w}}_{\rm L}}\left[ k \right] = \left[ {\begin{array}{*{20}{c}}
{{{\bf{w}}_{s}^{{1^{\rm{T}}}}}\left[ k \right]}&{, \ldots ,}&{{{\bf{w}}_{s}^{{N_{\rm{T}}}^{\rm{T}}}}\left[ k \right]}
\end{array}} \right]$ for $k=k_{0}, k_{1}$.
\begin{remark}
Since the specific values of elements in ${\cal P}_{s}$ are randomly distributed between 1 and $N$, the ${{\bf{F}}_{{\rm{L}}, s}}$ is no longer  a semi-unitary matrix.
\end{remark}
 We formulate the sample covariance matrix   by ${{\bf{C}}_{{{\bf{Y}}_{\rm L}}}} = \frac{1}{{{N_{\rm{T}}}s}}{{\bf{Y}}_{\rm L}}{\bf{Y}}_{\rm L}^{\rm{H}}$ and  then could derive  the  asymptotically-optimal linear minimum mean square error (LMMSE) estimators as ${{\bf{W}}_{{\rm{B}},{\rm{L}}}} = {T_{\rm{B}}} {\bf{x}}_{{\rm{L,1}}}^{\rm{H}}{\bf{C}}_{{{\bf{Y}}_{\rm{L}}}}^{ - 1}$ and ${{\bf{W}}_{{\rm{A}},{\rm{L}}}} = {T_{\rm{A}}} {\bf{x}}_{{\rm{L,2}}}^{\rm{H}}{\bf{C}}_{{{\bf{Y}}_{\rm{L}}}}^{ - 1},$
  where ${T_{\rm{B}}} \buildrel \Delta \over = \frac{{{\rm{Tr}}\left( {{{\bf{R}}_{{1}}}} \right){\rm{Tr}}\left( {{{\bf{R}}_{\rm{F}}}} \right)}}{{{N_{\rm{T}}}s}}$ and ${T_{\rm{A}}} \buildrel \Delta \over = \frac{{{\rm{Tr}}\left( {{{\bf{R}}_{2}}} \right){\rm{Tr}}\left( {{{\bf{R}}_{\rm{F}}}} \right)}}{{{N_{\rm{T}}}s}}$. Here,  there exists ${\rm{Tr}}\left( {{{\bf{R}}_{{1}}}} \right) = {\rm{Tr}}\left( {{{\bf{R}}_{{2}}}} \right)=N_{\rm T}$ and therefore we could define ${T_{\rm{B}}} = {T_{\rm{A}}} = T$.

  Finally, the estimated versions of FS channels are  respectively  derived  as
  \begin{equation}\label{E.27.5}
   {\widehat {\bf{h}}_{{\rm{B}},{\rm{L}}}}{\rm{ = }}{{\bf{W}}_{{\rm{B}},{\rm{L}}}}{{\bf{Y}}_{\rm{L}}},  {\widehat {\bf{h}}_{{\rm{A}},{\rm{L}}}}{\rm{ = }}{{\bf{W}}_{{\rm{A}},{\rm{L}}}}{{\bf{Y}}_{\rm{L}}}
   \end{equation}
  The normalized mean square error (NMSE)  for  the two estimations are respectively defined by
$\varepsilon _{\rm{B}}^2 = \frac{{{\mathbb E}\left\{ {{{\left\| {{{\widehat {\bf{h}}}_{\rm{B,L}}} - {{\bf{h}}_{\rm{B,L}}}} \right\|}^2}} \right\}}}{{{N_{\rm{T}}}s}},\varepsilon _{\rm{A}}^2 = \frac{{{\mathbb E}\left\{ {{{\left\| {{{\widehat {\bf{h}}}_{\rm{A,L}}} - {{\bf{h}}_{\rm{A,L}}}} \right\|}^2}} \right\}}}{{{N_{\rm{T}}}s}}$.
Furthermore, the relationship between the ideal channels  with estimated versions can be given by
 ${{\bf{h}}_{\rm{B,L}}} = {\widehat {\bf{h}}_{\rm{B,L}}} + {\varepsilon _{\rm{B}}}{\bf{h}}$ and ${{\bf{h}}_{\rm{A,L}}} = {\widehat {\bf{h}}_{\rm{A,L}}} + {\varepsilon _{\rm{A}}}{\bf{h}}^{'}$
 where $ {\varepsilon _{\rm{B}}}{\bf{h}}$  is uncorrelated with ${{\bf{h}}_{\rm{B,L}}}$ and $ {\varepsilon _{\rm{A}}}{\bf{h}}^{'}$  is uncorrelated with ${{\bf{h}}_{\rm{A,L}}}$. Here, the entries of ${\bf{h}}$  and ${\bf{h}}^{'}$ are i.i.d zero-mean complex Gaussian vectors with each  element having unity variance.
  \begin{proposition}
  In the large-scale  array regime, there exists $\varepsilon _{\rm{B}}^2 = \varepsilon _{\rm{A}}^2$ at  high SNR .
    \end{proposition}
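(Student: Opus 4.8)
The plan is to exploit the exchangeability of Bob and Ava that is built into the estimator, and to show that the only data which could break this symmetry --- the correlation matrices $\mathbf{R}_1,\mathbf{R}_2$ and the training powers $\rho_{\rm B},\rho_{\rm A}$ --- enter the two NMSEs only through swap-invariant combinations in the large-array, high-SNR limit. I would first substitute ${\bf Y}_{\rm L}={\bf x}_{\rm L,1}{\bf h}_{\rm B,L}+{\bf x}_{\rm L,2}{\bf h}_{\rm A,L}+{\bf N}_{\rm L}$ into $\widehat{\bf h}_{\rm B,L}=T{\bf x}_{\rm L,1}^{\rm H}{\bf C}_{{\bf Y}_{\rm L}}^{-1}{\bf Y}_{\rm L}$ and write the error as
\begin{equation}
\widehat{\bf h}_{\rm B,L}-{\bf h}_{\rm B,L}=(a_{11}-1){\bf h}_{\rm B,L}+a_{12}{\bf h}_{\rm A,L}+T{\bf x}_{\rm L,1}^{\rm H}{\bf C}_{{\bf Y}_{\rm L}}^{-1}{\bf N}_{\rm L},
\end{equation}
where $a_{ij}=T{\bf x}_{{\rm L},i}^{\rm H}{\bf C}_{{\bf Y}_{\rm L}}^{-1}{\bf x}_{{\rm L},j}$ are the entries of the Hermitian $2\times2$ matrix ${\bf A}=T{\bf X}_{\rm L}^{\rm H}{\bf C}_{{\bf Y}_{\rm L}}^{-1}{\bf X}_{\rm L}$, and the analogous expansion for Ava has coefficients $a_{21}$ and $a_{22}-1$. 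Hermiticity gives $a_{21}=a_{12}^{*}$ and $a_{11},a_{22}\in\mathbb{R}$, a fact I invoke at the end.

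Next I would let $N_{\rm T}\to\infty$ and replace the random normalized inner products by their deterministic equivalents. By a law-of-large-numbers argument over the $N_{\rm T}$ per-antenna blocks, $\tfrac{1}{N_{\rm T}s}{\bf h}_{\rm B,L}{\bf h}_{\rm B,L}^{\rm H}\to\beta_{\rm B}$, $\tfrac{1}{N_{\rm T}s}{\bf h}_{\rm A,L}{\bf h}_{\rm A,L}^{\rm H}\to\beta_{\rm A}$, the cross term $\tfrac{1}{N_{\rm T}s}{\bf h}_{\rm B,L}{\bf h}_{\rm A,L}^{\rm H}\to0$ (Bob's and Ava's channels being independent and zero-mean), and $\tfrac{1}{N_{\rm T}s}{\bf N}_{\rm L}{\bf N}_{\rm L}^{\rm H}\to\sigma^2{\bf I}_2$, so that ${\bf C}_{{\bf Y}_{\rm L}}\to\beta_{\rm B}{\bf x}_{\rm L,1}{\bf x}_{\rm L,1}^{\rm H}+\beta_{\rm A}{\bf x}_{\rm L,2}{\bf x}_{\rm L,2}^{\rm H}+\sigma^2{\bf I}_2$. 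The crucial computation is $\beta_{\rm B}=\tfrac{1}{N_{\rm T}s}\,{\rm Tr}({\bf R}_1)\,{\rm Tr}({\bf R}_{\rm F}{\bf D}_{\rm B})$ with ${\bf D}_{\rm B}={\rm diag}(\sigma_{{\rm B},l}^2)$, and likewise for $\beta_{\rm A}$. Because the PDP is uniform and normalized (${\bf D}_{\rm B}={\bf D}_{\rm A}=\tfrac1L{\bf I}_L$) and ${\rm Tr}({\bf R}_1)={\rm Tr}({\bf R}_2)=N_{\rm T}$, the two collapse to a common value $\beta_{\rm B}=\beta_{\rm A}=:\beta$. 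This is the heart of the proposition: the asymmetry of the scattering geometry ($\mathbf{R}_1\ne\mathbf{R}_2$) is invisible to the NMSE because only the traces survive.

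With the cross term gone, the NMSE separates into a sum of squared magnitudes, and substituting the deterministic equivalents yields
\begin{equation}
\varepsilon_{\rm B}^2\to\beta\big(|a_{11}-1|^2+|a_{12}|^2\big)+T^2\sigma^2{\bf x}_{\rm L,1}^{\rm H}{\bf C}_{{\bf Y}_{\rm L}}^{-2}{\bf x}_{\rm L,1},
\end{equation}
with the mirror expression for $\varepsilon_{\rm A}^2$ obtained by the substitution $a_{11},a_{12},{\bf x}_{\rm L,1}\mapsto a_{22},a_{21},{\bf x}_{\rm L,2}$. Letting ${\rm SNR}\to\infty$ ($\sigma^2\to0$) kills the last term, and since $|a_{21}|=|a_{12}|$ the equality $\varepsilon_{\rm B}^2=\varepsilon_{\rm A}^2$ reduces to $|a_{11}-1|=|a_{22}-1|$. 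For linearly independent (identifiable) signatures, ${\bf C}_{{\bf Y}_{\rm L}}\to\beta{\bf X}_{\rm L}{\bf X}_{\rm L}^{\rm H}$ gives ${\bf A}\to(T/\beta){\bf I}_2$, so $a_{11}=a_{22}$ and $a_{12}=0$ automatically. In the confusing (parallel-signature) regime treated in this section, a Sherman--Morrison evaluation of ${\bf C}_{{\bf Y}_{\rm L}}^{-1}$ gives $a_{11}=T\rho_{\rm B}/[\beta(\rho_{\rm B}+\rho_{\rm A})]$ and $a_{22}=T\rho_{\rm A}/[\beta(\rho_{\rm B}+\rho_{\rm A})]$, which coincide under the equal-power condition $\rho_{\rm B}=\rho_{\rm A}$ characterizing a maximally-stealthy PTS attacker; either way $\varepsilon_{\rm B}^2=\varepsilon_{\rm A}^2$.

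The step I expect to be the main obstacle is the deterministic-equivalent justification: I must argue that ${\bf C}_{{\bf Y}_{\rm L}}$ and the quadratic forms $a_{ij}$ concentrate around their limits and decorrelate from ${\bf h}_{\rm B,L},{\bf h}_{\rm A,L}$ so that the expectation factorizes, which calls for a careful large-array (random-matrix) argument rather than a naive LLN, since the per-block summands are correlated through ${\bf R}_1,{\bf R}_2$. The second delicate point is the high-SNR limit in the parallel-signature case, where ${\bf X}_{\rm L}{\bf X}_{\rm L}^{\rm H}$ is rank-deficient and ${\bf C}_{{\bf Y}_{\rm L}}^{-1}$ diverges as $\sigma^2\to0$; there the finite, common error floor must be extracted via the Sherman--Morrison expansion above rather than by naively setting $\sigma^2=0$.
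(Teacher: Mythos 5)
Your proposal is correct and follows essentially the same route as the paper's proof: both replace ${\bf C}_{{\bf Y}_{\rm L}}$ by its large-$N_{\rm T}$ deterministic equivalent $T{\bf X}_{\rm L}{\bf X}_{\rm L}^{\rm H}+\sigma^2{\bf I}_2$ via a random-matrix trace lemma (the paper invokes Lemma B.26 of Bai--Silverstein), exploit ${\rm Tr}({\bf R}_1)={\rm Tr}({\bf R}_2)=N_{\rm T}$ so that the scattering asymmetry ${\bf R}_1\ne{\bf R}_2$ drops out of the limit, and then evaluate the resulting $2\times 2$ quadratic forms at high SNR. If anything, your Sherman--Morrison treatment of the parallel-signature case is more careful than the paper's, which tersely writes $\left({\bf X}_{\rm L}{\bf X}_{\rm L}^{\rm H}\right)^{-1}$ even though that matrix is singular precisely in the confusing case where Ava mimics Bob's PIP, and your observation that equality there additionally requires $\rho_{\rm B}=\rho_{\rm A}$ makes explicit a condition the paper leaves implicit (and quietly adopts in its numerical section).
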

 \begin{IEEEproof}
See proof in Appendix~\ref{Proposition3}
 \end{IEEEproof}
 \begin{remark}
   When no identification error happens, Alice only utilizes the identified pilots of Bob to derive  ${\bf{x}}_{{\rm{L,1}}}$ and finally gets ${\widehat {\bf{h}}_{{\rm{B}},{\rm{L}}}}$.
  \end{remark}

\subsection{Security Enhancement: Exploiting Spatial Correlation}
 We are now ready   to answer Question 2.  Security enhancement in this section means reducing IEP further. To this end,  we should focus on the case where Bob gets two confusing pilots, i.e, ${\bf{x}}_{{\rm{L,1}}}$ and ${\bf{x}}_{{\rm{L,2}}}$ and two confusing estimated channels, i.e., ${\widehat {\bf{h}}_{{\rm{B}},{\rm{L}}}}$ and ${\widehat {\bf{h}}_{{\rm{A}},{\rm{L}}}}$. Even in this case, the identification  error will occur only when  Ava keeps the same frequency-domain and  time-domain PIP as Bob, which is proved in Remark 2. In this section, we will reduce the probability of this happening  in an independent dimension, i.e., the angular domain.
\subsubsection{Angular Domain Identification}
Basically, the process of identification can be modelled  as a decision process between two hypotheses:
 \begin{equation}\label{E.33}
 \begin{array}{l}
{{\cal H}_{\rm{0}}}:{\widehat {\bf{h}}_{{\rm{B}},{\rm{L}}}} \to Bob, ~
{\cal {H}_{\rm{1}}}:{\widehat {\bf{h}}_{{\rm{A}},{\rm{L}}}} \to Bob
\end{array}
\end{equation}
 For the sake of simplicity,  we define several useful  eigenvalue decompositions, including
 $
{{\bf{R}}_i} = {{\bf{U}}_i}{{\bf{\Lambda }}_i}{\bf{U}}_i^{\rm{H}}$,
${\overline {\bf{R}} _i} = {{\bf{U}}_i}{\overline {\bf{\Lambda }} _i}{\bf{U}}_i^{\rm{H}}$,  ${{\bf{R}}_{\rm{F}}} = {{\bf{V}}_{\rm{f}}}{{\bf{\Sigma }}_{\rm{f}}}{\bf{V}}_{\rm{f}}^{\rm{H}}$ and  ${\overline {\bf{R}} _{\rm{F}}} = {{\bf{V}}_{\rm{f}}}{\overline {\bf{\Sigma }} _{\rm{f}}}{\bf{V}}_{\rm{f}}^{\rm{H}}$. Here, ${\bf{U}}_i$ and ${{\bf{V}}_{\rm{f}}}$ denote the eigenvector matrices and  eigenvalue matrices satisfy ${{\bf{\Lambda }}_i} = {\rm{diag}}\left\{ {{{\left[ {\begin{array}{*{20}{c}}
{{\lambda _{i,1}}}& \cdots &{{\lambda _{i,{\rho _i}}}}&0& \cdots &0
\end{array}} \right]}^{\rm{T}}}} \right\}$, ${\overline {\bf{\Lambda }} _i} = {\rm{diag}}\left\{ {{{\left[ {\begin{array}{*{20}{c}}
{\lambda _{i,1}^{ - 1}}& \cdots &{\lambda _{i,{\rho _i}}^{ - 1}}&0& \cdots &0
\end{array}} \right]}^{\rm{T}}}} \right\}$, ${{\bf{\Sigma }}_{\rm{f}}} = {\rm{diag}}\left\{ {{{\left[ {\begin{array}{*{20}{c}}
{{\lambda _{{\rm{f}},1}}}& \cdots &{{\lambda _{{\rm{f}},{\rho _{\rm{f}}}}}}&0& \cdots &0
\end{array}} \right]}^{\rm{T}}}} \right\}$,  ${\overline {\bf{\Sigma }} _{\rm{f}}} = {\rm{diag}}\left\{ {{{\left[ {\begin{array}{*{20}{c}}
{\lambda _{{\rm{f}},1}^{ - 1}}& \cdots &{\lambda _{{\rm{f}},{\rho _{\rm{f}}}}^{ - 1}}&0& \cdots &0
\end{array}} \right]}^{\rm{T}}}} \right\}$.

We build up an error decision function as
     \begin{equation}\label{E.34}
\Delta f  \buildrel \Delta \over = f\left( {{{\widehat {\bf{h}}}_{{\rm{B}},{\rm{L}}}}} \right) - f\left( {{{\widehat {\bf{h}}}_{{\rm{A}},{\rm{L}}}}} \right)
  \end{equation}
  where   $f\left( {\bf{r}} \right) = {\bf{r}}\left( {{\overline {\bf{R}} _1} \otimes {\overline {\bf{R}} _{\rm{F}}}} \right){{\bf{r}}^{\rm{H}}}$.   Then we have the following theorem to identify two hypotheses.
\begin{theorem}
When ${N_{\rm{T}}} \to \infty $, the error decision function can be simplified as:
 \begin{equation}\label{E.35}
  \Delta f  = L\left\{ {{\rho _{\rm{1}}} - {\rm{Tr}}\left( {{{\bf{R}}_2}{{\overline {\bf{R}} }_1}} \right)} \right\}
  \end{equation}
 \end{theorem}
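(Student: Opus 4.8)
The plan is to evaluate the two quadratic forms $f(\widehat{\bf h}_{{\rm B},{\rm L}})$ and $f(\widehat{\bf h}_{{\rm A},{\rm L}})$ separately and then subtract, exploiting the Kronecker structure of $\overline{\bf R}_1\otimes\overline{\bf R}_{\rm F}$ together with a large-array concentration argument. First I would substitute the decompositions ${\bf h}_{{\rm B},{\rm L}} = \widehat{\bf h}_{{\rm B},{\rm L}} + \varepsilon_{\rm B}{\bf h}$ and ${\bf h}_{{\rm A},{\rm L}} = \widehat{\bf h}_{{\rm A},{\rm L}} + \varepsilon_{\rm A}{\bf h}'$ to re-express each estimate as the true channel plus an uncorrelated error. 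Expanding $f(\widehat{\bf h}_{{\rm B},{\rm L}}) = f({\bf h}_{{\rm B},{\rm L}} - \varepsilon_{\rm B}{\bf h})$ produces a matched term $f({\bf h}_{{\rm B},{\rm L}})$, a cross term that vanishes in expectation because $\varepsilon_{\rm B}{\bf h}$ is uncorrelated with ${\bf h}_{{\rm B},{\rm L}}$, and an error term $\varepsilon_{\rm B}^2 f({\bf h})$; the Ava branch yields the analogous $\varepsilon_{\rm A}^2 f({\bf h}')$. Since ${\bf h}$ and ${\bf h}'$ are i.i.d.\ unit-variance Gaussian vectors, $\mathbb{E}\{f({\bf h})\} = \mathbb{E}\{f({\bf h}')\} = {\rm Tr}(\overline{\bf R}_1\otimes\overline{\bf R}_{\rm F})$, and Proposition 3 supplies $\varepsilon_{\rm B}^2 = \varepsilon_{\rm A}^2$, so these error contributions are identical and cancel in the difference. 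This reduces $\Delta f$ to $\mathbb{E}\{f({\bf h}_{{\rm B},{\rm L}})\} - \mathbb{E}\{f({\bf h}_{{\rm A},{\rm L}})\}$.

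Next I would compute each expected quadratic form as a trace against the channel covariance. Because the CIRs are spatially correlated across antennas through ${\bf R}_k$ and carry a normalized PDP, the stacked frequency channel ${\bf h}_{{\rm B},{\rm L}}$ has covariance of Kronecker form ${\bf R}_1\otimes{\bf R}_{\rm F}$ up to the PDP normalization, whereas ${\bf h}_{{\rm A},{\rm L}}$ shares the same frequency factor ${\bf R}_{\rm F}$ but carries the antenna factor ${\bf R}_2$. Using $\mathbb{E}\{{\bf r}{\bf M}{\bf r}^{\rm H}\} = {\rm Tr}({\bf M}\,\mathbb{E}\{{\bf r}^{\rm H}{\bf r}\})$ together with the mixed-product identity $({\bf A}_1\otimes{\bf B}_1)({\bf A}_2\otimes{\bf B}_2) = ({\bf A}_1{\bf A}_2)\otimes({\bf B}_1{\bf B}_2)$ and ${\rm Tr}({\bf A}\otimes{\bf B}) = {\rm Tr}({\bf A}){\rm Tr}({\bf B})$, the antenna and frequency parts factor: $\mathbb{E}\{f({\bf h}_{{\rm B},{\rm L}})\}$ splits into ${\rm Tr}(\overline{\bf R}_1{\bf R}_1)$ times a frequency factor, and $\mathbb{E}\{f({\bf h}_{{\rm A},{\rm L}})\}$ into ${\rm Tr}(\overline{\bf R}_1{\bf R}_2)$ times the same frequency factor.

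I would then evaluate the factors from the eigendecompositions. Since $\overline{\bf R}_1$ is the pseudoinverse of ${\bf R}_1$, the product $\overline{\bf R}_1{\bf R}_1 = {\bf U}_1(\overline{\bf \Lambda}_1{\bf \Lambda}_1){\bf U}_1^{\rm H}$ is the orthogonal projector onto the range of ${\bf R}_1$, whose trace is exactly the rank $\rho_1$; the mismatched product gives ${\rm Tr}(\overline{\bf R}_1{\bf R}_2) = {\rm Tr}({\bf R}_2\overline{\bf R}_1)$ by cyclicity, which measures the leakage of Ava's covariance into Bob's signal subspace. The frequency factor ${\rm Tr}(\overline{\bf R}_{\rm F}{\bf R}_{\rm F}) = \rho_{\rm f}$, combined with the normalized-PDP scaling and the LMMSE gain $T = {\rm Tr}({\bf R}_1){\rm Tr}({\bf R}_{\rm F})/(N_{\rm T}s) = L$, collapses to the scalar $L$. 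Subtracting the two expectations then yields $\Delta f = L\{\rho_1 - {\rm Tr}({\bf R}_2\overline{\bf R}_1)\}$, which is the claim and matches the narrative that $\Delta f$ degenerates to zero precisely when ${\bf R}_2$ lies in the range of ${\bf R}_1$, i.e.\ when Ava shares Bob's mean AoA.

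I expect the main obstacle to be the large-array limit itself: I must show that the random quadratic forms concentrate on their deterministic means as $N_{\rm T}\to\infty$, i.e.\ that the fluctuations of ${\bf h}_{{\rm B},{\rm L}}(\overline{\bf R}_1\otimes\overline{\bf R}_{\rm F}){\bf h}_{{\rm B},{\rm L}}^{\rm H}$ about its trace vanish, and simultaneously that the LMMSE estimates converge to the true channels so the replacement $\widehat{\bf h}\to{\bf h}$ (plus a cancelling error) is legitimate at high SNR. The delicate bookkeeping is tracking the normalization constants—the PDP factor $1/L$, the trace ${\rm Tr}({\bf R}_{\rm F})$, the gain $T$, and the rank $\rho_{\rm f}$—so that they combine to exactly $L$ rather than some other multiple; this is where Proposition 3 and the identity $T_{\rm B}=T_{\rm A}=T$ do the essential work.
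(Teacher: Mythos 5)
Your proposal is correct and follows essentially the same route as the paper's proof: the same substitution $\widehat{\bf h}_{{\rm B},{\rm L}} = {\bf h}_{{\rm B},{\rm L}} - \varepsilon_{\rm B}{\bf h}$, the same three-term expansion with the cross term vanishing and the error terms $\varepsilon^2\,{\rm Tr}\left( {\overline {\bf{R}} _1} \otimes {\overline {\bf{R}} _{\rm{F}}} \right)$ cancelling via Proposition 3, and the same trace evaluations giving $\rho_1 L$ for the matched branch and $L\,{\rm{Tr}}\left( {{{\bf{R}}_2}{{\overline {\bf{R}} }_1}} \right)$ for the mismatched one; the concentration step you flag as the main obstacle is precisely what the paper dispatches by invoking Lemma B.26 of Bai--Silverstein, which gives almost-sure convergence of the normalized quadratic forms to their traces. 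One minor imprecision: the factor $L$ comes only from the frequency-factor rank ${\rho _{{\rm{f}}}} = \min \left\{ {s,L} \right\} = L$ (for $s \ge L$), not from the LMMSE gain $T$, which enters solely through $\varepsilon_{\rm B}^2 = \varepsilon_{\rm A}^2$ and therefore cancels.
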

 \begin{IEEEproof}
See proof in Appendix~\ref{Theorem3}
\end{IEEEproof}

The further simplification of above equation requires exploiting the relationship between ${{\bf{R}}_1}$ and ${{\bf{R}}_2}$.
Backing to the Eq.~(\ref{E.35}), we know that  the trace function satisfies  ${\rm{Tr}}\left( {{{\bf{R}}_2}{{\overline {\bf{R}} }_1}} \right) \le {\rm{Tr}}\left( {{{\bf{\Lambda }}_2}{\bf{U}}_2^{\rm{H}}{{\bf{U}}_1}{{\overline {\bf{\Lambda }} }_1}} \right)={\rm{Tr}}\left( {{{\bf{\Lambda }}_{2,{\rm{p}}}}\overline {\bf{U}} _2^{\rm{H}}{{\overline {\bf{U}} }_1}{{\overline {\bf{\Lambda }} }_{1,{\rm{p}}}}} \right)$ where  ${{\bf{\Lambda }}_{i,{\rm{p}}}} $ and ${\overline {\bf{\Lambda }} _{i,{\rm{p}}}}$ are respectively defined by ${{\bf{\Lambda }}_{i,{\rm{p}}}} = {\rm{diag}}\left\{ {{{\left[ {\begin{array}{*{20}{c}}
{{\lambda _{i,1}}}& \cdots &{{\lambda _{i,{\rho _i}}}}
\end{array}} \right]}^{\rm{T}}}} \right\}$ and ${\overline {\bf{\Lambda }} _{i,{\rm{p}}}} = {\rm{diag}}\left\{ {{{\left[ {\begin{array}{*{20}{c}}
{\lambda _{i,1}^{ - 1}}& \cdots &{\lambda _{i,{\rho _i}}^{ - 1}}
\end{array}} \right]}^{\rm{T}}}} \right\}$. The $N_{{\rm T}} \times {\rho _{{i}}}$ matrix ${{{\overline {\bf{U}} }_i}}$ denotes  the tall unitary matrix of  channel covariance eigenvectors ${{\bf{U}}_i}$.
As  discussed in~\cite{Adhikary},  ${\overline {\bf{U}} _2^{\rm{H}}{{\overline {\bf{U}} }_1}}$ can be approximated using  ${\bf{F}}_{{{\cal S}_2}}^{\rm{H}}{{\bf{F}}_{{{\cal S}_1}}}$. We define  ${{\cal S}_1} \cap {{\cal S}_2} = {{\cal S}_3}$ where ${{\cal S}_i}$ denotes the support of ${S}_{i}\left( x \right)$,   a uniformly-bounded absolutely-integrable function satisfying  ${S_i}\left( x \right) = \frac{1}{{2\Delta }}\sum\limits_{0 \in \left[ {D\sin \left( {{\theta _i} - \Delta } \right) + x,D\sin \left( {{\theta _i} + \Delta } \right) + x} \right]} {\frac{1}{{\sqrt {{D^2} - {x^2}} }}}$,  over $x \in \left[ { - \frac{1}{2},\frac{1}{2}} \right]$. There exists ${{\bf{F}}_{{{\cal S}_i}}} = \left( {{{\bf{f}}_n}:n \in {{\cal J}_{{{\cal S}_i}}}} \right)$ where
${{\cal J}_{{{\cal S}_i}}} = \left\{ {n,\left[ {{n \mathord{\left/
 {\vphantom {n {{N_{\rm{T}}}}}} \right.
 \kern-\nulldelimiterspace} {{N_{\rm{T}}}}}} \right] \in {{\cal S}_i},n = 0, \ldots ,{N_{\rm{T}}} - 1} \right\}$.
We  then discuss the influence of  ${{\cal S}_3}$ on ${\rm{Tr}}\left( {{{\bf{R}}_2}{{\overline {\bf{R}} }_1}} \right)$.  When ${{\cal S}_3} = \emptyset $, we can have ${\rm{Tr}}\left( {{{\bf{R}}_2}{{\overline {\bf{R}} }_1}} \right)=0$. When ${{\cal S}_3}\ne \emptyset$, we assume ${{\cal S}_3} = {\cal P}_{a}$ and have
  \begin{equation}\label{E.39}
{\rm{Tr}}\left( {{{\bf{\Lambda }}_{2,{\rm{p}}}}\overline {\bf{U}} _2^{\rm{H}}{{\overline {\bf{U}} }_1}{{\overline {\bf{\Lambda }} }_{1,{\rm{p}}}}} \right) \le \sum\limits_{j = 1}^a {\frac{{{\lambda _{2,{i_j}}}}}{{{\lambda _{1,{i_j}}}}}}
  \end{equation}
 This is because  the eigenvectors labeled by the indexes out of  the interacted set ${{\cal S}_3}$ are mutually orthogonal~\cite{Adhikary}.
  \begin{theorem}
 When ${N_{\rm{T}}} \to \infty $,  there always exists $\sum\limits_{j = 1}^a {\frac{{{\lambda _{2,{i_j}}}}}{{{\lambda _{1,{i_j}}}}}} =  a$.  If ${\theta _1} \ne {\theta _2}$,   there must exist  $a< {\rho _1} $ and $ \Delta f > 0$. Otherwise if $ {\theta _1} = {\theta _2}$, there must exist  $a={\rho _1}$ and $ \Delta f =0$.
\end{theorem}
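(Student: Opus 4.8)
The plan is to start from the simplified error decision function (\ref{E.35}), $\Delta f = L\{\rho_1 - {\rm Tr}({\bf R}_2 {\overline{\bf R}}_1)\}$, together with the bounding chain that culminates in (\ref{E.39}). The statement then reduces to two sub-tasks: (i) evaluating the ratio sum $\sum_{j=1}^a \lambda_{2,i_j}/\lambda_{1,i_j}$ and showing that (\ref{E.39}) is asymptotically tight, so that ${\rm Tr}({\bf R}_2{\overline{\bf R}}_1) = a$; and (ii) comparing the overlap count $a = |{\cal P}_a|$ against the rank $\rho_1$.

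For sub-task (i) I would invoke the asymptotic Toeplitz (Szeg\H{o}-type) eigenstructure already exploited in~\cite{Adhikary}: as $N_{\rm T} \to \infty$, the eigenvectors collected in ${\overline{\bf U}}_i$ of the one-ring covariance (\ref{E.5}) converge to the DFT columns indexed by ${\cal J}_{{\cal S}_i}$, while the associated eigenvalues $\lambda_{i,n}$ converge to the samples of the angular power density $S_i(\cdot)$ on the uniform grid $n/N_{\rm T}$. The key observation is that, on its support, this density equals $\frac{1}{2\Delta\sqrt{D^2 - x^2}}$, which depends only on $x$ and not on the mean AoA $\theta_i$. Hence for every grid index $i_j$ lying in the intersection ${\cal S}_3 = {\cal S}_1 \cap {\cal S}_2$, the two densities sample the same value, $\lambda_{1,i_j}$ and $\lambda_{2,i_j}$ tend to a common limit, and $\lambda_{2,i_j}/\lambda_{1,i_j} \to 1$. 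Summing over the $a$ intersection indices yields $\sum_{j=1}^a \lambda_{2,i_j}/\lambda_{1,i_j} = a$; together with the orthogonality of the off-intersection eigenvectors noted just before the theorem, the inequality (\ref{E.39}) becomes an equality, giving ${\rm Tr}({\bf R}_2{\overline{\bf R}}_1) = a$.

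For sub-task (ii) I would argue geometrically. When $\theta_1 = \theta_2$, the two angular supports coincide, so ${\cal S}_3 = {\cal S}_1$ and $a = |{\cal J}_{{\cal S}_1}| = \rho_1$; moreover ${\bf R}_1 = {\bf R}_2$ makes ${\bf R}_2{\overline{\bf R}}_1 = {\bf R}_1{\overline{\bf R}}_1$ a projector of trace $\rho_1$, whence $\Delta f = L(\rho_1 - \rho_1) = 0$. When $\theta_1 \ne \theta_2$, the angular intervals $[\theta_i - \Delta, \theta_i + \Delta]$ have equal width but distinct centers, so neither contains the other and ${\cal S}_3 \subsetneq {\cal S}_1$ strictly; counting grid points then gives $a < \rho_1$ for large $N_{\rm T}$, so that $\Delta f = L(\rho_1 - a) > 0$.

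The main obstacle I anticipate lies in sub-task (i): making the asymptotic eigenstructure rigorous and, in particular, upgrading the generic Szeg\H{o} convergence to the sharp statement that $\lambda_{2,i_j}/\lambda_{1,i_j} \to 1$ exactly rather than merely staying bounded. This hinges precisely on the AoA-independence of the density value on the support, so I would isolate and prove that identity first, then pair each intersection grid index of ${\bf R}_1$ with the comparable eigenvalue of ${\bf R}_2$. A secondary point needing care is confirming the strict containment ${\cal S}_3 \subsetneq {\cal S}_1$, hence $a < \rho_1$, which follows from Assumption~1 guaranteeing a nonempty but proper overlap whenever $\theta_1 \ne \theta_2$.
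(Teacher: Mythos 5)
Your proposal is correct and takes essentially the same route as the paper's own proof: both invoke the asymptotic approximation from~\cite{Adhikary} of the eigenvalues of ${{\bf{R}}_i}$ by the samples $S_i\left( {\left[ {{n \mathord{\left/ {\vphantom {n {{N_{\rm{T}}}}}} \right. \kern-\nulldelimiterspace} {{N_{\rm{T}}}}}} \right]} \right)$, conclude that eigenvalues at common grid locations coincide so each ratio ${{{\lambda _{2,{i_j}}}} \mathord{\left/ {\vphantom {{{\lambda _{2,{i_j}}}} {{\lambda _{1,{i_j}}}}}} \right. \kern-\nulldelimiterspace} {{\lambda _{1,{i_j}}}}}$ tends to one, deduce $a < {\rho _1}$ from the strictly proper overlap of the two equal-width AoA intervals when ${\theta _1} \ne {\theta _2}$, and settle the case ${\theta _1} = {\theta _2}$ via ${{\bf{R}}_1} = {{\bf{R}}_2}$. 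Your explicit isolation of the identity $S_i\left( x \right) = \frac{1}{{2\Delta \sqrt {{D^2} - {x^2}} }}$ on the support, independent of ${\theta _i}$, merely spells out the step the paper leaves implicit when asserting that overlapping locations carry the same eigenvalue.
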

\begin{IEEEproof}
See proof in Appendix~\ref{Theorem4}.
 \end{IEEEproof}
 Thus far, we can know that  Ava is restricted on  a line lying the center of  clusters surrounding  Bob,  otherwise, its attack is invalidated, which  shows another potential of angular domain identification in countering attack.
 \begin{algorithm}[!t]\label{CEAIE}
\caption{:Channel  Estimation and  Security  Enhancement}
\begin{algorithmic}[1]
{
\STATE  Identify whether or not  PTJ attack happens using  the codewords decoded as shown in Fig. 5.
\IF {PTJ attack happens}
\STATE Derive ${\bf{x}}_{{\rm{L,1}}}$, ${\bf{x}}_{{\rm{L,2}}}$,  and  ${\widehat {\bf{h}}_{{\rm{B}},{\rm{L}}}}$, ${\widehat {\bf{h}}_{{\rm{A}},{\rm{L}}}}$  using Eq.~(\ref{E.27.5}).
\ELSE
\STATE   Apply LS estimator to derive ${\widehat {\bf{h}}_{{\rm{B}},{\rm{L}}}}$.
\ENDIF
\IF {No PTJ attack  happens}
\STATE Directly derive CIR estimation for Bob using ${\widehat {\bf{h}}_{{\rm{B}},{\rm{L}}}}$.
\ELSE
\STATE Detect if the confusing case occurs.
\IF {No confusing case happens}
\STATE  Use ${\widehat {\bf{h}}_{{\rm{B}},{\rm{L}}}}$  for CIR estimation.
\ELSE
\STATE Calculate $\Delta f $ using Eq.~(\ref{E.34}) and Theorem 4.
\IF {$\Delta f>0 $}
\STATE ${\widehat {\bf{h}}_{{\rm{B}},{\rm{L}}}}$ is used  for Bob's CIR estimation.
\ELSIF {$\Delta f<0 $}
\STATE  ${\widehat {\bf{h}}_{{\rm{A}},{\rm{L}}}}$  is used  for Bob's CIR estimation.
\ELSIF{$\Delta f=0 $}
\STATE Identification error happens.
\ENDIF
\ENDIF
\ENDIF
}
\end{algorithmic}
\end{algorithm}

 \subsubsection{ Combine Angular  Domain with Code Domain to Enhance Security }
 Since the pilot identification breaks down  iff  ${\theta _1} = {\theta _2}$, we have the following theorem:
  \begin{theorem}
 Under the assumption of  mean AoA obeying  CPD, the IEP ${{P}_{\rm{I}}} $ is equal to zero.
 Under the assumption of  mean AoA obeying  DPD, for instance, uniform distribution with interval length $K$,   the IEP ${{P}_{\rm{I}}} $ is updated  to be $\frac{{{P_{\rm{I}}}}}{K}$.
  \end{theorem}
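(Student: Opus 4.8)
The plan is to read the enhanced identification as a two-stage test and to factor its error probability into a code-domain contribution, already quantified by Theorem~2, and an angular-domain contribution governed by Theorem~4. The key enabling fact is that Theorem~4 has reduced the angular decision to a clean dichotomy: the error decision function obeys $\Delta f > 0$ (identification succeeds) whenever $\theta_1 \neq \theta_2$, and collapses to $\Delta f = 0$ (identification breaks down) exactly when $\theta_1 = \theta_2$. Hence, after the angular-domain check, the only surviving identification error is confined to the event $\{\theta_1 = \theta_2\}$.

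First I would formalize the combined error event. By Remark~2 the code-domain confusion, whose error probability is the $P_{\rm I}$ of Eq.~(\ref{E.25}), arises precisely when Ava matches both the frequency- and time-domain PIP of Bob. The angular-domain test of Theorem~4 is then applied on top of this event and eliminates every such error except those with $\theta_1 = \theta_2$. Since the AoA geometry and the pilot-phase randomization driving the ICC weight constraint are physically decoupled, the two gates act on independent sources of randomness, so the enhanced IEP factorizes as $P_{\rm I}\cdot\Pr(\theta_1 = \theta_2)$.

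Next I would evaluate $\Pr(\theta_1 = \theta_2)$ under the two models of Assumption~2. Under CPD the mean AoAs admit a joint density, so $\{\theta_1 = \theta_2\}$ is the diagonal of the support, a set of Lebesgue measure zero; thus $\Pr(\theta_1 = \theta_2) = 0$ and the enhanced IEP vanishes. Under DPD with $\theta_i$ uniform over $K$ points, independence gives $\Pr(\theta_1 = \theta_2) = \sum_{k=1}^{K}\Pr(\theta_1 = k)\Pr(\theta_2 = k) = K\cdot(1/K)^2 = 1/K$, whence the enhanced IEP equals $P_{\rm I}/K$, exactly as claimed.

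The main obstacle I anticipate is rigorously justifying the multiplicative factorization, i.e., that conditioning on $\{\theta_1 = \theta_2\}$ leaves the code-domain error probability $P_{\rm I}$ unchanged. This requires arguing that the code-domain confusion, which is a combinatorial consequence of the random pilot phases and the ICC weight constraint of Theorem~1, carries no information about the spatial angular geometry, so the two events are genuinely independent. A secondary, more measure-theoretic point is to confirm that the CPD is non-atomic, so that no point mass sits on the diagonal $\theta_1 = \theta_2$; the DPD result then follows as the atomic counterpart with uniform mass $1/K$ on each of the $K$ support points.
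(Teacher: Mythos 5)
Your proposal is correct and follows essentially the same route as the paper's own (very brief) proof: both treat the code domain and the angular domain as independent gates, so the enhanced IEP factorizes as $P_{\rm I}\cdot\Pr\left(\theta_1=\theta_2\right)$, which vanishes under CPD (measure-zero diagonal) and equals $P_{\rm I}/K$ under the uniform DPD via Theorem~4's dichotomy that identification fails iff $\theta_1=\theta_2$. If anything, your write-up is more careful than the paper's one-line argument, since you explicitly compute the collision probability and flag the independence of the two randomness sources that the paper merely asserts.
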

 The proof is institutive since we consider two independent  dimensions, that is, angular domain and code domain,   to  reduce IEP. The IEP is lowered to  ${{P}_{\rm{I}}} $  by using coding approach and further reduced to $\frac{{{P_{\rm{I}}}}}{K}$ by exploiting  angular  domain identification. In this sense, the security provided on the code domain by the  ICC-CTA protocol is enhanced at the same time by fully exploiting the angular domain. Finally, we give the overall process of  channel estimation and  security enhancement  in Algorithm 4.
\begin{remark}
 We  aim to evaluate the influence  of different PIP principles of Ava on Theorem 4.  We need to stress that the key  lies in the following two aspects.
 On one hand, Ava  selects  different frequency-domain PIP principles with Bob. It  adopts different phases across its own activated  subcarriers in order to protect its own correlation property from being exploited  by Alice.  In this case, the original DFT submatrix in ${\bf H}_{\rm L}$ of  Eq.~(\ref{E.26}) is now represented by ${\widetilde {\bf{F}}_{{\rm{L}},{{s}}}}$ with ${\widetilde {\bf{F}}_{{\rm{L}},{{s}}}} = {\bf{\Psi }}{{\bf{F}}_{{\rm{L}},{{s}}}}$. Here, ${\bf{\Psi }} = diag\left\{ {{{\left[ {\begin{array}{*{20}{c}}
{{e^{j{\beta _1}}}}& \cdots &{{e^{j{\beta _s}}}}
\end{array}} \right]}^{\rm{T}}}} \right\}$ represents the strategies of Ava across subcarriers on which  ${\beta _i},i = 1, \ldots ,{{{s}}}$ are random.  As we can see,  there exists  ${\widetilde {\bf{R}}_{\rm{F}}} = \widetilde {\bf{F}}_{{\rm{L}},{{s}}}^{\rm{T}}\widetilde {\bf{F}}_{{\rm{L}},{{s}}}^* = {{\bf{R}}_{\rm{F}}}$. This does not affect  the value of function $f$ and thus not violate the Theorem 4.
On the other hand, we examine the case where Ava adopts different time-domain PIP principles with Bob.  In this case, LLE vector derived by Bob  is not optimal for Ava's channel estimation since  the final pilot vectors demapped from Ava's SAPs are actually wrong for channel estimation.  The elements of  ${\widehat {\bf{h}}_{{\rm{A}},{\rm{L}}}}$ in Eq.~(\ref{E.27.5}) are  further imposed on significant estimation error.  Thus Bob acquires  very large $\varepsilon _{\rm{A}}^2 $,  compared with  $\varepsilon _{\rm{B}}^2 $ derived under  asymptotically-optimal LMMSE estimation. Finally, the value of $\Delta f$ must be much  larger than zero, which does not violate the Theorem 4. Actually, this can guarantee perfect security even  $ {\theta _1} = {\theta _2}$.

In summary,  those PIP principles different with Bob's strategy can benefit Alice and  are not prudent for Ava.
\end{remark}

\section{ Security-Instability Tradeoff  in CIR Estimation}
\label{RST}

Security advantages originate from  the diversified SAPs using  ICC-$\left( {N_{\rm B},s} \right)$.  However, various  superimposed modes of SAPs (SSAPs)  affect the stability of  CIR  estimation significantly as  those subcarriers in activation are utilized for estimating CIR samples from estimated FS channels.  To begin with, we   show when and why this instability could occur and then gradually wean ourselves from the constraint of instability to find a tradeoff between the security  and instability in CIR estimations.  Finally, we  present  an optimal code rate under which a sufficiently-stable  estimation performance is  secured.

\subsection{Essence of  Unstable CIR Estimation: Random SSAPs}
Recall that each pilot phase in use has been mapped to one  unique SAP and thus randomized pilots mean  random SSAPs.  When random SAPs from Bob and Ava are superimposed in wireless environment, Alice will observe two typical SSAPs which both incur unstable performance. This can be seen in Fig.~\ref{PSOM11}. The key question is:
 \emph{How to evaluate  and reduce the  influence of the instability  resulting from  random SSAPs on CIR estimation ?}

 To answer this question, let us focus on the mathematical expression  of CIR estimation.  The CIR generally  satisfies  the equation  ${{\bf{h}}_{{\rm{B}},{\rm{L}}}} \buildrel \Delta \over = {{\bf{g}}_{{\rm{B}},{\rm{L}}}}\left( {{\bf{R}}_{\rm{1}}^{{1 \mathord{\left/
 {\vphantom {1 2}} \right.
 \kern-\nulldelimiterspace} 2}} \otimes  {\bf{F}}_{{\rm{L}},{{s}}}^{\rm{T}}} \right)$ where ${{\bf{g}}_{{\rm{B}},{\rm{L}}}} $ is the integrated $1 \times N_{\rm T}L$  CIR vector  of i.i.d.  ${\cal C}{\cal N}\left( {0,{1}} \right)$ random variables. Given ${\bf{R}}_{1}$ and ${{\bf{h}}_{{\rm{B}},{\rm{L}}}}$, the estimation  of ${{\bf{g}}_{{\rm{B}},{\rm{L}}}} $, denoted by ${\widehat {\bf{g}}_{{\rm{B}},{\rm{L}}}}$ , will  fluctuate under various forms  of ${\bf{F}}_{{\rm{L}},{{s}}}^{\rm{T}}$. Note that the structure of ${\bf{F}}_{{\rm{L}},{{s}}}^{\rm{T}}$  is determined by  the  number $s$   and the frequency positions  of  overlapping subcarriers.  Therefore, the key factor  influencing the stability of  ${\widehat {\bf{g}}_{{\rm{B}},{\rm{L}}}}$  is  ${\cal P}_{s}$.

\begin{figure}[!t]
\centering \includegraphics[width=1.00\linewidth]{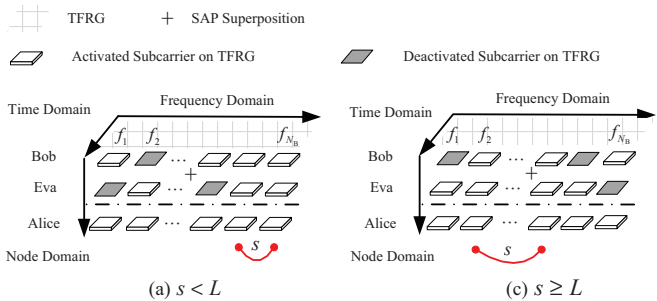}
\caption{ Diagram of random  SSAPs observed at Alice on TFRG: (a) $s$ is less than ${ L}$, thus also less than ${N_{\rm B}}$.  CIR estimation solely relies  on those  non-overlapping subcarriers that  belong to Bob. This operation is  unstable and the requirement for subcarrier identities is hard to guarantee because  the occurrence of  subcarriers without being affected by Ava is random. (b) $s$ is  larger  than ${ L}$. CIR estimation  relies  on the overlapping subcarriers. Nevertheless, the distribution of those subcarriers is also unstable, causing unstable CIR estimation.}
\label{PSOM11}
\vspace{-10pt}
\end{figure}
Specifically, we examine  Fig.~\ref{PSOM2}~(a).  When $s<L$,  the CIR estimation from ${{\bf{h}}_{{\rm{B}},{\rm{L}}}}$ is under-determined with low estimation precision. We turn to consider  $s\ge L$ in Fig.~\ref{PSOM2}~(b)  where we could always  find a non-underdetermined recovery model.

Nevertheless, the fluctuation of $s$ will directly influence the estimation stability.  Particularly,  the random value of  those elements in ${\cal P}_{s}$ will cause unequally-spaced overlapping subcarriers which continue to cause  instability and limited estimation precision.
To show this mathematically, we  begin by giving  the CIR estimation  ${\widehat {\bf{g}}_{{\rm{B}},{\rm{L}}}}$ as
${\widehat {\bf{g}}_{{\rm{B}},{\rm{L}}}} = {\widehat {\bf{h}}_{{\rm{B}},{\rm{L}}}}\left\{ {\overline {\bf{R}} _{\rm{1}}^{{{1} \mathord{\left/
 {\vphantom {{ 1} 2}} \right.
 \kern-\nulldelimiterspace} 2}} \otimes \left( {{\bf{F}}_{{\rm{L}},{{s}}}^{\rm{*}}{\bf{R}}_{\rm{F}}^{ - 1}} \right)} \right\}$.
 By using ${{\bf{h}}_{\rm{B,L}}} = {\widehat {\bf{h}}_{\rm{B,L}}} + {\varepsilon _{\rm{B}}}{\bf{h}}$, we then expand the equation   into
${\widehat {\bf{g}}_{{\rm{B}},{\rm{L}}}} = L{{\bf{g}}_{{\rm{B}},{\rm{L}}}}\left\{ {\left( {{\bf{R}}_{\rm{1}}^{{1 \mathord{\left/
 {\vphantom {1 2}} \right.
 \kern-\nulldelimiterspace} 2}}\overline {\bf{R}} _{\rm{1}}^{{{ 1} \mathord{\left/
 {\vphantom {{ 1} 2}} \right.
 \kern-\nulldelimiterspace} 2}}} \right)} \right\} - { {\varepsilon _{\rm{B}}}}{\bf{h}}\left\{ {\overline {\bf{R}} _{\rm{1}}^{{{ 1} \mathord{\left/
 {\vphantom {{ 1} 2}} \right.
 \kern-\nulldelimiterspace} 2}} \otimes \left( {{\bf{F}}_{{\rm{L}},{{s}}}^{\rm{*}}{\bf{R}}_{\rm{F}}^{ - 1}} \right)} \right\}$.
  Given the correlation matrix $\bf R_{1}$ and ${ {\varepsilon _{\rm{B}}}}$, the minimization of  $\overline \varepsilon  _{\rm{B}}^2$ defined by the equation  $\overline \varepsilon  _{\rm{B}}^2 = {{{\mathbb E}\left\{ {{{\left\| {{{\widehat {\bf{g}}}_{{\rm{B}},{\rm{L}}}} - {{\bf{g}}_{{\rm{B}},{\rm{L}}}}} \right\|}^2}} \right\}} \mathord{\left/
 {\vphantom {{E\left\{ {{{\left\| {{{\widehat {\bf{g}}}_{{\rm{B}},{\rm{L}}}} - {{\bf{g}}_{{\rm{B}},{\rm{L}}}}} \right\|}^2}} \right\}} {{N_{\rm{T}}}L}}} \right.
 \kern-\nulldelimiterspace} {{N_{\rm{T}}}L}}$,  is equivalent to:
  \begin{equation}\label{E.42}
 \begin{aligned}
 & \text{\,} \underset{{{{\bf{R}}_{\rm{F}}}}}{\text{min}}
 & &  {\rm{Tr}}\left( {{\bf{R}}_{\rm{F}}^{ - 1}} \right),  ~{\rm{s. t.}}~~ {\rm{Tr}}\left( {{{\bf{R}}_{\rm{F}}}} \right) = L \\
 \end{aligned}
 \end{equation}
 For this optimization problem,  the minimization is  achieved  iff ${{{\bf{R}}_{\rm{F}}}}$ has the identical eigenvalues, and thus  the overlapping subcarriers are equally spaced, satisfying
 \begin{equation}\label{E.43}
   {\overline{ \cal P}}_{s}: \left\{ {{i_k},{i_{k + \frac{N}{L}}} \ldots ,{i_{k + \frac{{\left( {L - 1} \right)N}}{L}}},k = 0,1, \ldots ,\frac{N}{L} - 1} \right\}
 \end{equation}
 The total number of subcarriers within the interval that  extends from the first  overlapping position  to the last one  can be derived as:
 \begin{equation}\label{E.44}
 s^{*} \buildrel \Delta \over =\frac{{L - 1}}{L}N + 1
  \end{equation}
Hinted by this, we know how any mismatch  between the  indices of ${{ \cal P}}_{s}$ with those of   ${\overline{ \cal P}}_{s}$  could  increase the estimation error and  instability.

Based on above observations, we define  the condition of being stable (\textbf{CS}) for CIR estimation as follows:
 \begin{definition}[\textbf{CS}]
The overlapping subcarriers   are equally spaced  and meet  the number constraint, that is ,  $s\ge L$ and $s\ge  s^{*}$
 \end{definition}

  Returning to examine the previous  SSAPs in Fig.~\ref{PSOM11}, we can know that  SAPs are diversified, completely  under the direction of  ICC-$\left( {N_{\rm B},s} \right), s \ge 1$  code.   Basically, the instability originates from the random use of codewords and the constraint of $N_{\rm B}$ and $w$ in  ICC-$\left( {N_{\rm B},s} \right), s \ge 1$  code. Therefore. any mechanism for reduction of instability must reconsider the code design. In this design process,  we must deal with the relationship between security and instability.
\subsection{Security-Instability Tradeoff }
To begin with,  we  identify and define the  instability by the following metirc:
\begin{definition}
The KPI indicating  the instability of  CIR estimation using  ICC-$\left( {N_{\rm B},s} \right)$ code  is defined by  $S_{T}\left( {{N_{\rm{B}}},w,{s^*}} \right) = {1 \mathord{\left/
 {\vphantom {1 {{P_{\rm{s}}}\left( {{N_{\rm{B}}},w,{s^*}} \right)}}} \right.
 \kern-\nulldelimiterspace} {{P_{\rm{s}}}\left( {{N_{\rm{B}}},w,{s^*}} \right)}}$ with
 \begin{equation}\label{E.45}
{P_{\rm{s}}}\left( {{N_{\rm{B}}},w,s^{*}} \right)= \frac{{\kappa \left( {{N_{\rm{B}}},w,s^{*}} \right)}}{{{C^2}\left( {{N_{\rm{B}}},w,s^{*}} \right)}},0 \le {P_{\rm{s}}}\left( {{N_{\rm{B}}},w,s^{*}} \right) \le 1
 \end{equation}
where ${{C^2}\left( {{N_{\rm{B}}},w,s^{*}} \right)}$ denotes the total possibilities of codeword pair for which each codeword represents the one choice from one node, i.e. Bob or Ava.  ${\kappa \left( {{N_{\rm{B}}},w,s^{*}} \right)}$ denotes  the number of codeword  pairs that  satisfy CS  when they overlap with each other.
\end{definition}
In this definition, we should note that  ${\kappa \left( {{N_{\rm{B}}},w,s^{*}} \right)}$  relies on  a fundamental fact:
\begin{fact}
1) The number  of zero digits in each codeword  determines how frequency CS can be broken down; 2) Those  zero digits, with uniform  spacing,  incur the most severe interference on CIR estimation accuracy.
 \end{fact}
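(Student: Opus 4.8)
The plan is to prove the two assertions separately, anchoring both in the spectral characterization already obtained in the optimization of Eq.~(\ref{E.42}): the CIR estimation error is governed by $\mathrm{Tr}(\mathbf{R}_{\rm F}^{-1})$ under the constraint $\mathrm{Tr}(\mathbf{R}_{\rm F})=L$, and this is minimized \emph{iff} $\mathbf{R}_{\rm F}$ has identical eigenvalues, i.e.\ iff the overlapping subcarriers realize the equally-spaced skeleton $\overline{\cal P}_s$ of Eq.~(\ref{E.43}). Thus ``CS breaks'' means the overlapping set deviates from $\overline{\cal P}_s$ or falls short of $s^{*}$, and ``interference'' can be quantified by the excess $\mathrm{Tr}(\mathbf{R}_{\rm F}^{-1})-\mathrm{Tr}(\mathbf{R}_{\rm F}^{-1})|_{\min}$. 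Both parts of the Fact are then statements about how the zero digits of the two superimposed codewords perturb this spectrum.

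For assertion 1, I would start from the observation that the overlapping-subcarrier set is ${\cal P}_s=\mathrm{supp}(\mathbf{c}_{\rm B})\cap\mathrm{supp}(\mathbf{c}_{\rm A})$, so its complement in $\{1,\dots,N_{\rm B}\}$ is exactly the hole set $D=Z_{\rm B}\cup Z_{\rm A}$, the union of the two zero-supports. By Theorem~1 each codeword carries precisely $n_0=N_{\rm B}-w=(N_{\rm B}-s)/2$ zeros, so $|D|\le 2n_0$ and $|{\cal P}_s|=N_{\rm B}-|D|$; the holes are the \emph{only} mechanism that can shrink or unbalance ${\cal P}_s$. Fixing the skeleton $\overline{\cal P}_s$, \textbf{CS} survives only when $D$ both leaves at least $s^{*}$ positions active and avoids disturbing the equal spacing, so the admissible count $\kappa(N_{\rm B},w,s^{*})$ in Eq.~(\ref{E.45}) is a function of $n_0$ given $N_{\rm B}$ and $s^{*}$. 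A short monotonicity check then closes part 1: as $n_0$ grows, the number of skeleton-preserving placements strictly decreases relative to the faster-growing total $C^2(N_{\rm B},w,s^{*})$, so $P_{\rm s}$ falls and the breakdown frequency $1-P_{\rm s}$ rises with the number of zero digits.

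For assertion 2, I would fix $|D|$ and optimize the interference over the placement of $D$. The structural key is that $\mathbf{R}_{\rm F}$ is a Hermitian Toeplitz matrix whose lag-$m$ entry is the truncated DFT $A(m)=\sum_{\ell\in{\cal P}_s}e^{j2\pi\ell m/N}$ of the overlapping indicator; deleting $D$ perturbs each off-diagonal by $-\sum_{\ell\in D}e^{j2\pi\ell m/N}$, and these off-diagonals are precisely what drives the eigenvalues of $\mathbf{R}_{\rm F}$ off the flat spectrum. When $D$ is uniformly spaced with period $N/d$ for some $d\mid L$, the phasors add \emph{coherently} at the harmonic lags $m\equiv 0\ (\mathrm{mod}\ d)$ and produce off-diagonal entries of maximal magnitude, whereas any irregular spacing makes the phasor sums partially cancel. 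I would then translate ``maximal off-diagonal magnitude at fixed trace'' into ``maximally spread eigenvalues'' and invoke the Schur-convexity of $(\lambda_i)\mapsto\sum_i\lambda_i^{-1}$: the uniformly-spaced $D$ yields an eigenvalue vector that majorizes every competitor, hence maximizes $\mathrm{Tr}(\mathbf{R}_{\rm F}^{-1})$ and therefore the estimation interference.

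The hardest part will be the final step of assertion 2: upgrading the coherent-aliasing heuristic to a rigorous majorization over \emph{all} deletion patterns of a given size, since $\mathrm{Tr}(\mathbf{R}_{\rm F}^{-1})$ is monotone in the eigenvalue spread only to second order. I would attack this by first bounding the off-diagonal Frobenius energy $\sum_{m\neq 0}(L-|m|)\,|A(m)|^2$, which equals the eigenvalue variance $\|\mathbf{R}_{\rm F}-\bar\lambda\mathbf{I}\|_F^2$, and proving by a Cauchy--Schwarz/coherence argument that this energy is maximized exactly by the uniformly-spaced $D$; I would then lift the variance comparison to the full $\mathrm{Tr}(\mathbf{R}_{\rm F}^{-1})$ statement using the convexity of the reciprocal together with the fixed-trace constraint. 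If a clean majorization proves elusive, a safe fallback is to establish assertion 2 only in the asymptotic regime $N_{\rm T}\to\infty$ used throughout Section~\ref{FSCE}, where the cancellation for non-uniform $D$ becomes exact and the comparison collapses to a single dominant aliasing lag.
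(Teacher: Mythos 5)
Your handling of assertion~1 is fine and, in fact, matches the paper: the paper offers no formal proof of this Fact at all — it is asserted as a direct reading of the optimization in Eqs.~(\ref{E.42})--(\ref{E.44}) — and your observation that the hole set is exactly the union of the two zero-supports, with $n_0 = N_{\rm B}-w = (N_{\rm B}-s)/2$ zeros per codeword by Theorem~1, so that $\kappa$ and hence $P_{\rm s}$ in Eq.~(\ref{E.45}) are driven by the zero count, is precisely the mechanism the paper intends. The genuine gap is in assertion~2, and it is twofold. First, the step you flag yourself really does fail: at fixed trace, maximizing the eigenvalue variance $\sum_i (\lambda_i-\bar\lambda)^2$ of ${\bf R}_{\rm F}$ does not maximize ${\rm Tr}\left( {\bf R}_{\rm F}^{-1} \right)$. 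Schur-convexity of $\left( \lambda_i \right) \mapsto \sum_i \lambda_i^{-1}$ requires majorization of the full spectrum, and a second-moment comparison cannot supply it — ${\rm Tr}\left( {\bf R}_{\rm F}^{-1} \right)$ is dominated by the smallest eigenvalue, which the variance does not control. Worse, the coherent-aliasing premise is itself shaky: a clustered deletion leaves the complement with long contiguous stretches whose Toeplitz form carries Dirichlet-kernel off-diagonals of magnitude near $s$ at small lags, so it is not evident that uniform deletions maximize even the off-diagonal energy; and if the uniform deletion period pushes all harmonic lags outside $|m|\le L-1$, the coherent addition you rely on vanishes entirely. Finally, when the deletions drive $s$ below $L$, ${\bf R}_{\rm F}$ is singular ($\rho_{\rm f}=\min\left\{ s,L \right\}$) and ${\rm Tr}\left( {\bf R}_{\rm F}^{-1} \right)$ blows up regardless of spacing, so "most severe" cannot be captured by that functional alone.

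Second, and more structurally, you are proving a different proposition from the one the paper uses this Fact for. In the paper, "uniform spacing is worst" is a combinatorial run-length statement, not a spectral one: among all placements of $n_0$ zeros in $N_{\rm B}$ positions, equal spacing minimizes the maximum run of consecutive ones (a one-line pigeonhole argument), hence most effectively prevents any surviving equally spaced span of $s^{*}$ subcarriers demanded by \textbf{CS} in Definition~3 and Eq.~(\ref{E.43}). This is exactly the reading that produces Proposition~4 and the design constraint of Eq.~(\ref{E.47}): requiring at least $s^{*}$ ones in each of the gaps between adjacent zeros costs $\left( s^{*}+1 \right)\left( N_{\rm B}-w \right) + s^{*} \le N_{\rm B}$ positions, which is where the optimal weight $w = \frac{s^{*}}{s^{*}+1}\left( N_{\rm B}+1 \right)$ of Theorem~6 comes from. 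A patched spectral argument about ${\rm Tr}\left( {\bf R}_{\rm F}^{-1} \right)$, even if it could be completed, would not deliver Eq.~(\ref{E.47}); the pigeonhole argument does, in two lines, and is the proof the Fact actually needs in the development that follows it.
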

This fact  also determines  why the instability of CIR estimation could occur.   We define the  Optimal Stability (OS) condition  by:
  \begin{definition}
There always exists ${P_{\rm{s}}}\left( {{N_{\rm{B}}},w,s^{*}} \right) = 1$  under arbitrary  SSAPs.
  \end{definition}
\subsubsection{Low-$N_{\rm B}$ scenario}
\begin{figure*}[!t]
\centering \includegraphics[width=1.00\linewidth]{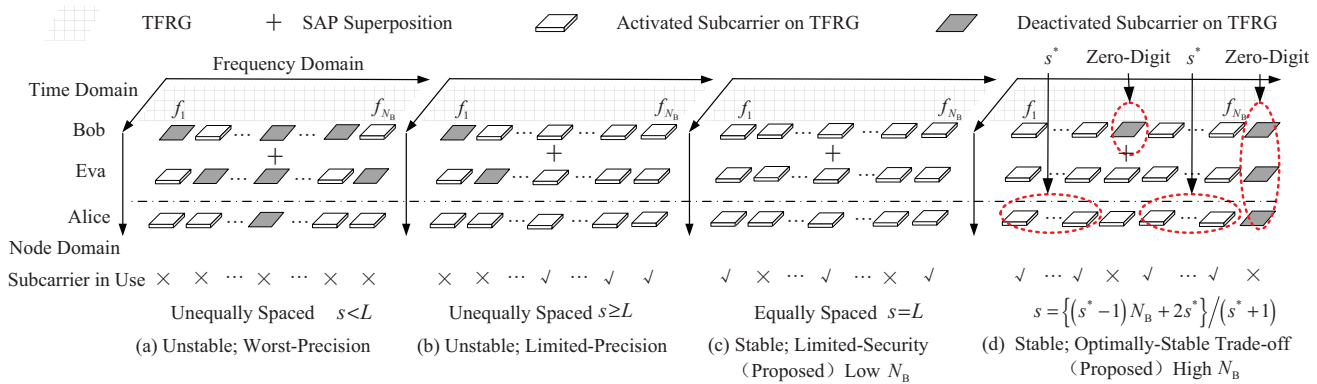}
\caption{The utilization strategy of subcarriers for CIR estimation and the analysis of estimation accuracy and stability. (a)~Influence of  the number of  overlapping subcarriers on CIR estimation; (b)~Influence of  the distribution of  overlapping subcarriers on CIR estimation; (c)~Example of   SSAPs  with highest stability but lowest security under low $N_{\rm B}$  (d)~Example of  SSAPs under high  $N_{\rm B}$ with optimally-stable estimation. This example is a worst case where the zero digits  are equally spaced  to destroy available subcarriers for stable CIR estimation. $s^{*}$ satisfies the Eq.~(\ref{E.44}) and $s$ satisfies $s = \frac{{\left( {{s^*} - 1} \right){N_{\rm{B}}} + 2{s^*}}}{{{s^*} + 1}}$.}
\label{PSOM2}
\vspace{-10pt}
\end{figure*}
Without loss of generality, we consider the low-$N_{\rm B}$ scenario where  $N_{\rm B}$ is equal to $s^{*}$.
Obviously, CS is satisfied when ${{\cal P}}_{s}$ is equal to the set ${\overline{\cal P}}_{s}$. In this case, we derive  the expression of  instability, defined by
\begin{equation}\label{E.46}
S_{T}\left( {{N_{\rm{B}}},w,{s^*}} \right) = {\left\{ {{{\left( {\begin{array}{*{20}{c}}
{{N_{\rm{B}}}}\\
{{N_{\rm{B}}} - w}
\end{array}} \right)} \mathord{\left/
 {\vphantom {{\left( {\begin{array}{*{20}{c}}
{{N_{\rm{B}}}}\\
{{N_{\rm{B}}} - w}
\end{array}} \right)} {\left( {\begin{array}{*{20}{c}}
{{N_{\rm{B}}} - {s^*}}\\
{{N_{\rm{B}}} - w}
\end{array}} \right){\rm{ }}}}} \right.
 \kern-\nulldelimiterspace} {\left( {\begin{array}{*{20}{c}}
{{N_{\rm{B}}} - {s^*}}\\
{{N_{\rm{B}}} - w}
\end{array}} \right){\rm{ }}}}} \right\}^2}
\end{equation}
with $s^{*}\le w \le N_{\rm B}\le \overline N $.

Based on this equation, we could  characterize the  relationship between the  security (defined by $S_{E}$ equal to ${1 \mathord{\left/
 {\vphantom {1 {{P_{\rm{I}}}}}} \right.
 \kern-\nulldelimiterspace} {{P_{\rm{I}}}}}$) and instability (i.e., $S_{T}$) as a fundamental  tradeoff existing in the whole uplink training process:
 \begin{fact}[\textbf{A Realistic  Tradeoff}]
\emph {The lower code rate brings the lower instability (Eq.~(\ref{E.46})); However,  the lower code rate causes the  higher security (Theorem 2 and Theorem 5). }
\end{fact}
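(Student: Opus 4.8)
The plan is to recognize that both clauses of the Fact are assertions about the \emph{monotonicity} of the two metrics as functions of the single free design parameter of the code, and to reparametrize everything before comparing. For a fixed block length $N_{\rm B}$ (here pinned by the channel through $N_{\rm B}=s^{*}$), the order $s$, the weight $w=\frac{N_{\rm B}+s}{2}$, the codebook size $C=\binom{N_{\rm B}}{w}$, and the rate $R_{ICC}=\frac{1}{N_{\rm B}}\log_{2}C$ are all determined by one another, so I would first fix the convention that lowering the code rate means increasing $w$. Since $s\ge 1$ forces $w>\frac{N_{\rm B}}{2}$, the quantity $C=\binom{N_{\rm B}}{w}$ is strictly decreasing in $w$ throughout the admissible range, while $R_{ICC}$ is strictly increasing in $C$; hence a lower rate corresponds to a larger $w$ (equivalently a larger $s$, a smaller $C$). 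Every subsequent step then only needs to track the direction of change in $w$.

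For the security clause I would start from the closed form of Theorem~2. Substituting $w=\frac{N_{\rm B}+s}{2}$ and $N_{\rm B}-w=\frac{N_{\rm B}-s}{2}$ collapses the IEP to the affine expression $P_{\rm I}=\frac{\binom{N_{\rm B}}{w}-1}{2^{N_{\rm B}+1}}=\frac{C-1}{2^{N_{\rm B}+1}}$, which is strictly increasing in $C$ and therefore strictly increasing in the code rate. Consequently the security $S_{E}=1/P_{\rm I}$ is strictly decreasing in the rate, i.e. a lower rate yields higher security; under Theorem~5 the DPD case merely replaces $P_{\rm I}$ by $P_{\rm I}/K$, a positive rescaling that preserves the monotonicity. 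This half is essentially immediate once the factorials cancel.

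The instability clause is the delicate one. Starting from Eq.~(\ref{E.46}) and writing $a=N_{\rm B}-w$ for the number of zero digits, I would express the instability as $S_{T}=\bigl[\binom{N_{\rm B}}{a}/\binom{N_{\rm B}-s^{*}}{a}\bigr]^{2}$. The obstacle is that both the numerator and the denominator move with $w$, so monotonicity in $C$ alone is not transparent. The key step is to rewrite the ratio in the product form $\binom{N_{\rm B}}{a}/\binom{N_{\rm B}-s^{*}}{a}=\prod_{i=0}^{a-1}\frac{N_{\rm B}-i}{N_{\rm B}-s^{*}-i}$, whose every factor exceeds $1$ because $s^{*}\ge 1$, while the constraint $w\ge s^{*}$ (so $a\le N_{\rm B}-s^{*}$) keeps each denominator positive. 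A product of $a$ factors each larger than one is strictly increasing in $a$; hence $S_{T}$ is strictly increasing in $a$, equivalently strictly decreasing in $w$. Combining this with the reparametrization above, a lower code rate means a larger $w$, a smaller $a$, and therefore a smaller $S_{T}$, i.e. lower instability.

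I expect the product representation to be the crux, since it is what converts an opaque ratio of binomials into a manifestly monotone quantity; the security direction is by comparison routine. One boundary point deserves a remark: the low-$N_{\rm B}$ regime $N_{\rm B}=s^{*}$ forces $w=s^{*}=N_{\rm B}$, $a=0$, and $C=1$, where the empty product gives $S_{T}=1$, recovering the Optimal Stability condition by continuity. With both monotonicities in hand, the two clauses of the Fact follow at once, exhibiting the code rate as the quantity one must trade against the jointly-improving security and stability.
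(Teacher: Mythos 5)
Your proposal is correct and takes essentially the same route as the paper, which offers no separate proof of this Fact but justifies it by pointing at Eq.~(\ref{E.46}) and Theorems 2 and 5: your argument is precisely the monotonicity check those citations leave implicit, with the collapse $P_{\rm I}=\frac{C-1}{2^{N_{\rm B}+1}}$ of Theorem 2 and the product representation $\binom{N_{\rm B}}{a}/\binom{N_{\rm B}-s^{*}}{a}=\prod_{i=0}^{a-1}\frac{N_{\rm B}-i}{N_{\rm B}-s^{*}-i}$ supplying the rigor the paper omits. Both monotone directions you establish (lower rate $\Leftrightarrow$ larger $w$ $\Rightarrow$ smaller $S_{T}$ and smaller $P_{\rm I}$, the latter unchanged by the $1/K$ rescaling of Theorem 5) match the Fact as stated, so nothing further is needed.
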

 \begin{remark}
 For a mean AoA model with CPD, the tradeoff does not exist since $P_{\rm I}$ is always zero and thus  independent with the stability of CIR estimation. However, this is not realistic since  the mean AoA  is discretely distributed in practical scenarios with limited clusters. In this sense, the security-stability tradeoff is necessary and inevitable.
  \end{remark}
 The drawback of low-$N_{\rm B}$ configuration is that there is no security  when  Alice expects to achieve  OS condition and thus  $w$ should be  equal to $N_{\rm B}$ according to Eq.~(\ref{E.46}).  In other words, the tradeoff under OS condition cannot provide  desirable security guarantee  when  $N_{\rm B}$ is low. See the example  in Fig.~\ref{PSOM2}~(c).

We always expect to  maximize  the lower bound of security  by jointly optimizing  $N_{\rm B}$ and  $w$.  This object motives us  to  turn to large-$N_{\rm B}$ case.

\subsubsection{{High-$N_{\rm B}$ scenario and Optimally-Stable Tradeoff }}

In this part, we aim to determine the optimal ${R_s}$ such that  the security is maximized while the \textbf{OS}  condition is satisfied. Maximizing security means maximizing the code rate  since the security is a monotonic increasing function of code rate ${R_s}$. The optimization problem, also namely \textbf{Optimally-Stable Tradeoff}  problem,  can be formulated  by:
 \begin{equation}
\begin{aligned}
&\text{\,} \underset{{N_{\rm{B}}},w}{\text{max}}\quad{R_{ICC}}\left( {{N_{\rm{B}}},w} \right)\\
&s.t.\quad
{P_{\rm{s}}}\left( {{N_{\rm{B}}},w,s^{*}} \right) = 1,  s^{*} =\frac{{L - 1}}{L}N + 1
\end{aligned}
 \end{equation}
Before solving this problem, we need to fully understand  ${P_{\rm{s}}}\left( {{N_{\rm{B}}},w,s^{*}} \right) = 1$ under high-${N_{\rm{B}}}$. According to Fact 2 and  Fig.~\ref{PSOM2}~(d), we  have the following propositon:
\begin{proposition}
\textbf{OS} condition is satisfied iff  the number of adjacent non-zero digits  between any adjacent zero digits is at least equal to $s^{*}$ when zero digits are equally spaced for each of ICC codeword. We say  this   is named as the \textbf{$s^{*}$-OS} condition.
\end{proposition}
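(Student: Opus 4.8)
The plan is to prove the equivalence by first collapsing the quantification over all superimposed activation patterns (SSAPs) to a single canonical worst case, then translating the stability requirement \textbf{CS} into a purely combinatorial run-length condition on the one-digits, and finally checking both implications. The first move rests on Fact~2: for a fixed number of zero digits, equally-spaced placement is the most destructive to CIR stability. Since the \textbf{OS} condition asks that ${P_{\rm s}}\left( {N_{\rm B},w,{s^*}} \right)=1$, i.e.\ that \emph{every} admissible SSAP meets \textbf{CS}, it suffices to test the one configuration in which the zero digits of the superimposed codeword are equally spaced --- exactly the extremal pattern drawn in Fig.~\ref{PSOM2}(d). This turns the statement into one about a canonical pattern rather than about all codeword pairs.

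Second, I would pin the threshold $s^*$ to a concrete requirement. By Eq.~(\ref{E.42})--(\ref{E.43}) the residual error $\overline{\varepsilon}_{\rm B}^2$ is minimized iff ${\bf R}_{\rm F}$ has identical eigenvalues, that is, iff the overlapping subcarriers used for estimation form an equally-spaced comb of period $N/L$; by Eq.~(\ref{E.44}) such a comb of $L$ pilots occupies a span of exactly $s^*=\frac{L-1}{L}N+1$ consecutive positions. Hence a block of doubly-activated subcarriers supports a non-underdetermined, variance-optimal estimator precisely when it contains a run of at least $s^*$ consecutive one-digits, since only then can the optimal set $\overline{\cal P}_s$ be embedded inside it while honouring both $s\ge L$ and $s\ge s^*$; any shorter run forces unequal spacing or an under-determined model and breaks \textbf{CS}. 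This identifies \textbf{CS} on the canonical pattern with the demand that each inter-zero run of one-digits have length at least $s^*$, which is the $s^*$-\textbf{OS} condition.

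With these two reductions the equivalence follows in both directions. For necessity (\textbf{OS} $\Rightarrow$ $s^*$-\textbf{OS}), if some codeword with equally-spaced zeros possessed an inter-zero run shorter than $s^*$, the corresponding worst-case SSAP would inherit that short run, fail \textbf{CS}, and drive ${P_{\rm s}}<1$, contradicting \textbf{OS}. For sufficiency ($s^*$-\textbf{OS} $\Rightarrow$ \textbf{OS}), the most destructive SSAP is the equally-spaced-zero pattern, for which the hypothesis already guarantees every run of one-digits has length at least $s^*$; thus \textbf{CS} holds there, and by the Fact~2 reduction it holds for every SSAP, giving ${P_{\rm s}}=1$. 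Along the way I would verify the bookkeeping quoted in Fig.~\ref{PSOM2}(d), namely that runs of length $s^*$ separated by equally-spaced zeros yield the overlap count $s=\frac{(s^*-1)N_{\rm B}+2s^*}{s^*+1}$, as a consistency check.

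The step I expect to be the main obstacle is justifying the reduction of the \emph{pairwise} overlap to a \emph{per-codeword} run-length statement, because the non-overlapping positions of a pair are governed by the \emph{union} of the two codewords' zero sets, and misaligned or disjoint zero sets could in principle fragment the one-runs more finely than either codeword alone. The resolution I would pursue exploits Theorem~1: every ICC-$\left( N_{\rm B},s \right)$ codeword carries the same weight $w=\frac{N_{\rm B}+s}{2}$ and therefore the same number of zeros $\frac{N_{\rm B}-s}{2}$, so the equally-spaced worst-case placement is canonical and common to both users; identifying the extremal SSAP of Fig.~\ref{PSOM2}(d) with this shared pattern --- and confirming that no misalignment of equally-spaced zero sets produces a strictly shorter overlap run than this canonical one --- is precisely what makes the per-codeword $s^*$-\textbf{OS} condition both necessary and sufficient, and closes the argument.
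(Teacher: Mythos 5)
Your first two reductions track the paper's own (very brief) justification almost verbatim: the paper derives this proposition with no formal proof at all, simply asserting it ``according to Fact~2 and Fig.~\ref{PSOM2}~(d),'' i.e., it reduces to the equally-spaced-zero worst case and reads the run-length threshold $s^*$ off Eqs.~(\ref{E.42})--(\ref{E.44}), exactly as you do, and your bookkeeping check of $s=\frac{(s^*-1)N_{\rm B}+2s^*}{s^*+1}$ matches the figure caption. The genuine gap is in your final paragraph. The lemma you propose to close the pairwise-to-per-codeword reduction --- that no misalignment of two equally-spaced zero sets produces a strictly shorter run of common ones than the shared-pattern case --- is false, and Theorem~1 cannot rescue it. Counterexample: give both codewords zeros of period $s^*+1$ (each inter-zero run exactly $s^*$, so each satisfies the $s^*$-\textbf{OS} condition), but offset the two combs by half a period; the union of zero positions then has doubled density, every run of doubly-occupied (\textbf{Case 2}) subcarriers has length roughly $\frac{s^*-1}{2}<s^*$, and no span-$s^*$ comb $\overline{\cal P}_s$ embeds in the overlap. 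In fact the extremality runs opposite to your claim: coinciding zero sets maximize the overlap (size $w$), while disjoint zero sets minimize it at $2w-N_{\rm B}=s$ --- and under the optimal weight $w=\frac{s^*}{s^*+1}(N_{\rm B}+1)$ this disjoint-zeros count is precisely the $s$ quoted in Fig.~\ref{PSOM2}~(d), so the worst case the paper contemplates is the misaligned one, not your ``canonical'' aligned one. If \textbf{CS} had to be certified on the \textbf{Case 2} overlap set alone, your sufficiency direction would therefore fail.

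What actually licenses the per-codeword formulation --- and dissolves the obstacle you flagged rather than resolving it --- is the second clause of Fact~1, which your argument never invokes: a subcarrier on which only one signal is present can still be used for channel estimation (after identification), so the set of subcarriers available for Bob's CIR estimation is the support of Bob's own codeword, independent of how Ava's zeros interleave with it. This is also how Fact~2 itself is phrased (``the number of zero digits \emph{in each codeword}''). Under that semantics, stability is governed by each codeword's own zero pattern, Fact~2 pins the worst case to equally spaced zeros, and the run-length condition with threshold $s^*$ is then immediate --- which is why the paper states the proposition without proof and passes directly to encoding it as the constraint~(\ref{E.47}). Replace your false misalignment lemma with this appeal to Fact~1 and the remainder of your argument is sound and coincides with the paper's intent.
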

Inspired by this, we should optimize  ${N_{\rm B}}$ and $w$ such that  the non-zero digits are constrained  to create the \textbf{$s^{*}$-OS} condition. Under \textbf{$s^{*}$-OS} condition,  ${N_{\rm B}}$ should always satisfy
\begin{equation}\label{E.47}
\left( {{s^*} + 1} \right)\left( {{N_{\rm{B}}} - w} \right) + {s^*} \le {N_{\rm{B}}} \le \overline N, {s^*} \le w  \le {N_{\rm{B}}}\le \overline N
\end{equation}
The weight $w$ of  ICC-$\left( {N_{\rm B}, s} \right)$ should therefore  satisfy $\frac{{{s^*}}}{{{s^*} + 1}}\left( {{N_{\rm{B}}} + 1} \right) \le w \le {N_{\rm{B}}} \le \overline N$.
Especially, when $w$ is equal to ${N_{\rm{B}}}$, we have ${s^*} = w$. This corresponds to the low-$N_{\rm{B}}$ case.
\begin{figure*}[!t]
{% Requires \usepackage{graphicx}
  \includegraphics[width=2.55in]{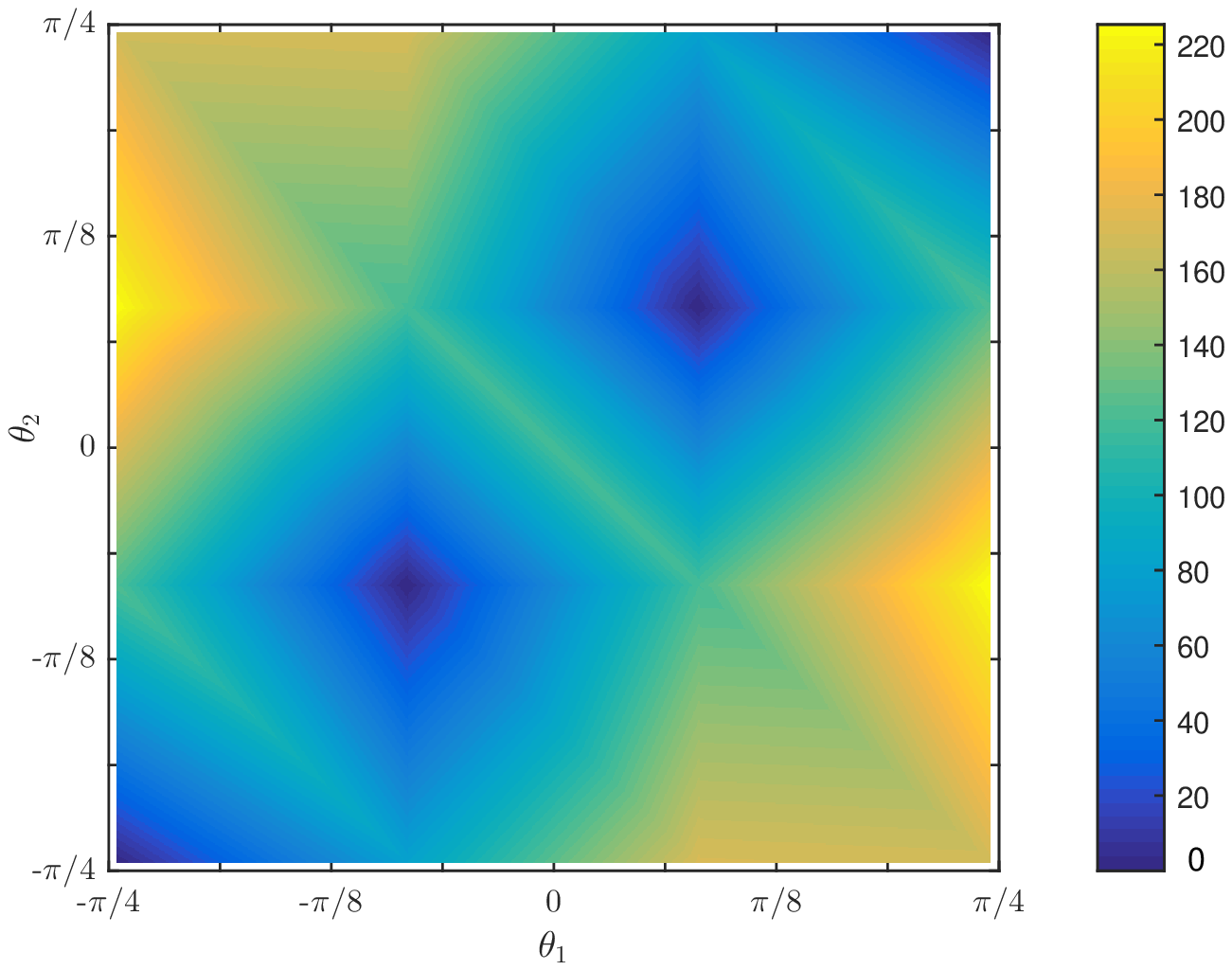} \hspace{-18pt} \includegraphics[width=2.55in]{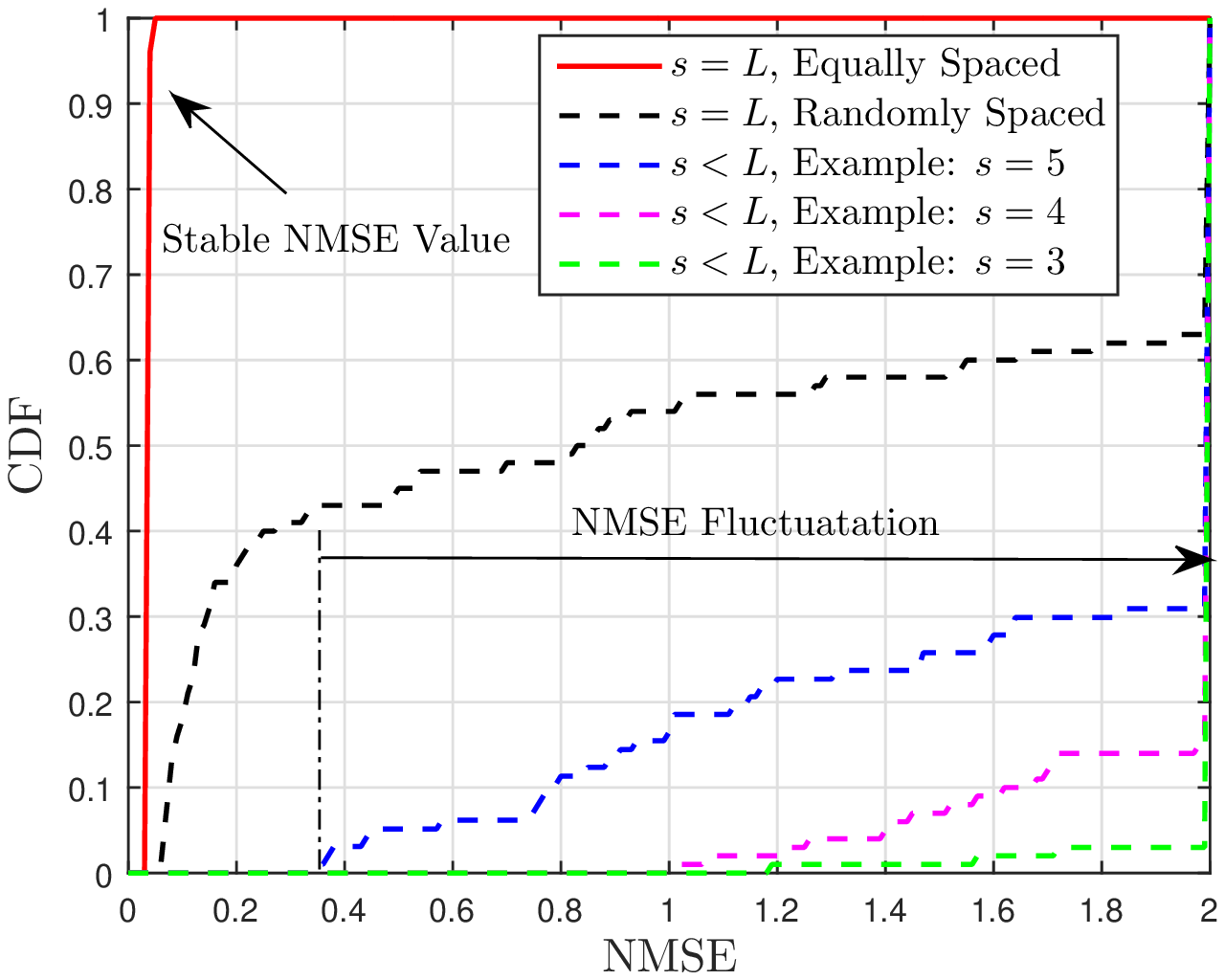}\hspace{-13pt}  \includegraphics[width=2.55in]{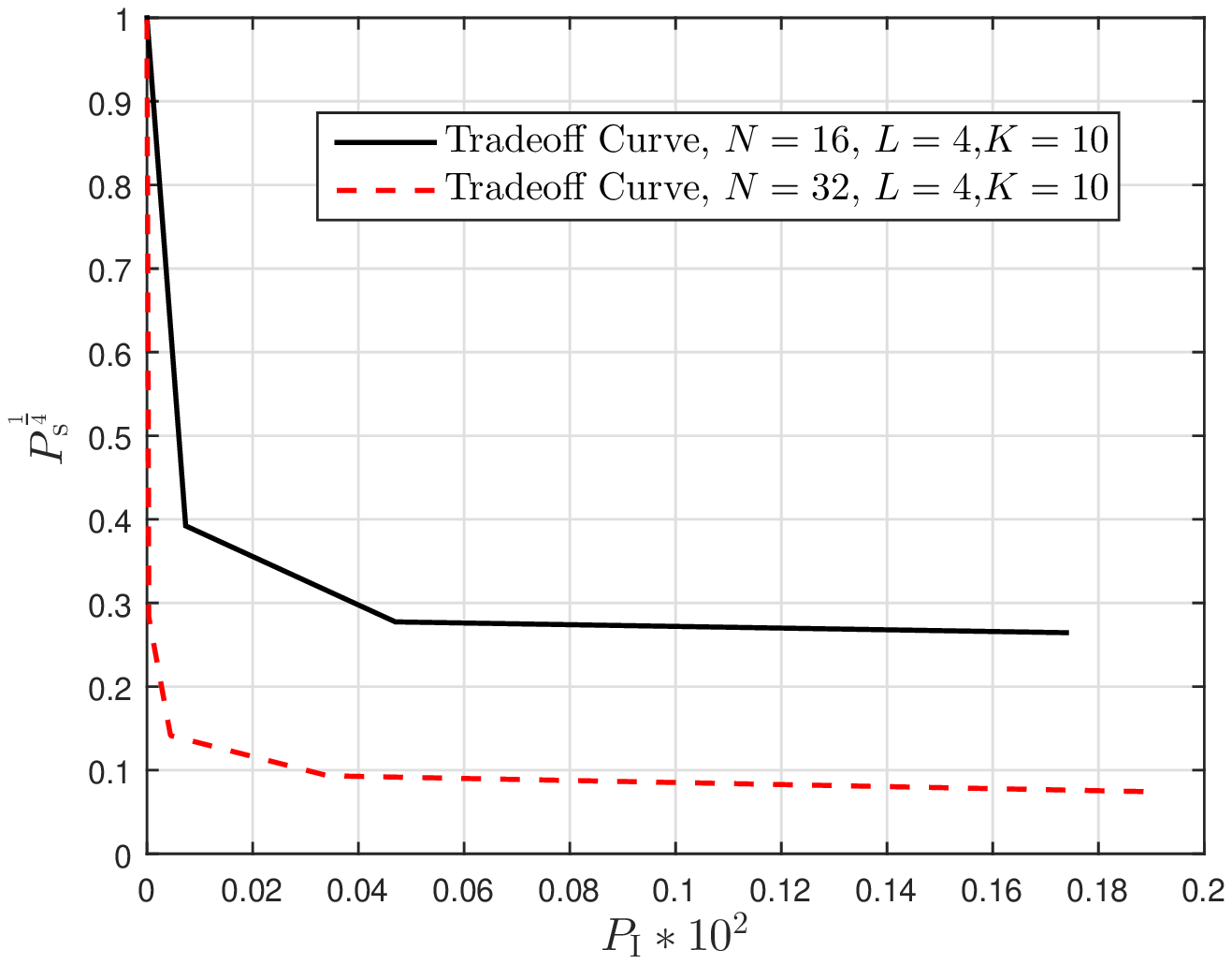}}
  \centerline{(a)\hspace{160pt}(b)\hspace{160pt}(c)}\vspace{-5pt}
  \caption{(a) Strength  of $\Delta f$ versus $\theta_{i},i=1,2$ with $N_{\rm T}=100$; (b) CDF of NMSE  under various SSAPs; (c) Security-instability tradeoff curve  with mean AoA discretely and  uniformly  distributed in a length-$K$ interval.}
  \label{Simulations1}
\end{figure*}

In this way, the \textbf{$s^{*}$-OS} condition is represented by the Eq.~(\ref{E.47}). And the maximization operation should be constrained by this equation.
\begin{theorem}
The optimal code rate  maximizing  the security while maintaining  the \textbf{$s^{*}$-OS} condition  can be calculated by
\begin{equation}\label{E.49}
{R_{\rm s}\left( {{N_{\rm{B}}},w,s^{*}} \right)} = {\log _2}{\left[{\frac{{{N_{\rm{B}}}!}}{{\left( {\frac{{{s^*}\left( {{N_{\rm{B}}} + 1} \right)}}{{{s^*} + 1}}} \right)!\left( {\frac{{{N_{\rm{B}}} - {s^*}}}{{{s^*} + 1}}} \right)!}}}\right]^{{1 \mathord{\left/
 {\vphantom {1 {{N_{\rm B}}}}} \right.
 \kern-\nulldelimiterspace} {{N_{\rm B}}}}}}
\end{equation}
The weight and order of optimally-stable code satisfy $w = \frac{{{s^*}}}{{{s^*} + 1}}\left( {{N_{\rm{B}}} + 1} \right)$ and $s = \frac{{\left( {{s^*} - 1} \right){N_{\rm{B}}} + 2{s^*}}}{{{s^*} + 1}}$.
\end{theorem}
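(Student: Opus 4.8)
The plan is to reduce the two-variable program to a one-dimensional boundary optimization in the weight $w$ for each fixed $N_{\rm B}$, exploiting the fact that the objective depends on $w$ only through a single binomial coefficient. First I would rewrite the objective using Proposition 1 together with the weight relation $w=(N_{\rm B}+s)/2$ of Theorem 1: since $C=\binom{N_{\rm B}}{w}$, the code rate is ${R_{ICC}}=\frac{1}{N_{\rm B}}\log_2\binom{N_{\rm B}}{w}$. Hence, for a fixed length $N_{\rm B}$, maximizing the rate is equivalent to maximizing $\binom{N_{\rm B}}{w}$ over the admissible weights, and the whole task becomes locating the optimal $w$ and substituting back.

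Next I would translate the feasibility constraint $P_{\rm s}=1$ into an explicit interval for $w$. By the $s^{*}$-OS characterization (the preceding Proposition) together with the worst-case equally-spaced-zero geometry of Fig.~\ref{PSOM2}(d), the \textbf{OS} condition holds iff the admissibility inequality $(s^{*}+1)(N_{\rm B}-w)+s^{*}\le N_{\rm B}$ of Eq.~(\ref{E.47}) is satisfied; solving for $w$ gives the feasible range $\frac{s^{*}}{s^{*}+1}(N_{\rm B}+1)\le w\le N_{\rm B}\le \overline N$. This converts the abstract stability constraint into a clean packing bound on the number of non-zero digits.

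I would then solve the boundary problem by a monotonicity argument. Using the ratio $\binom{N_{\rm B}}{w}\big/\binom{N_{\rm B}}{w-1}=(N_{\rm B}-w+1)/w$, the binomial coefficient is strictly decreasing once $w>(N_{\rm B}+1)/2$. The crucial check is that the entire feasible interval lies on this decreasing branch: because $s^{*}=\frac{L-1}{L}N+1\ge 1$ we have $\frac{s^{*}}{s^{*}+1}\ge\frac12$, so the feasible lower endpoint satisfies $\frac{s^{*}}{s^{*}+1}(N_{\rm B}+1)\ge\frac12(N_{\rm B}+1)$. Consequently $\binom{N_{\rm B}}{w}$ is maximized at the smallest admissible weight, i.e. $w^{*}=\frac{s^{*}}{s^{*}+1}(N_{\rm B}+1)$, which is exactly the claimed optimal weight.

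Finally I would assemble the closed form by substitution. Computing the complementary term $N_{\rm B}-w^{*}=\frac{N_{\rm B}-s^{*}}{s^{*}+1}$ and inserting $w^{*}$ into $\frac{1}{N_{\rm B}}\log_2\binom{N_{\rm B}}{w}$ yields Eq.~(\ref{E.49}); the order then follows from $s=2w^{*}-N_{\rm B}=\frac{(s^{*}-1)N_{\rm B}+2s^{*}}{s^{*}+1}$ via Theorem 1. The main obstacle is the monotonicity/feasibility-placement step: one must certify that the admissible weight interval sits wholly to the right of the binomial peak so that the constrained optimum is the left boundary point rather than an interior critical point, and one must correctly read off the lower bound on $w$ from the equally-spaced-zero packing in Eq.~(\ref{E.47}). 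A minor subtlety is integrality, since the derivation presumes $N_{\rm B}$ is chosen so that $w^{*}$ and $s$ are integers, consistent with the paper's standing convention that the displayed ratios are integral.
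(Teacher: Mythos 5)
Your proposal is correct and takes essentially the same route as the paper's own proof: both identify the minimal admissible weight $w=\frac{s^{*}}{s^{*}+1}\left(N_{\rm B}+1\right)$ from the packing constraint $\left(s^{*}+1\right)\left(N_{\rm B}-w\right)+s^{*}\le N_{\rm B}$, observe that this weight lies at or beyond $\frac{N_{\rm B}+1}{2}$ so that $\binom{N_{\rm B}}{w}$, and hence the rate, is non-increasing over the feasible interval, and then recover the order via $w=\frac{N_{\rm B}+s}{2}$ from Theorem 1. Your explicit ratio test for the binomial monotonicity and the integrality remark simply make rigorous what the paper states tersely.
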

\begin{proof}
See proof in Appendix~\ref{Theorem8}.
\end{proof}
By exploiting the property that there exists $\left( {\begin{array}{*{20}{c}}
n\\
k
\end{array}} \right) \ge {{{n^k}} \mathord{\left/
 {\vphantom {{{n^k}} {{k^k}}}} \right.
 \kern-\nulldelimiterspace} {{k^k}}}$ for all values of $n$ and $k$,  the lower bound approximation  of optimally-stable  ICC-$\left( {N_{\rm B}, s} \right)$ code  can be given by:
\begin{equation}\label{E.49.6}
{R_s}\left( {{N_{\rm{B}}},w,{s^*}} \right) \ge \frac{{{{\log }_2}\eta }}{\eta }
\end{equation}
with $\eta {\rm{ = }}\frac{{\left( {L - 1} \right)N + 2L}}{{\left( {L - 1} \right)N + L}}\frac{{{N_{\rm{B}}}}}{{{N_{\rm{B}}} + 1}}$.
\section{Numerical Results}
\label{NR}

In this section,  numerical simulations are presented to evaluate  above-mentioned techniques  during the  CTA process.

\subsection{ Numerical Verification for Theorem 4}
We  confirm the feasibility of Theorem 4  in Fig.~\ref{Simulations1}~(a) where  the strength  of $\Delta f $ is plotted  against  $\theta_{i}, i=1,2$ by configuring $N_{\rm T}=100$ and  $K=5$. To be more specific, the examples of   $\Delta f $ are derived from  the estimated FS channels and the correlation model in Eq.~(\ref{E.5}). $\theta_{i}, i=1,2$  are assumed to lie within the  set $\left\{ { - \frac{\pi }{4}, - \frac{\pi }{7},0, - \frac{\pi }{7}, - \frac{\pi }{4}} \right\}$.   As we can  see,  the identification error happens when $\Delta f =0$, that is, $\theta_{1}=\theta_{2}$.  In this sense, we  verified the feasibility of Theorem 4 and  could envision  that the IEP is zero under the assumption of the mean AoA with CPD.

\subsection{Security-Instability Tradeoff Curve}
In this subsection, we focus on the trade-off related results. We  evaluate in Fig.~\ref{Simulations1}~(b) the  fluctuation of NMSE employing  ICC-$\left( {N_{\rm B},s} \right)$ code  under various SSAPs, and then show how the security-instability tradeoff is developed  in Fig.~\ref{Simulations1}~(c).

In Fig.~\ref{Simulations1}~(b), we take the cumulative distribution function (CDF) of NMSE as the evaluation matric. The simulation is averaged over 100 runs, each of which perform 1000 channel average. We further consider that $N_{\rm B} =128$ are provided and  at most $s=L= 6$  subcarriers  overlap  for channel estimation. As a benchmark for  measuring  the instability, we simulate the ideal case where  six overlapping subcarriers  are always right selected.   As we can see, the CDF of NMSE under this ideal case is always stable. However, in practice,  ICC-$\left( {128,6} \right)$ code causes an undesirable status  where  the phenomenon of less-overlapping and  unequally-spaced subcarriers  occurs inevitably. This induces significant fluctuations of NMSE. As a consequence,  we present in Fig.~\ref{Simulations1}~(c)  the possibility of tradeoff between the security and  instability  by using parameters  $P_{\rm{I}}*10^{2}$ and  $P_{\rm{s}}^{\frac{1}{4}}$.  We consider  $N_{\rm B} =s^{*}=\frac{{L - 1}}{L}N + 1$ where  the FFT points $N$ is set to be either 16 or 32 while $L$ and $K$ are respectively fixed to be 4 and 10. As we can see, there exists a tradeoff curve on which the security has to be sacrificed to maintain a certain level of stability.
\begin{figure*}[!t]
{% Requires \usepackage{graphicx}
  \includegraphics[width=2.55in]{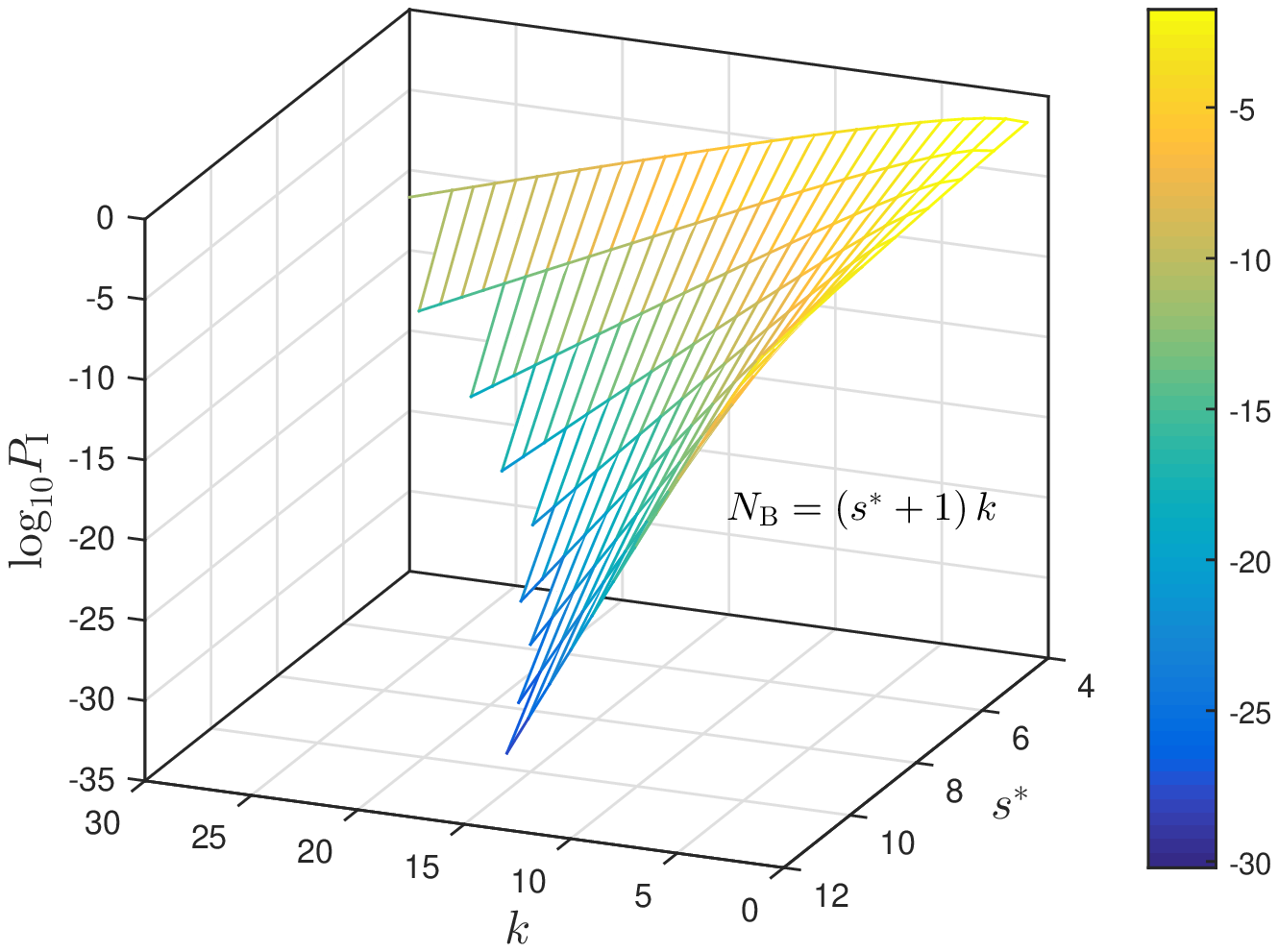} \hspace{-18pt} \includegraphics[width=2.55in]{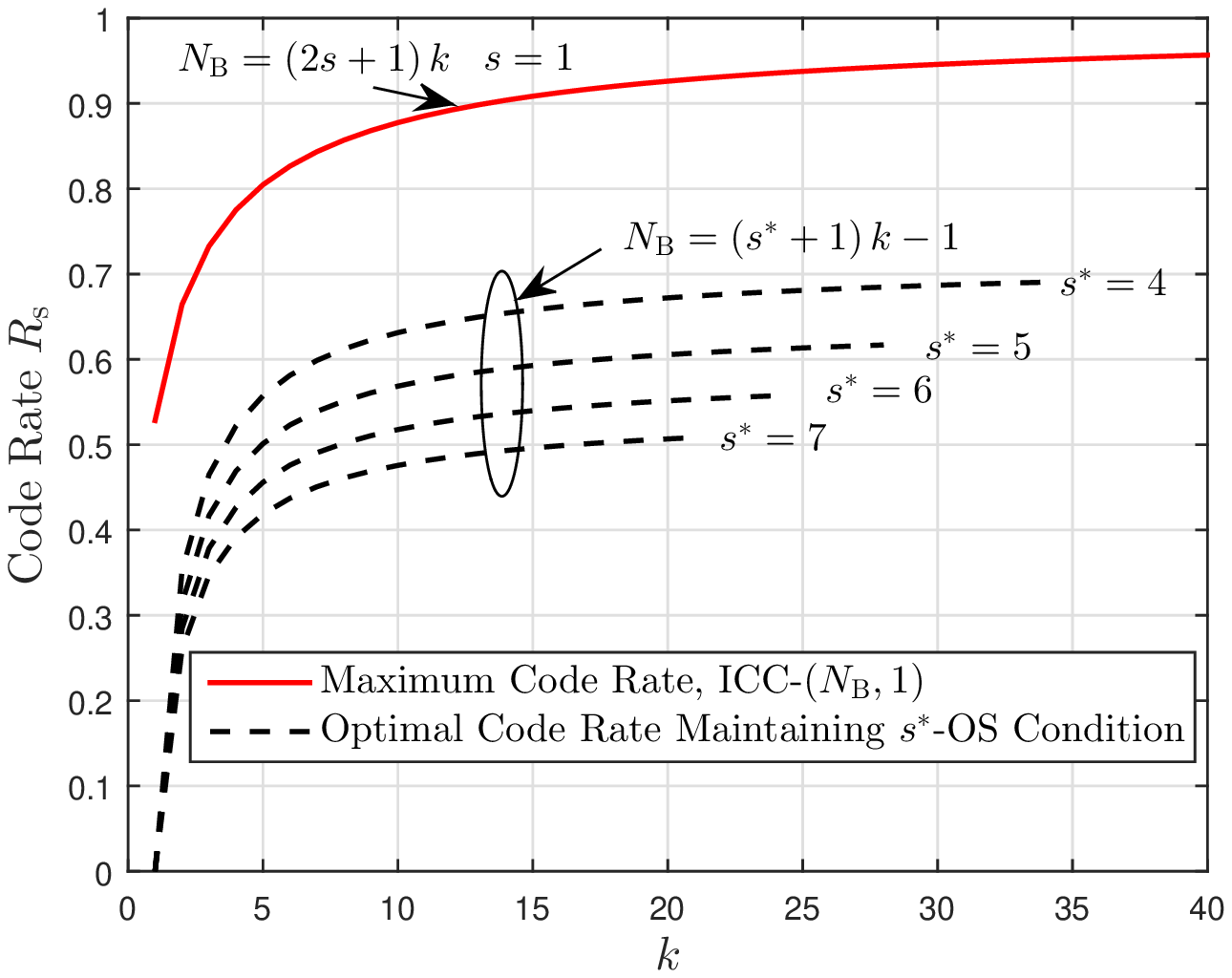}\hspace{-13pt}  \includegraphics[width=2.55in]{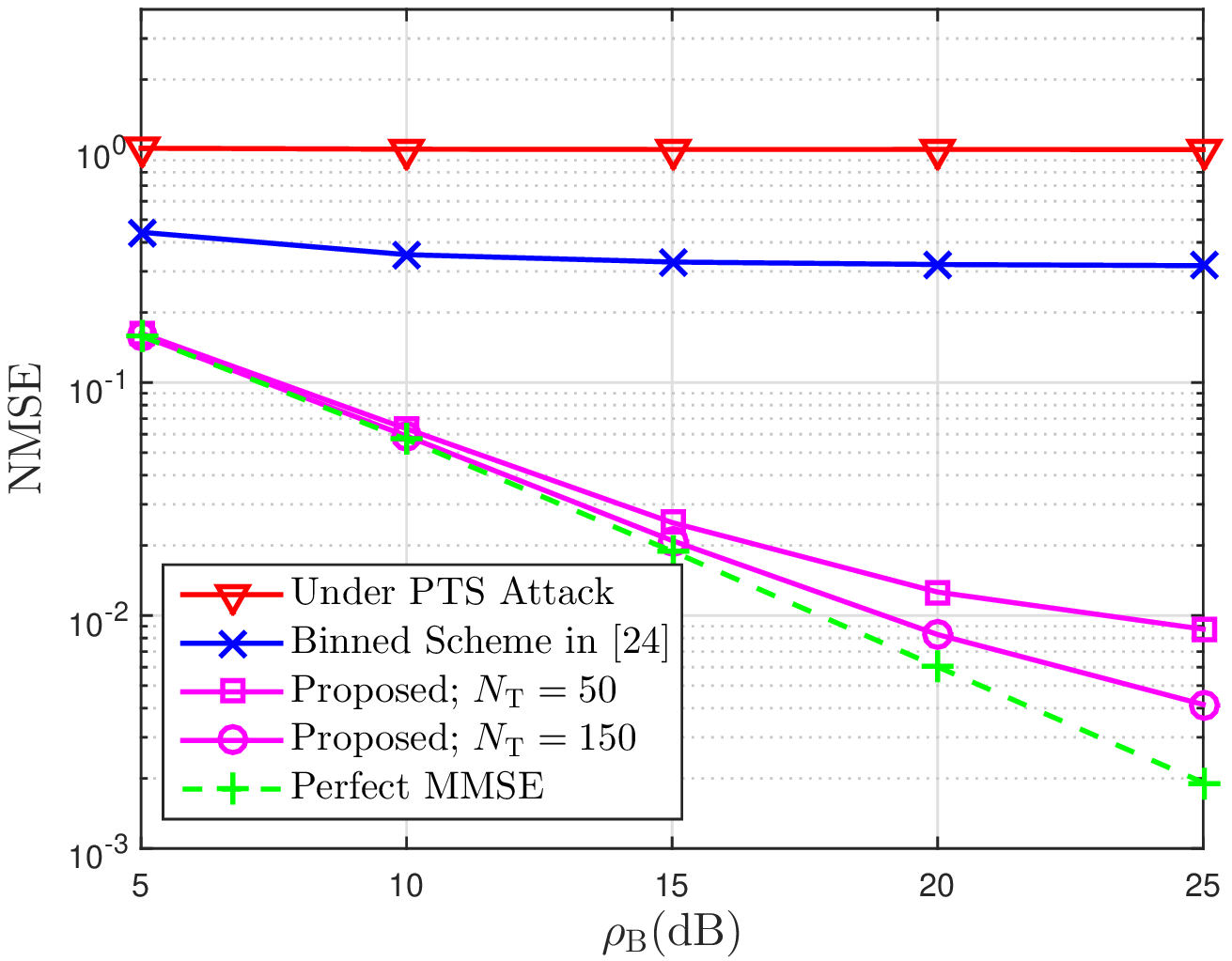}}
  \centerline{(a)\hspace{160pt}(b)\hspace{160pt}(c)}\vspace{-5pt}
  \caption{(a) Performance of IEP versus $N_{\rm B}$ and $s^{*}$ with mean AoA discretely and  uniformly  distributed in a length-$K$ interval; (b) Code rate versus $k$ under $s^{*}=\frac{{L - 1}}{L}N + 1$; (c) NMSE versus SNR of Bob under different number of antennas.}
  \label{Simulations2}
  \vspace{-15pt}
\end{figure*}
\subsection{ Security Under Optimally-Stable Tradeoff}
For this  part,  we should note that the IEP is zero  under the assumption of mean AoA obeying  CPD.  We consider the DPD model for the sake of practical analysis, and further simulate the IEP performance corresponding to the optimally-stable tradeoff in Fig.~\ref{Simulations2}~(a).
In this figure,  the 3D  plot of  IEP is sketched  versus $N_{\rm B}$ and $s^{*}$.   We consider $s^{*}$ to be from 4 to 12 and $K$ to be 20. $k$, related to $N_{\rm B}$, satisfies $N_{\rm B}+1= \left( {s^{*} + 1} \right)k$.    As we can see,  IEP decreases  with the increase of $N_{\rm B}$ and $s^{*}$.  On one hand, the initial value of $s^{*}$ determines how fast  the IEP can decrease and what is the  minimum  value  IEP can achieve. For example,  IEP decreases faster  with the increase of $s^{*}$, and $P_{\rm I}$  achieves as low as $10^{-3}$ at $k=15$ when $s^{*}$ is equal to 12. In this case, the number of occupied subcarriers is required to be  $N_{\rm B}=195$.  On the other hand, the initial value of $N_{\rm B}$ also determines  the tendency  for  the variable  $s^{*}$ to be reduced.  Specially, at a large $N_{\rm B}$,  a decreasing function $P_{\rm I}$ of $s^{*}$, at least within the interval $\left[ {4,12} \right]$,  can be created.
\vspace{-5pt}
\subsection{ Code Rate Under Optimally-Stable Tradeoff}
In Fig.~\ref{Simulations2}~(b), we  evaluate the code rate under the optimally-stable  tradeoff. Before that, we  consider the Eq.~(\ref{E.21}) for comparison and sketch the  curve of maximum code rate under $s=1$  over  $k$.  On this reference curve,  the code rate increases  and gradually approach 1 with the increase of $k$. As to the optimally-stable  tradeoff,  we simulate the curves of code rate shown in Eq.~(\ref{E.49}) over  $s^{*}$  from  4 to 7.  As we can see, the code rate in this case  is reduced compared with   that without tradeoff consideration. With the increase of  $s^{*}$, we have to  get less code rate. For example,  the code rate under $s^{*}=7, k=21$ and thus $N_{\rm B}=167$ is equal to 0.5083, which  means the rate loss of 0.4205 (almost 45 percent) is caused by the tradeoff at this point.
\vspace{-5pt}
\subsection{CIR Estimation   Under Optimally-Stable Tradeoff}
Finally, we stimulate the performance  of stable CIR estimation in Fig.~\ref{Simulations2}~(c) where  the  NMSE  is presented versus SNR of Bob under different number of antennas.  $L$ and $N_{\rm B}$ are respectively  configured to be 6 and  256. Here,  we  consider the   estimation using Eq.~(\ref{E.27.5})  and assume perfect identification  for attacks.  The performance under  this type of estimator  is not influenced by the specific value of $\rho_{\rm A}$ due to the subspace projection property. We configure $\rho_{\rm A}=\rho_{\rm B}$ and  do not consider the case where there is no attack since in this case LS estimator is a natural choice.  For the simplicity of  comparison, we only present the channel estimation  under PTS attack because  the estimation error floor  under PTN and PTJ attack can be easily understood to be very high. The binned scheme proposed in~\cite{Shahriar2} is simulated as an another  comparison scheme.   As we can see,  PTS attack causes a high-NMSE floor on  CIR estimation  for Bob. This phenomenon can also be seen in the binned scheme. However, the estimation in our proposed framework  breaks down this floor and its NMSE  gradually decreases with the increase of transmitting antennas.  Also, we consider perfect MMSE to be  a performance benchmark for which  perfect pilot tones, including Ava's pilot tones,  are assumed to be  known by Alice. We find that  the NMSE brought in our scheme  gradually approaches the level under perfect MMSE with the increase of antennas. That's because the asymptotically-optimal  estimator highly relies on the statistical covariance matrix which is determined by the number of antennas.
\section{Conclusions}
\label{Conclusions}
This paper investigated the issue of pilot-aware attack on the uplink CTA in large-scale MISO-OFDM systems. We proposed a secure ICC-CTA protocol in which  pilot tones, usually exposed in public, are now enabled  to be shared between legitimate transceiver pair, with high security under hybrid attack environment. Theoretically, we discovered an critical fact that this architecture could exhibit a perfect  security if the CPD  model of mean AoA was considered. In practical scenarios with the DPD model of mean AoA, this architecture  was required to make tradeoff between the security   and stability of CIR estimation. We  showed that given a suitable code rate, stable CIR estimation  could be always maintained  under a high security.
We conclude this paper by pointing out some interesting topics for future work. As one interesting direction,  more delicate optimization on the tradeoff could be further researched such that the code rate under optimally-stable tradeoff could be higher. The extension to solving the issue of  pilot contamination in massive MIMO systems could be another interesting direction since the  pilot phases guaranteed by our scheme  can be superimposed onto the traditional optimized  pilots and  thus control even avoid pilot contamination in multi-cell scenarios with only three OFDM symbol time.
\section{Appendix}
\subsection{Proof of Theorem 1}
\label{Theorem1}
Since codewords in this constant-weight code are constrained to be  with same and  fixed length,  the number of overlapping digits achieves its minimum only when the zero digits of each codeword  are  fully occupied. In this case, the remanent digits, i.e., the overlapping digits,  account for  ${2w-N_{\rm B}}$ which should be  equal to $s$ and less than $w$.  Therefore, we can prove the theorem.
\subsection{Proof of Theorem 2}
\label{Theorem2}
Considering  the hybrid attack,  we know that there exists the  possibility  of  $2^{N_{\rm B}}$   codewords to appear.  Two interpreted codewords derived under  ${\cal A}_{1}$ and  ${\cal A}_{0}$,  if satisfying  $N_1^{\rm d}+N_{1,1}^{\rm s}=N_1^{\rm d}+N_{1,0}^{\rm s}$, will confuse Alice. In this case,  each assumption is  decided with the probability of $0.5$.   The possible number of codewords that satisfy  this condition is equal to   ${ {\frac{{{N_{\rm B}}!}}{{\left( {\frac{{{N_{\rm B}} + s}}{2}} \right)!\left( {\frac{{{N_{\rm B}} - s}}{2}} \right)!}}}}$.  One exception is when  the codeword of Ava is identical to that of Bob. In this case, the codeword can be uniquely identified.  Finally,  there exists the  possibility of   ${ {\frac{{{N_{\rm B}}!}}{{\left( {\frac{{{N_{\rm B}} + s}}{2}} \right)!\left( {\frac{{{N_{\rm B}} - s}}{2}} \right)!}}}-1}$ codewords that could cause identification errors. Then the ultimate IEP can be proved.
\subsection{Proof of Proposition 3}
\label{Proposition3}
Taking Bob for example, we can derive the estimation error as $\varepsilon _{\rm{B}}^2 = T\left( {1 - T{\bf{x}}_{{\rm{L}},{\rm{1}}}^{\rm{H}}{\bf{C}}_{{{\bf{Y}}_{\rm{L}}}}^{ - 1}{{\bf{x}}_{{\rm{L}},{\rm{1}}}}} \right)$. Now let us focus on the term ${{\bf{C}}_{{{\bf{Y}}_{\rm L}}}}$. We can express  ${{\bf{h}}_{{\rm{B}},{\rm{L}}}}$ as ${{\bf{g}}_{{\rm{B}},{\rm{L}}}}\left( {{\bf{R}}_{\rm{1}}^{{1 \mathord{\left/
 {\vphantom {1 2}} \right.
 \kern-\nulldelimiterspace} 2}} \otimes  {\bf{F}}_{{\rm{L}},{{s}}}^{\rm{T}}} \right)$ where ${{\bf{g}}_{{\rm{B}},{\rm{L}}}} $ is the integrated $1 \times N_{\rm T}L$  CIR vector  of i.i.d.  ${\cal C}{\cal N}\left( {0,{1}} \right)$ random variables. Based on the \emph{Lemma B.26} in~\cite{Hoydis},
 ${{\bf{C}}_{{{\bf{Y}}_{\rm L}}}}$  is  then transformed into ${{\bf{C}}_{{{\bf{Y}}_{\rm L}}}}  \xlongrightarrow[{N_{\rm T}} \to \infty]{ \rm{a.s.}} \frac{1}{{{N_{\rm{T}}}s}}{{\bf{X}}_{{\rm{L}}}}{{\bf{R}}_{\rm C}}{\bf{X}}_{{\rm{L}}}^{\rm{H}} + {\sigma ^2}{{\bf{I}}_2}$.
Here,  the $2\times 2$  matrix ${{\bf{R}}_{\rm C}}$  satisfies ${{\bf{R}}_{\rm{C}}} = {\rm{diag}}\left\{ {{{\left[ {\begin{array}{*{20}{c}}
{{\rm{Tr}}\left( {{{\bf{R}}_1}} \right){\rm{Tr}}\left( {{{\bf{R}}_{\rm{F}}}} \right)}&{{\rm{Tr}}\left( {{{\bf{R}}_2}} \right){\rm{Tr}}\left( {{{\bf{R}}_{\rm{F}}}} \right)}
\end{array}} \right]}^{\rm{T}}}} \right\}$.
Therefore, we can derive  $\varepsilon _{\rm{B}}^2 =T\left\{ {1 - {\bf{x}}_{{\rm{L}},{\rm{1}}}^{\rm{H}}{{\left( {{{\bf{X}}_{\rm{L}}}{\bf{X}}_{\rm{L}}^{\rm{H}}} \right)}^{ - 1}}{{\bf{x}}_{{\rm{L}},{\rm{1}}}}} \right\}$  at high SNR region.
In the same way, we can derive  $\varepsilon _{\rm{A}}^2 =T\left\{ {1 - {\bf{x}}_{{\rm{L}},{\rm{2}}}^{\rm{H}}{{\left( {{{\bf{X}}_{\rm{L}}}{\bf{X}}_{\rm{L}}^{\rm{H}}} \right)}^{ - 1}}{{\bf{x}}_{{\rm{L}},{\rm{2}}}}} \right\}$.
After calculating  the matrix inverse and performing matrix multiplication,  we can finally verify $ \varepsilon _{\rm{B}}^2 = \varepsilon _{\rm{A}}^2 $. This completes the proof.
\subsection{Proof of Theorem 3}
\label{Theorem3}
Thanks to ${\widehat {\bf{h}}_{\rm{B,L}}} = {{\bf{h}}_{\rm{B,L}}} - \varepsilon _{{\rm{B}}}{\bf{h}}$, the measure $ f\left( {{{\widehat {\bf{h}}}_{{\rm{B}},{\rm{L}}}}} \right)$ can be expressed as the equation satisfying  ${f\left( {{{\widehat {\bf{h}}}_{{\rm{B}},{\rm{L}}}}} \right) = \left( {{{\bf{h}}_{{\rm{B}},{\rm{L}}}} - {\varepsilon _{\rm{B}}}{\bf{h}}} \right)\left( {{{\overline {\bf{R}} }_1} \otimes {{\overline {\bf{R}} }_{\rm{F}}}} \right){{\left( {{{\bf{h}}_{{\rm{B}},{\rm{L}}}} - {\varepsilon _{\rm{B}}}{\bf{h}}} \right)}^{\rm{H}}}}$.
This equation can be expanded into$f\left( {{{\widehat {\bf{h}}}_{{\rm{B}},{\rm{L}}}}} \right) = {f_1} - 2{f_2} + {f_3}$
  with ${f_1} = {{\bf{h}}_{{\rm{B}},{\rm{L}}}}\left( {{{\overline {\bf{R}} }_1} \otimes {{\overline {\bf{R}} }_{\rm{F}}}} \right){\bf{h}}_{{\rm{B}},{\rm{L}}}^{\rm{H}}$, ${f_2} = {\varepsilon _{\rm{B}}}{{\bf{h}}_{{\rm{B}},{\rm{L}}}}\left( {{{\overline {\bf{R}} }_1} \otimes {{\overline {\bf{R}} }_{\rm{F}}}} \right){\bf{h}}$ and  ${f_3} = \varepsilon _{\rm{B}}^2{\bf{h}}\left( {{{\overline {\bf{R}} }_1} \otimes {{\overline {\bf{R}} }_{\rm{F}}}} \right){\bf{h}}$.
By using  the \emph{Lemma B.26} in~\cite{Hoydis} for each term, we can have $\frac{{f\left( {{{\widehat {\bf{h}}}_{{\rm{B}},{\rm{L}}}}} \right)}}{{{N_{\rm{T}}}s}}  \xlongrightarrow[{N_{\rm T}} \to \infty]{ \rm{a.s.}}  \frac{{{\rho _1}L{\rm{ + }}\varepsilon _{\rm{B}}^2{\rm{Tr}}\left( {{{\overline {\bf{R}} }_1} \otimes {{\overline {\bf{R}} }_{\rm{F}}}} \right)}}{{{N_{\rm{T}}}s}}$.
In the same way, we can  obtain the relationship $\frac{{f\left( {{{\widehat {\bf{h}}}_{{\rm{A}},{\rm{L}}}}} \right)}}{{{N_{\rm{T}}}s}} \xlongrightarrow[{N_{\rm T}} \to \infty]{ \rm{a.s.}}  \frac{{L{\rm{Tr}}\left( {{{\bf{R}}_2}{{\overline {\bf{R}} }_1}} \right){\rm{ + }}\varepsilon _{\rm{A}}^2{\rm{Tr}}\left( {{{\overline {\bf{R}} }_1} \otimes {{\overline {\bf{R}} }_{\rm{F}}}} \right)}}{{{N_{\rm{T}}}s}}$.
 As indicated in Proposition 3, there exists $\varepsilon _{\rm{B}}^2=\varepsilon _{\rm{A}}^2$. By comparing the two simplified results of $f\left( {{{\widehat {\bf{h}}}_{{\rm{B}},{\rm{L}}}}} \right)$ and $f\left( {{{\widehat {\bf{h}}}_{{\rm{A}},{\rm{L}}}}} \right)$, we  can complete the proof.
 \subsection{Proof of Theorem 4}
 \label{Theorem4}
 First, we will prove $\sum\limits_{j = 1}^a {\frac{{{\lambda _{2,{i_j}}}}}{{{\lambda _{1,{i_j}}}}}}  = a$. As shown in~\cite{Adhikary}, the empirical CDF of eigenvalues of ${\bf {R}}_{i}$ can be asymptotically  approximated  by the samples from  $\left\{ {{S_i}\left( {\left[ {{n \mathord{\left/
 {\vphantom {n {{N_{\rm{T}}}}}} \right.
 \kern-\nulldelimiterspace} {{N_{\rm{T}}}}}} \right]} \right),n = 0, \ldots ,{N_{\rm{T}}} - 1} \right\}$. Therefore, the eigenvalues of different individuals, if overlapping at the same location, e.g.,  $n$, can be approximated with the same eigenvalue. In this case, the ratio of two eigenvalues at the same location is one and therefore, we can prove $\sum\limits_{j = 1}^a {\frac{{{\lambda _{2,{i_j}}}}}{{{\lambda _{1,{i_j}}}}}}  = a$ for $a$ overlapping positions.
 Then we prove that there must $a < {\rho _1}$. Examining $\left[ {{\theta _2} - {\Delta },{\theta _2} + {\Delta }} \right]$ and  $\left[ {{\theta _1} - {\Delta },{\theta _1} + {\Delta }} \right]$, we found that  if $ {\theta _1} \ne {\theta _2}$ is satisfied, there must exist $a < {\rho _1}$ since $\left[ {{\theta _2} - {\Delta },{\theta _2} + {\Delta }} \right]$ must have non-empty intersection  with  $\left[ {{\theta _1} - {\Delta },{\theta _1} + {\Delta}} \right]$.  In this case, the  number of elements in ${{\cal S}_{3}}$  is reduced to be smaller than that ${\rho _1}$.
 Now we turn to the case ${\theta _1} = {\theta _2}$ in which we easily have ${{\bf{R}}_1} = {{\bf{R}}_2}$ and therefore the theorem is proved.
  \subsection{Proof of Theorem 6}
 \label{Theorem8}
 Let  us determine  the value of minimum of $w$. From Eq.~(\ref{E.47}), we know that there exists $ w \ge {s^*}$ and $w \ge \frac{{{s^*}}}{{{s^*} + 1}}\left( {{N_{\rm{B}}} + 1} \right)$.  Since ${N_{\rm{B}}} \ge {s^*}$, we can acquire $w = \frac{{{s^*}}}{{{s^*} + 1}}\left( {{N_{\rm{B}}} + 1} \right)$ as the minimum of $w$. Note that it satisfies $w \ge \frac{{{N_{\rm B}} + 1}}{2}$ for $s^{*}> 1$. In this case, the value of $C$ will decrease with the increase of $w$. Thus the maximum code rate, i.e. maximum security, can be achieved  at this weight.  Moreover, according to the Theorem 1, we can know there exists $w = \frac{{{N_{\rm{B}}} + s}}{2}$ for an  ICC-$\left( {N_{\rm B}, s} \right)$ code and therefore we can derive the relationship between $s$ and $s^{*}$. The theorem is finally proved.

% that's all folks
\end{document}